
\documentclass{daj}

\wlog{Logging}
\makeatletter
\@ifpackageloaded{totpages}{\wlog{Loaded immediately}}{}
\@ifpackageloaded{lastpage}{\wlog{Loaded immediately}}{}




 \usepackage{relsize}

\usepackage{amssymb,amsmath,amsthm,amsfonts,enumitem}
\usepackage{thmtools}

\newcommand{\deffont}[1]{\textnormal{\textsf{#1}}} 


\newcommand{\cC}{\ensuremath{\mathcal{C}}} 
\newcommand{\C}{\ensuremath{\mathbb{C}}}  
\newcommand{\cM}{\ensuremath{\mathcal{M}}}
\newcommand{\cK}{\ensuremath{\mathcal{K}}}
\newcommand{\cN}{\ensuremath{\mathcal{N}}}
\newcommand{\cP}{\ensuremath{\mathcal{P}}}

\newcommand{\cD}{\ensuremath{\mathcal{D}}}
\newcommand{\cI}{\ensuremath{\mathcal{I}}}

\newcommand{\cT}{\ensuremath{\mathcal{T}}}

\newcommand{\cB}{\ensuremath{\mathcal{B}}}

\newcommand{\R}{{\mathbb R}}

\newcommand{\F}{{\mathbb F}}

\newcommand{\N}{\ensuremath{\mathbb{N}}}
\mathchardef\mhyphen="2D

\newcommand{\PR}[1]{\mathrm{Pr}\left[ #1\right]}
\newcommand{\PROver}[2]{\Pr_{#1}\left[ #2\right]}
\newcommand{\E}[1]{\mathbb{E}\left[ #1\right]}

\newcommand{\Eover}[2]{\mathbb{E}_{#1}\left[ #2 \right]}


\newcommand{\inabs}[1]{\left|#1\right|}
\newcommand{\ip}[2]{\ensuremath{\left\langle #1,#2\right\rangle}}

\newcommand{\inset}[1]{\left\{#1\right\}}
\newcommand{\inabset}[1]{\inabs{\inset{#1}}}

\newcommand{\inparen}[1]{\left(#1\right)}



\newcommand{\supp}{\mathrm{supp}}

\newcommand{\poly}{\mathrm{poly}}

\newcommand{\rank}{\ensuremath{\operatorname{rank}}}

\newcommand{\tr}{\mathrm{tr}}

\newcommand{\uniform}{\ensuremath{\mathsf{U}}}
\newcommand{\RS}{\mathrm{RS}}

\newcommand{\weight}{\ensuremath{\operatorname{wt}}}


\newcommand{\DKL}[3]{{D_{\mathrm{KL}}}_{#3}\left(#1\parallel #2\right)}
\newcommand{\CRS}[3][{}]{\RS_{#1}\inparen{#2;#3}}

\newcommand{\ind}[1]{\ensuremath{\mathbf{1}_{#1}}}

\newcommand{\TRLC}{\ensuremath{{\mathrm{RLC}}}}

\newcommand{\PLD}[3]{\ensuremath{{\mathrm{LD}}_{#1,#2,#3}}}

\newcommand{\CRLC}{C_{\mathrm{RLC}}}

\newcommand{\LD}[2]{\ensuremath{\inparen{#1,#2}\text{-list-decodable}}}
\newcommand{\LR}[3]{\ensuremath{\inparen{#1,#2,#3}\text{-list-recoverable}}}

\newcommand{\SC}[2]{\ensuremath{\inparen{#1,#2}\text{-span-clustered}}}
\newcommand{\RSC}[3]{\ensuremath{\inparen{#1,#2,#3}\text{-recovery-span-clustered}}}
\newcommand{\RC}[2]{\ensuremath{\inparen{#1,#2}\text{-recovery-clustered}}}

\newcommand\numberthis{\addtocounter{equation}{1}\tag{\theequation}}

\newcommand{\wh}{\widehat}

\newcommand{\mybigcirc}{\mathlarger{\mathlarger{\mathlarger{{\odot}}}}}

\newcommand{\eps}{\varepsilon}
\renewcommand{\epsilon}{\varepsilon}





\renewcommand{\exp}[1]{\mathrm{exp}\inparen{#1}}

\newcommand{\expOver}[2]{\mathrm{exp}_{#1}\inparen{#2}}

\theoremstyle{plain}
\declaretheorem[name=Theorem,numberwithin=section]{theorem}
\declaretheorem[name=Lemma,sibling=theorem]{lemma}
\newtheorem*{lemma*}{Lemma} 
\newtheorem*{theorem*}{Theorem} 
\newtheorem{definition}[theorem]{Definition}

\newtheorem{observation}[theorem]{Observation}

\newtheorem{corollary}[theorem]{Corollary} 

\newtheorem{remark}[theorem]{Remark}
\newtheorem{claim}[theorem]{Claim}
\newtheorem{proposition}[theorem]{Proposition}

\newtheorem{fact}[theorem]{Fact}

\newtheorem{example}[theorem]{Example}

\declaretheorem[name=Theorem]{ourtheorem}
\newtheorem{ourcorollary}[ourtheorem]{Corollary}

\newenvironment{customthm}[1]
{\innercustomthm}
{\endinnercustomthm}


\newtheoremstyle{named}{}{}{\itshape}{}{\bfseries}{.}{.5em}{\thmnote{#3}}
\theoremstyle{named}

\newcommand{\diagonals}{\Gamma}

\newcommand{\distr}[1]{{\mathsf{Emp}_{#1}}}

\usepackage{mathtools}
\DeclarePairedDelimiter\ceil{\lceil}{\rceil}


\dajAUTHORdetails{%
  title = {Punctured Low-Bias Codes Behave Like Random Linear Codes}, 
  author = {Venkatesan Guruswami and Jonathan Mosheiff},
  plaintextauthor = {Venkatesan Guruswami and Jonathan Mosheiff},
    %
    %
    %
    %
    %
   %
}   

\dajEDITORdetails{%
   year={2024},
   number={4},
   received={15 January 2023},   
   published={7 June 2024},  
   doi={10.19086/da.117574},       
}   

\begin{document}

\begin{frontmatter}[classification=text]


\author[vg]{Venkatesan Guruswami\thanks{Supported in part by NSF grants
	CCF-1814603 and CCF-1908125, and a Simons Investigator Award. Any opinions, findings, and conclusions or recommendations expressed in this material are those of the author(s) and do not necessarily reflect the views of the National Science Foundation.}}
\author[jm]{{Jonathan Mosheiff}\thanks{Supported in part by an Alon Fellowship, NSF grants
CCF-1814603 and CCF-1908125, and a Simons Investigator Award. Any opinions, findings, and conclusions or recommendations expressed in this material are those of the author(s) and do not necessarily reflect the views of the National Science Foundation.}} 

\begin{abstract}
		Random linear codes are a workhorse in coding theory, and are used to show the existence of codes with the best known or even near-optimal trade-offs in many noise models. However, they have little structure besides linearity, and are not amenable to tractable error-correction algorithms.

\smallskip
In this work, we prove a general derandomization result applicable to random linear codes. 
Namely, in settings where the coding-theoretic property of interest is ``local" (in the sense of forbidding certain bad configurations involving few vectors---code distance and list-decodability being notable examples), one can replace random linear codes (RLCs) with a significantly derandomized variant with essentially no loss in parameters. 
Specifically, instead of randomly sampling coordinates of the (long) Hadamard code (which is an equivalent way to describe RLCs), one can randomly sample coordinates of any code with low bias. Over large alphabets, the low bias requirement can be weakened to just large distance. Furthermore, large distance suffices even with a small alphabet in order to match the current best known bounds for RLC list-decodability. 

\smallskip		In particular, by virtue of our result, all current (and future)
achievability bounds for list-decodability of random linear codes extend \emph{automatically} to random puncturings of any low-bias (or large alphabet) ``mother" code. 
We also show that our punctured codes  emulate the behavior of RLCs on stochastic channels, thus giving a derandomization of RLCs in the context of achieving Shannon capacity as well.
Thus, we have a randomness-efficient way to sample codes achieving capacity in both worst-case and stochastic settings that can further inherit algebraic or other algorithmically useful structural properties of the mother code.
\end{abstract}
\end{frontmatter}

\section{Introduction}\label{sec:intro}

Random linear codes (RLCs) are ubiquitous in coding theory, serving as a fundamental building block in the construction of codes since the works of Shannon. RLCs are extensively studied, and known to enjoy excellent combinatorial properties. In particular, they achieve Shannon capacity, the Gilbert-Varshamov rate vs. distance trade-off, and are list-decodable up to capacity.

An RLC of length $n$ and rate\footnote{More accurately, $R$ is the design rate of the code. The actual rate, $\frac{\rank A}{n}$, equals $R$ if and only if $A$ is of full rank---an event which holds with very high probability. See Lemma \ref{lem:ActualVsDesignRate} for more details.} $0 < R < 1$ over alphabet $\F_q$ is the row span of a matrix $A$, sampled uniformly from $\F_q^{k\times n}$, where $k=Rn$. Here, $\F_q$ denotes the finite field of order $q$. Equivalently, one can think of $A$ as a matrix with $n$ random independent columns, each sampled uniformly from $\F_q^k$.

In this work we study derandomizations of RLCs. Specifically, we consider a code $\cC$ generated by a  matrix whose columns are independently sampled from some distribution $\mu$, where the support of $\mu$ is a much smaller set than $\F_q^k$ (possibly exponentially so). We are able to prove that, under fairly modest and general assumptions on $\mu$, this random code $\cC$ is similar to an RLC with respect to \emph{local properties}, a notion which originates in \cite{MRRSW} and will soon be explained. A special case of this result is that $\cC$ is very likely to achieve \emph{list-decoding capacity}, since RLCs are known to do so (in fact, the convergence of RLCs to list-decoding capacity has been extensively studied). Independently of this result, we also show that, similarly to an RLC, $\cC$ is likely to achieve capacity with regard to every memoryless additive-noise channel.

To describe our result more formally, we turn to the notion of \deffont{punctured} codes. Puncturing is a basic operation by which new codes are constructed from existing ones. A \deffont{puncturing} of a code $\cD\subseteq \F_q^m$ is a code $\cC\subseteq \F_q^n$ (where we usually think of $m$ as being much larger than $n$) whose coordinates are taken from those of $\cD$. More precisely, $\cC$ is a puncturing of $\cD$ if $\cC =\{\inparen{x_{i_1},\dots, x_{i_n}}\mid x=\inparen{x_1\dots x_m} \in \cD\}$, for some integers $i_1,\dots, i_n\in [m]$. We sometimes refer to $\cD$ as the \emph{mother code}. When $i_1,\dots, i_n$ are sampled uniformly and independently from $[m]$, we say that $\cC$ is a \deffont{random $n$-puncturing} of $\cD$. Puncturing generally increases the rate of a code, and we often take $\cD$ to be a code of rate approaching $0$, while the rate of $\cC$ is a constant in $(0,1)$.

An equivalent way to describe an RLC of length $n$ and rate $R$ is as a random puncturing of the \emph{Hadamard code}\footnote{The Hadamard code of length $q^k$ over alphabet $\F_q$ is $\inset{\inparen{\ip xy}_{y\in \F_q^k}\mid x\in \F_q^k}$.} of length $q^k$ (where $k=Rn$). The Hadamard code has very poor rate. Consequently, to obtain a punctured code of length $n$ and constant rate from a Hadamard mother code, the mother code must be taken to have length exponential in $n$. Our motivating question is whether, in this construction, one can replace the Hadamard code by a shorter mother code and still obtain a punctured code with the excellent combinatorial properties of an RLC. In Theorem \ref{thm:IntroLowBias}, below, we give this question a strong positive answer.

Of the many special properties of a Hadamard code, the property crucial for that code's excellent performance as a mother code turns out be its \deffont{optimal bias}. Focusing for simplicity on the case $q=2$, the \deffont{bias} of a binary code $\cD$ of length $m$ is defined as $$\max\inset{\frac{\inabs{\textsf{number of ones in }x - \textsf{number of zeros in }x }}m\mid x\in \cD\setminus\inset{0}}\enspace.$$ The bias of a Hadamard code is the smallest possible, namely, zero. The following informal result  can thus be seen as an extension of the statement ``\textit{A random puncturing of a Hadamard code is an RLC}''.

\begin{customthm}{A}[Main result about puncturing of low-bias codes {(informal version of Theorem \ref{thm:MainLowBiasProperties})}]\label{thm:IntroLowBias}
	Let $\cD$ be a linear\footnote{The linearity of $\cD$ is crucial here. Indeed, there exists a (non-linear) code $\cD$ of small bias such that a random puncturing of $\cD$ is unlikely to be similar, in the sense of Theorem \ref{thm:IntroLowBias}, to either an RLC or to a plain random code. For example, one can take any low-bias code $\cD\subseteq \F_2^m$ that contains two codewords $x$ and $y$, each of which has weight $\frac m2$, such that $x+y$ is the all-ones vector.} code with small bias and let $\cC$ be a random puncturing of $\cD$. Then, $\cC$ is likely to have any \deffont{monotone-decreasing} \deffont{local property} that is typically satisfied by an RLC of similar rate.
\end{customthm}
Theorem \ref{thm:IntroLargeDistance}, below, gives broad conditions under which the hypothesis of Theorem \ref{thm:IntroLowBias} can be further relaxed, from requiring low bias to just large distance of the mother code $\cD$.

Informally, a code property $\cP$ is \deffont{monotone-decreasing} and \deffont{local} if whenever a code $\cC$ does not satisfy $\cP$, there exists a small \emph{``bad set''} of codewords in $\cC$ that bears witness to this fact. By \deffont{monotone-decreasing} we mean that adding codewords to the code can only make the property harder to satisfy. A code that has the same monotone-decreasing local properties as an RLC is said to be \deffont{locally similar} to an RLC. Local properties of codes were originally introduced in \cite{MRRSW}, with the motivation of studying the \deffont{list-decodability} of Gallager's LDPC codes. We turn to explain this connection.

\medskip\noindent\textbf{List-decodability.}
In the model of \deffont{list-decoding}, the goal is to decode beyond the unique-decoding radius. A code is said to be (combinatorially)
\deffont{list-decodable up to radius $\rho$} ($0<\rho < 1-\frac 1q$) if every  Hamming ball of radius $\rho n$ in $\F_q^n$ has intersection of size at most $L$ (where $L$ is small, say constant in $n$) with the code. Being list-decodable is a monotone-decreasing local property. Indeed, to show that a code is \emph{not} list-decodable, it suffices to find a set of $L+1$ codewords that all reside within the same radius $\rho$ Hamming ball. A list-decodable code, when accompanied by a decoding algorithm, will allow the correction of a $\rho$ fraction of errors up to some bounded ambiguity in the worst case. We refer the reader to \cite{Gur-NOW-survey} for a detailed discussion of the motivation, usefulness, and potential of the list-decoding model. 

The list-decodability of RLCs has been extensively studied in previous works \cite{ZyablovP81,GHK,CGV13,Wootters13,RW14RLC,RW18RLC,LiWootters,GLMRSW}. The paper \cite{ZyablovP81} already establishes that RLCs are \deffont{list-decodable up to capacity}. Namely, for any fixed $R$ and $\rho$ satisfying $R < 1-h_q(\rho)$, an \footnote{The $q$-ary entropy function $h_q$ is formally defined in Section \ref{sec:entropy}} RLC of rate $R$ is almost surely list-decodable up to radius $\rho$ with constant list-size $L = L(R,\rho,q)$. The focus of the later works is pinpointing the exact dependence of the list-size $L$ on the other parameters.

Since list-decodability is a monotone-decreasing local property, the aforementioned results about list-decodability of an RLC also apply to any code that is  locally similar to an RLC. Therefore, Theorem \ref{thm:IntroLowBias} yields a powerful reduction, allowing us to apply these results about RLCs to the punctured code $\cC$. Moreover, any positive RLC list-decoding bound discovered in the future would also immediately apply to $\cC$. The latter may be relevant since there are still some gaps in our knowledge of RLC list-decodability, especially in the low-rate non-binary regime. This paradigm originates in \cite{MRRSW}, where it was used to show that Gallager codes tend to have the same local properties as an RLC.

In addition to list-decodability, Theorem \ref{thm:IntroLowBias} also applies to other local properties, such as \deffont{list-recoverability}, and its special case, the \deffont{perfect hashing property}. Hence, Theorem \ref{thm:IntroLowBias} immediately yields a positive result (see Section \ref{sec:ListRecovery}) about the list-recoverability of $\cC$, via reduction to established results about the list-recoverability of an RLC. 

\medskip\noindent\textbf{Theorem \ref{thm:IntroLowBias} and RLC derandomization.}
Perhaps more importantly than its specialization to any concrete local property, Theorem \ref{thm:IntroLowBias} is a statement about the robustness of the mechanism by which an RLC is generated: The theorem says that it is possible to choose a code $\cD$ that is radically different from a Hadamard code, randomly puncture it, and end up with a code that, in a local view, has the same desirable properties as an RLC. 

Theorem \ref{thm:IntroLowBias} allows us great flexibility in choosing the mother code $\cD$. While the only structural property of an RLC is its linearity, the punctured code $\cC$ of Theorem \ref{thm:IntroLowBias} can be made to have additional structure via certain choices of the mother code. For example, $\cD$ can be taken to be a \deffont{dual-BCH} code, namely, a code in which every codeword encodes a low-degree polynomial over $\F_{2^\ell}$ ($\ell \in \N$) by the trace of its evaluations over  $\F_{2^\ell}$. In a random puncturing $\cC$ of $\cD$, the codewords correspond to evaluations over a random subset of $\F_{2^\ell}$. 
It is well-known (see \cite{GR-DualBCH}) that dual-BCH codes have small bias, so Theorem \ref{thm:IntroLowBias} applies. Hence, this code $\cC$ enjoys both an algebraic structure and local similarity to an RLC. 

Enforcing a structure on $\cC$ has potential algorithmic advantages. For example, recall that the local similarity of $\cC$ to an RLC only guarantees with high probability the combinatorial property of list-decodability, but not the existence of an efficient list-decoding algorithm for $\cC$. Indeed, for the RLC ensemble itself, it is very likely that no efficient list-decoding algorithm exists for any constant radius $\rho$. Hopefully, by choosing a suitable structure for $\cC$, one may be able to obtain a code which is not only combinatorially list-decodable up to capacity, but also amenable to efficient list-decoding algorithms.

Another application of Theorem \ref{thm:IntroLowBias} is a direct derandomization of RLCs. Utilizing constructions such as \cite{ABNNR92,Ta-Shma17}, $\cD$ can be taken to be an explicit low-bias linear code of length $O(n)$ (where $n$ is the desired length of $\cC$). With such a short mother code, only $O(n)$ random bits are needed to construct the punctured code $\cC$, which is locally similar to an RLC. This is in contrast to the $O(n^2)$ random bits needed to construct an actual RLC of the same length. While methods to sample linear codes with $O(n)$ randomness were known in some settings, the approach and analysis was tailored to the specific setting (e.g, via random Toeplitz matrices for the Gilbert-Varshamov trade-off). In contrast, our approach applies uniformly for all local properties. For list-decoding, this gives the first linear randomness method to sample codes that achieve the trade-offs of RLCs. \footnote{A new work, published after this paper was accepted for publication, further derandomizes our construction \cite{PP2023}.}

The idea of relating the local properties of a more structured code $\cC$ to those of an RLC already figures prominently in the previously mentioned \cite{MRRSW}, where a \emph{Gallager LDPC Code} is cast in the role of $\cC$. While our methods in the present work are very different, our proof of Theorem \ref{thm:IntroLowBias} does use the framework of \cite{MRRSW}, as well as the \emph{RLC Threshold Theorem} \cite[Thm.\ 2.8]{MRRSW} proven there. 

\medskip\noindent\textbf{Replacing low bias by large distance.}
A linear code of low bias necessarily has large minimal distance. For example, in the binary case, the normalized Hamming weight of a codeword $x\in \cD\setminus \{0\}$ is at least $\frac{1-\mathrm{bias}(\cD)}2$. Theorem \ref{thm:IntroLargeDistance} extends Theorem \ref{thm:IntroLowBias} by considering the case in which the mother code $\cD$ has large minimal distance, but not necessarily low bias. For large enough alphabet size $q$, large minimal distance of $\cD$ is enough to guarantee the conclusion of Theorem \ref{thm:IntroLowBias}. For small $q$, while we do not have such a general result, we are able to use specific characteristics of the the property of list-decodability to prove that $\cC$ is, with high probability, list-decodable up to capacity. 
\begin{customthm}{B}[Main result about puncturing of large-distance codes (informal version of {Theorems \ref{thm:MainLargeDistanceProperties} and \ref{thm:MainReduceToGHK})}]\label{thm:IntroLargeDistance}
	Let $\cD$ be a linear code whose minimal distance is near $1-\frac 1q$, and let $\cC$ be a random puncturing of $\cD$ of rate $R$. Then:
	\begin{enumerate}
		\item Let $\cP$ be a monotone-decreasing local property that is likely satisfied by an $RLC$ of rate $R-2\log_q 2$. Then, $\cC$ is likely to satisfy $\cP$.
		\item Even if its alphabet size is small, $\cC$ is likely to be list-decodable and list-recoverable up to capacity (similarly to an RLC of rate $R$).
	\end{enumerate}
\end{customthm}

\medskip\noindent\textbf{Achieving Shannon capacity with punctured codes.}
Theorems \ref{thm:IntroLowBias} and \ref{thm:IntroLargeDistance} are most relevant when considering the performance of $\cC$ as an error correcting code in the \emph{worst-case error} model. To complete the picture, we also prove the following result, dealing with the \emph{random error} model. It is well-known that RLCs achieve capacity with regard to any memoryless additive noise channel. The following informal theorem generalizes this statement to our punctured code $\cC$.
\begin{customthm}{C}[Puncturing of large-distance codes in stochastic channels {(informal version of Theorem \ref{thm:MainStochastic})}]\label{thm:IntroStochastic}
	Let $\cD$ be a linear code whose minimal distance is near $1-\frac 1q$, and let $\cC$ be a random puncturing of $\cD$ with rate $R$. Let $\cN$ be a memoryless additive noise channel with capacity at least $R+\eps$. Then, it is possible to reliably communicate across $\cN$ using $\cC$.
\end{customthm}

\subsection{Previous work}
Randomly punctured codes have recently gotten a lot of attention, motivated by the study of \deffont{Reed-Solomon (RS) codes} \cite{RW14,ST20, GLSTW21, FKS21,GST21}. The \deffont{RS code of dimension $1\le k\le q$ over the set $S\subseteq \F_q$} is defined as $$\CRS[\F_q]{S}{k} = \inset{\inparen{f(\alpha_1),\dots,f(\alpha_n)}\mid f\in \F_q[x], \ \deg(f) < k}\enspace .$$The length of the code is $n = |S|$. A classical algorithm \cite{GS99} can efficiently list-decode an RS code up to the \deffont{Johnson radius} $1-\sqrt R-o(1)$. 

An important open question is whether efficient list-decoding of some RS codes is possible all the way up to the capacity radius $1-R-o(1)$. A necessary condition for such an algorithmic result is to have RS codes which are \emph{combinatorially} list-decodable up to capacity, and such codes are yet unknown. In fact, even the existence of RS codes that are combinatorially decodable beyond the Johnson bound was only recently proven \cite{ST20}. \footnote{Since this paper was accepted for publication, several new works \cite{BGM2023,GZ2023,AGL2023} proved that randomly punctured RS codes are in fact list-decodable up to capacity.}

Our freedom in constructing an RS code lies mainly in the choice of the evaluation set $S$. A natural choice is to take $S$ to be a uniformly random subset of $\F_q$ of the desired size $n$. When $S$ is sampled this way, the code $\CRS[\F_q]{S}{k}$ is essentially a random puncturing of the \deffont{full RS code} $\CRS[\F_q]{\F_q}{k}$. All of the aforementioned works \cite{RW14,ST20, GLSTW21, FKS21,GST21} take this viewpoint. In particular, in \cite{RW14,FKS21,GST21}, the main results about list-decodability of RS codes are all immediate special cases of more general results about randomly punctured codes. Our Theorems \ref{thm:IntroLowBias} and \ref{thm:IntroLargeDistance} are similar in spirit to the latter. We note that our work is the \emph{first} in this line to yield punctured codes that achieve list-decoding capacity, and we do so \emph{for every choice of rate.} The previous works showed trade-offs that were bounded away from list-decoding capacity for all rates. 

\subsection{On an error in a previous version of this paper regarding RS codes}
The full RS code $\CRS[\F_q]{\F_q}{k}$ for small $k$ has near-optimal distance, and thus it would seem that one could apply Theorem \ref{thm:IntroLargeDistance} to it, proving that an RS code over a random evaluation set is likely to be locally-similar to an RLC, and, in particular, list-decodable up to capacity. Unfortunately, the \emph{large distance} requirement of the theorem becomes stricter as the alphabet $q$ grows, and impossible to achieve whenever $q$ is larger than $n$ (see Theorem \ref{thm:MainLargeDistanceProperties}). An earlier version of the current paper erroneously claimed to overcome this difficulty by passing to trace codes over a smaller alphabet, and use it to deduce the list-decodability of certain RS codes. Later, prompted by a question from Zeyu Guo, we have found an error in the proof and retracted this claim.

\section{Main Results}

Before stating our main results, we formally define some of the relevant notions.

\begin{definition}[Random puncturing]
	Fix some prime power $q$. Let $m,n\in\N$. An ($m\to n$) \deffont{puncturing map} is a function $\varphi:\F_q^m\to \F_q^n$ of the form $\varphi(u=(u_1,\dotsc, u_m)) = (u_{i_1}, u_{i_2}, \cdots , u_{i_n})$ for some $i_1,\dotsc, i_n\in [m]$. If $i_1,\dotsc, i_n$ are sampled i.i.d. and uniformly from $[m]$, we say that $\varphi$ is a \deffont{random ($m\to n$) puncturing map}.
	
	A \deffont{random $n$-puncturing} of a code $\cD\subseteq \F_q^m$ is a random code $\cC = \varphi(\cD) = \{\varphi(u)\mid u\in \cD\}$, where $\varphi:\F_q^m\to \F_q^n$ is a random puncturing map. The \deffont{design rate} of $\cC$ is $\frac {\log_q \inabs{\cD}}n$.
\end{definition}

\begin{definition}
	Let $\cD\subseteq \F_q^m$, where $q$ is a power of some prime $p$, be a linear code and let $\eta > 0$. 		
	\begin{enumerate}
		\item If every $u\in \cD\setminus \{0\}$ has $\weight(u) \ge \frac{(q-1)(1-\eta)}q$, we say that 		$\cD$ has \deffont{$\eta$-optimal distance}. Here, $\weight(u) = \frac{\inabset{i\in [m]\mid u_i \ne 0}}m$ denotes the \deffont{normalized Hamming weight} of $u\in \F_q^m$.
		\item A vector $u\in \F_q^m$ is said to be \deffont{$\eta$-biased} if $\inabs{\sum_{i=1}^m \omega^{\tr(a\cdot u_i)}}\le m\eta$ for all $a\in \F_q^*$.	Here, $\omega = e^{\frac{2\pi i}p}$ and $\tr: \F_q\to \F_p$ is the field trace map (see Section \ref{sec:fourier}). The code $\cD$ is said to be \deffont{$\eta$-biased} if every $u\in \cD\setminus \{0\}$ is $\eta$-biased.			 
	\end{enumerate}
\end{definition}

As shown in Lemma \ref{lem:BiasStrongerThanDistance}, an $\eta$-biased code also has $\eta$-optimal distance, so the former is a stronger notion. For intuition, note that in the binary case $\eta$-bias implies $\frac{1-\eta}2 \le \weight(u) \le \frac{1+\eta}2$ for any $u\in \cD\setminus \{0\}$, whereas $\eta$-optimal distance only implies the lower bound on $\weight(u)$. 

It may be simpler for the reader to focus on the case where $q$ is a prime, i.e., $q=p$. In this case, $\tr$ is merely the identity map. 

If $\cC$ is a random $n$-puncturing of a code $\cD$, the rate of $\cC$ is clearly bounded from above by its design rate. The following lemma shows that when $\cD$ is of almost optimal distance, these two terms are very likely to coincide. In light of this lemma, we blur the distinction between design rate and actual rate.

\begin{lemma}[Actual rate equals design rate whp]\label{lem:ActualVsDesignRate}
	Let $\cD\subseteq \F_q^m$ be a linear code of $\eta$-optimal distance, and let $\cC$ be a length-$n$ random puncturing of $\cD$, of design rate $R \le 1-\log_q(1+\eta q)-\eps$. Then, with probability at least $1-q^{-n\eps}$, the rate of $\cC$ is equal to its design rate.
\end{lemma}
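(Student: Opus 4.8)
The plan is a first-moment (union-bound) argument. Write $|\cD| = q^{Rn}$, so that the design rate of $\cC$ is exactly $R$, and let $\varphi(u) = (u_{i_1},\dots,u_{i_n})$ be the random puncturing map with $i_1,\dots,i_n$ i.i.d.\ uniform on $[m]$. Since $\cD$ is linear and $\varphi$ is a linear map, the rate of $\cC = \varphi(\cD)$ equals its design rate precisely when $\varphi$ restricted to $\cD$ is \emph{injective}, which—again using linearity—happens exactly when $\varphi(u)\neq 0$ for every $u\in\cD\setminus\{0\}$. Thus it suffices to upper bound the probability of the ``bad event'' that some nonzero codeword of $\cD$ is punctured to the zero vector.

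Next I would bound, for a fixed $u\in\cD\setminus\{0\}$, the probability that $\varphi(u)=0$. This event occurs iff $u_{i_j}=0$ for all $j\in[n]$; since $i_1,\dots,i_n$ are independent and each is uniform on $[m]$, this probability equals $\bigl(1-\weight(u)\bigr)^n$, where $\weight(u)$ is the normalized Hamming weight. The $\eta$-optimal distance hypothesis gives $\weight(u)\ge \frac{(q-1)(1-\eta)}{q}$, hence
\[
1-\weight(u)\ \le\ \frac{1+(q-1)\eta}{q}\ \le\ \frac{1+\eta q}{q},
\]
so $\PR{\varphi(u)=0}\le \bigl(\tfrac{1+\eta q}{q}\bigr)^n$.

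Finally, a union bound over the at most $q^{Rn}$ nonzero codewords of $\cD$ yields
\[
\PR{\cC \text{ has rate } < R}\ \le\ q^{Rn}\left(\frac{1+\eta q}{q}\right)^{n}\ =\ q^{\,n\left(R + \log_q(1+\eta q) - 1\right)}\ \le\ q^{-n\eps},
\]
where the last step uses the assumption $R \le 1-\log_q(1+\eta q)-\eps$. Equivalently, with probability at least $1-q^{-n\eps}$ the puncturing map is injective on $\cD$ and the rate of $\cC$ equals its design rate.

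There is essentially no obstacle here: the argument is a routine union bound. The only points requiring a little care are (i) the reduction, via linearity of both $\cD$ and $\varphi$, from the statement ``actual rate $=$ design rate'' to ``no nonzero codeword is punctured to $0$'', and (ii) the elementary manipulation converting the distance bound $\weight(u)\ge\frac{(q-1)(1-\eta)}{q}$ into the per-coordinate survival probability $\frac{1+\eta q}{q}$, which is exactly what makes the threshold $1-\log_q(1+\eta q)-\eps$ appear.
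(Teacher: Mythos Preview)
Your argument is correct and essentially identical to the paper's own proof: both reduce (via linearity) to bounding the probability that some nonzero $u\in\cD$ is punctured to $0$, use the $\eta$-optimal distance to bound $(1-\weight(u))^n \le \bigl(\tfrac{1+\eta q}{q}\bigr)^n$, and finish with a union bound over the $q^{Rn}-1$ nonzero codewords.
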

\begin{proof}
	The rate of $\cC$ is smaller than $R$ if and only if there exists a non-zero word $u\in \cD$ such that only coordinates $i\in [m]$ for which $u_i = 0$ are sampled for inclusion in $\cC$. For a given $u$, this happens with probability
	$$(1-\weight(u))^n \le \inparen{\frac 1q + \frac{q-1}q\eta}^n \le \inparen{\frac 1q + \eta}^n = q^{-n(1-\log(1+q\eta))} \ . $$
	The claim follows by a union bound over the non-zero words of $\cD$, of which there are $q^{Rn}-1$, and the assumed upper bound on $R$.		
\end{proof}

\begin{definition}[clustered sets and list-decodability]\label{def:LD}
	Fix $\rho \in [0,1]$. A set of vectors $W\subseteq \F_q^n$ is called \deffont{$\rho$-clustered} if there exists some $z\in \F_q^n$ such that $\weight(u-z)\le \rho$ for each $u\in W$. A code $\cC\subseteq \F_q^n$ is said to be \deffont{$(\rho,L)$-list-decodable} if it does not contain any $\rho$-clustered set of codewords of size $L+1$.
\end{definition}

\subsection{A framework for studying properties of codes}\label{sec:PropertyFramework}
In order to formulate our results, we need to recall some of the framework for studying \emph{local and row-symmetric properties of linear\footnote{This framework makes sense for linear as well as non-linear codes. In this work we restrict ourselves to the linear case.} codes}, established in \cite{MRRSW,GMRSW}.\footnote{The notion of a \emph{local property} from \cite{MRRSW} was later refined and split into two parts in \cite{GMRSW}, where it appears as a \emph{row-symmetric} and \emph{local} property. We follow the latter convention.}

A \deffont{property} $\cP$ of length-$n$ linear codes over $\F_q$ is a collection of linear codes in $\F_q^n$. A linear code $\cC\subseteq \F_q^n$ such that $\cC\in \cP$ is said to \deffont{satisfy} $\cP$. If $\cP$ is upwards closed with regard to containment, it is said to be \deffont{monotone-increasing}.\footnote{While Section \ref{sec:intro} discussed monotone-decreasing properties, it will henceforth be more convenient to deal with monotone-increasing properties. Note that the negation of a monotone-decreasing property is monotone-increasing. Hence, the statement ``the code $\cC$ satisfies every monotone-decreasing local property typically satisfied by an RLC'' is equivalent to ``every monotone-increasing local property typically \emph{not} satisfied by an RLC is also \emph{not} satisfied by $\cC$.}

\begin{definition}[Local and row-symmetric properties]\label{def:LocalRowSymmetric}
	Let $\cP$ be a monotone-increasing property of linear codes in $\F_q^n$. We define the following notions.
	\begin{enumerate}
		\item Fix $b\in \N$. Suppose that there exists a family $\cB_\cP$ of sets of words, such that every $B\in \cB_\cP$ is a subset of $\F_q^n$ with $|B|\le b$, and such that
		$$\cC \text{ satisfies }\cP\quad\iff\quad \exists B\in \cB_\cP~~B\subseteq \cC\enspace.$$
		Then, we say that $\cP$ is a \deffont{$b$-local} property.
		\item Suppose that, whenever a code $\cC\subseteq \F_q^n$ satisfies $\cP$ and $\pi$ is a permutation on $\inset{1,\dotsc,n}$, the code $\inset{\pi x\mid x\in \cC}$ also satisfies $\cP$. We then say that $\cP$ is \deffont{row-symmetric}\footnote{The reason for this terminology will be made clear in Observation \ref{obs:LocalRowSymmetricScalarInvariant}.}. Here, $\pi x$ is the vector obtained by permuting the coordinates of $x$ according to $\pi$.
	\end{enumerate}
\end{definition}

The following is immediate from the definition of $(\rho,L)$-list-decodability. 

\begin{observation}\label{obs:LDGoodProperty}
	Let $q$ be a prime power and $n\in \N$. Fix $\rho \in (0,1)$, $L\in \N$, and let $\cP$ be the monotone-increasing property consisting of codes in $\F_q^n$ that are \textbf{not} $\LD \rho L$. Then, $\cP$ is $(L+1)$-local and row-symmetric.
\end{observation}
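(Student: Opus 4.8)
The plan is to unpack \cref{def:LD} together with \cref{def:LocalRowSymmetric} and verify the two assertions essentially directly; I would split the write-up into three short steps, preceded by a one-line check that $\cP$ is genuinely monotone-increasing (so that \cref{def:LocalRowSymmetric} is applicable in the first place).

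\textbf{Monotonicity.} If $\cC \subseteq \cC'$ are linear codes in $\F_q^n$ and $\cC$ is not $\LD\rho L$, then $\cC$ contains a $\rho$-clustered set of $L+1$ codewords, and this set still lies in $\cC'$; hence $\cC'$ is not $\LD\rho L$ either, so $\cP$ is monotone-increasing.

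\textbf{$(L+1)$-locality.} I would exhibit the witness family explicitly: let $\cB_\cP$ be the collection of all $\rho$-clustered subsets of $\F_q^n$ of size exactly $L+1$. Every such $B$ has $|B| = L+1$, so the cardinality requirement of \cref{def:LocalRowSymmetric} holds with $b = L+1$. It then remains to establish the equivalence that $\cC$ satisfies $\cP$ if and only if $B \subseteq \cC$ for some $B \in \cB_\cP$; this is immediate from \cref{def:LD}, since a linear code $\cC \subseteq \F_q^n$ fails to be $\LD\rho L$ precisely when it contains a $\rho$-clustered set of codewords of size $L+1$, i.e.\ precisely when some $B \in \cB_\cP$ satisfies $B \subseteq \cC$. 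The only point needing a word of care is that \cref{def:LD} refers to a $\rho$-clustered set of size exactly $L+1$ while \cref{def:LocalRowSymmetric} permits sets of size at most $L+1$; this is harmless, because any $\rho$-clustered set of size larger than $L+1$ contains a $\rho$-clustered subset of size exactly $L+1$ (keeping the same center $z$), and a code contains the former iff it contains such a subset.

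\textbf{Row-symmetry.} I would take a linear code $\cC$ satisfying $\cP$, witnessed by codewords $u_1,\dots,u_{L+1} \in \cC$ and a center $z \in \F_q^n$ with $\weight(u_i - z) \le \rho$ for every $i$, together with a permutation $\pi$ of $[n]$. Since permuting coordinates preserves Hamming weight and commutes with the linear structure of $\F_q^n$, we get $\weight(\pi u_i - \pi z) = \weight(\pi(u_i - z)) = \weight(u_i - z) \le \rho$, so $\pi u_1, \dots, \pi u_{L+1}$ form a $\rho$-clustered set of codewords of $\{\pi x \mid x \in \cC\}$ with center $\pi z$. Hence the permuted code also satisfies $\cP$, and $\cP$ is row-symmetric. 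I do not anticipate any genuine obstacle: each step is a direct translation of the relevant definition, the only subtlety worth flagging explicitly being the ``size exactly $L+1$ versus size at most $L+1$'' bookkeeping in the locality step, handled by the remark above.
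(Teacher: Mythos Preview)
Your proposal is correct and matches the paper's approach: the paper does not give a written proof at all, stating only that the observation is immediate from the definition of $(\rho,L)$-list-decodability, and your write-up is precisely the routine unpacking of \cref{def:LD} and \cref{def:LocalRowSymmetric} that this remark points to.
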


Let $\cP$ be a monotone-increasing property over $\F_q^n$. Suppose that $\cP$ is nonempty, namely, that it is satisfied by the complete code $\F_q^n$. We denote its \deffont{threshold} by
$$\TRLC(\cP) = \min\inset{R\in [0,1] \mid \PR{\CRLC^{n,q}(R)\text{ satisfies }\cP}\ge \frac 12}\enspace , $$
where $\CRLC^{n,q}(R)$ is a random linear code of rate $R$ in $\F_q^n$.

This terminology is motivated by the following theorem, which states that the probability of an RLC of rate $R$ satisfying a local, row-symmetric and monotone-increasing property $\cP$, as a function of $R$, rapidly climbs near the threshold from $o(1)$ to $1-o(1)$.

\begin{theorem}[Thresholds for local and row-symmetric properties {\cite[Thm.\ 2.8]{MRRSW}}]\footnote{The theorem as stated in \cite{MRRSW} deals only with the regime of constant $q$ and $b$. The current statement, which allows $q$ and $b$ to depend on $n$, follows by inspecting the proof in \cite{MRRSW}.}\label{thm:RLCSharp}
	Let $\cC\subseteq \F_q^n$ be a random linear code of rate $R$ and let $\cP$ be a monotone-increasing, $b$-local and row-symmetric property over $\F_q^n$, where $\frac{n}{\log_q n} \ge \omega_{n\to\infty}\inparen{q^{2b}}$. The following now holds for every $\eps > 0$.
	\begin{enumerate}
		\item If $R \le \TRLC(\cP)-\eps$ then 
		$$\PR{\cC\text{ satisfies }\cP} \le q^{-(\eps-o_{n\to\infty}(1)) n}\enspace . $$
		\item If $R \ge \TRLC(\cP)+\eps$ then 
		$$\PR{\cC\text{ satisfies }\cP} \ge 1-q^{-(\eps-o_{n\to\infty}(1)) n}\enspace . $$
	\end{enumerate}
\end{theorem}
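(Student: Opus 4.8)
The plan is to run first- and second-moment arguments on the number of ``bad configurations'' inside a random linear code, after compressing the family $\cB_\cP$ down to $q^{o(n)}$ combinatorial \emph{types}.

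\textbf{Set-up.} Using row-symmetry together with the scalar-invariance of \cref{obs:LocalRowSymmetricScalarInvariant}, a bad set $B=\{x_1,\dots,x_\ell\}\in\cB_\cP$ with $\ell\le b$ is determined, up to its orbit under coordinate permutations, by a \emph{type}: the $\F_q$-linear ``pattern'' recording which combinations of the $x_i$ coincide, together with the empirical column distribution $\tau=\distr B$ on $\F_q^\ell$ of the matrix with rows $x_1,\dots,x_\ell$. Let $d=d(\tau)$ be the dimension of the $\F_q$-span of $\supp\tau\subseteq\F_q^\ell$. Two estimates drive everything. First, the number of bad sets of a fixed type in $\F_q^n$ is $q^{(H_q(\tau)\pm o(1))n}$ by a multinomial/entropy count, where $H_q$ is base-$q$ Shannon entropy. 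Second, since $\cC$ is linear, $B\subseteq\cC\iff\spn(B)\subseteq\cC$, and for a fixed $d$-dimensional $V$ one has $\PR{V\subseteq\cC}=\prod_{i=0}^{d-1}\frac{q^{Rn}-q^{i}}{q^{n}-q^{i}}=q^{-(d(1-R)\pm o(1))n}$, using $d\le b$ and that the generator matrix has full row rank except with probability $q^{-(1-R)n}$. Hence the expected number of type-$\tau$ bad sets inside $\cC$ is $q^{(H_q(\tau)-d(\tau)(1-R)\pm o(1))n}$. There are at most $(n+1)^{q^{b}}q^{O(b^2)}=q^{O(q^{b}\log_q n)}$ types, which is $q^{o(n)}$ by the hypothesis $\tfrac{n}{\log_q n}=\omega_{n\to\infty}(q^{2b})$; the square in the exponent is what is needed to also handle pairs of types below.

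\textbf{Upper bound (Part 1).} Set $R^\star=\min_\tau\bigl(1-H_q(\tau)/d(\tau)\bigr)$, the minimum over types of bad sets. If $R\le R^\star-\eps$, then every bad type has $H_q(\tau)-d(\tau)(1-R)\le -d(\tau)\eps\le-\eps$, so each contributes expected count at most $q^{-(\eps-o(1))n}$ inside $\cC$; summing over the $q^{o(n)}$ types and applying Markov's inequality to each event ``$\cC$ contains a bad set of type $\tau$'' gives $\PR{\cC\text{ satisfies }\cP}\le q^{-(\eps-o(1))n}$.

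\textbf{Lower bound (Part 2) and main obstacle.} If $R\ge R^\star+\eps$, fix a type $\tau^\star$ attaining $R^\star$ and let $N$ count type-$\tau^\star$ bad sets inside $\cC$; then $\E{N}\ge q^{(\eps-o(1))n}$, and since $\cC$ failing $\cP$ forces $N=0$, it suffices to show $\PR{N=0}\le q^{-(\eps-o(1))n}$, which I would get from Chebyshev once $\E{N^2}\le\E{N}^2(1+q^{-(\eps-o(1))n})$. Expanding $\E{N^2}=\sum_{B,B'}q^{-(\dim\spn(B\cup B')(1-R)\pm o(1))n}$ and grouping pairs by $j=\dim(\spn B\cap\spn B')$, the $j=0$ part equals $\E{N}^2(1+o(1))$, the diagonal $B=B'$ contributes $\E{N}\le q^{-(\eps-o(1))n}\E{N}^2$, and it remains to bound each $j\ge 1$ term by $q^{-(jR-o(1))n}\E{N}^2$; summing these and using $R\ge R^\star+\eps\ge\eps$ finishes the estimate. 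This last point is the crux: it reduces to showing that, for a fixed $d(\tau^\star)$-dimensional $V$, the number of type-$\tau^\star$ bad sets $B'$ with $\dim(\spn B'\cap V)\ge j$ is at most $q^{(H_q(\tau^\star)-j\pm o(1))n}$. The naive bound --- at most $q^{(d(\tau^\star)-j)n}$ subspaces are that close to $V$, each carrying $q^{O(b^2)}$ bad sets --- is too weak when $H_q(\tau^\star)<d(\tau^\star)$; one must instead use the coordinate-permutation symmetry of the bad-set family to show its spans are ``spread out'' (only a $q^{-jn\pm o(n)}$ fraction of the realizable spans can meet $V$ in dimension $\ge j$), together with the observation that any type contributing to the first moment cannot concentrate near the zero column, which is exactly what would otherwise confine the spans to a few coordinate subspaces. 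Finally, the two moment bounds together pin $\TRLC(\cP)=R^\star\pm o(1)$, so re-absorbing the $o(1)$ into $\eps$ yields the theorem as stated; checking that all $\pm o(1)$ exponent errors are uniform over the regime $q=q(n)$, $b=b(n)$ with $\tfrac{n}{\log_q n}=\omega_{n\to\infty}(q^{2b})$ and cancel appropriately between the first and second moments is the remaining routine bookkeeping.
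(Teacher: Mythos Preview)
The paper does not give its own proof of this statement; it is quoted from \cite{MRRSW}. What the paper does record from that proof is \cref{thm:RLCThreshold}, which pins the threshold at the min--max $\min_{\tau\in\cT_\cP}\max_{\tau'\in\cI_\tau}\bigl(1-H_q(\tau')/\dim\tau'\bigr)$ over \emph{implied} distributions (\cref{def:ImpliedDistribution}), not at your $R^\star=\min_{\tau\in\cT_\cP}\bigl(1-H_q(\tau)/d(\tau)\bigr)$. These quantities differ in general, and that discrepancy is exactly where your second-moment step breaks.

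Concretely, suppose $\tau^\star$ attains your $R^\star$ but admits a rank-$j$ implied distribution $\tau'\in\cI_{\tau^\star}$ with $1-H_q(\tau')/j>R^\star$. In the expansion of $\E{N^2}$, consider the joint type in which $B$ and $B'$ agree on a $j$-dimensional projection of type $\tau'$ and are conditionally independent elsewhere: this joint type has entropy $2H_q(\tau^\star)-H_q(\tau')$ and span-dimension $2d(\tau^\star)-j$, so its contribution to $\E{N^2}/\E{N}^2$ has exponent $j(1-R)-H_q(\tau')$, which is strictly positive whenever $R<1-H_q(\tau')/j$. Thus at $R=R^\star+\eps$ the variance dominates and Chebyshev gives nothing. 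Your key estimate---that the number of type-$\tau^\star$ matrices whose span meets a fixed $V$ in dimension $\ge j$ is at most $q^{(H_q(\tau^\star)-j+o(1))n}$---amounts to assuming every rank-$j$ projection of $\tau^\star$ has entropy near $j$, which is false for generic types; the correct fiber count is $q^{(H_q(\tau^\star)-H_q(\tau')+o(1))n}$. The fix (and what \cite{MRRSW} does) is to run the second moment on a type achieving the min--max, and, in the first-moment direction, to pass from each $\tau\in\cT_\cP$ to its optimal implied $\tau'$ via \cref{obs:ImpliedMotivation}, exactly as the paper itself does in the proof of \cref{lem:ReductionToRLC}. A side remark: scalar-invariance is not among the hypotheses here, and the reduction to types needs only row-symmetry (\cref{fact:PropertyDecomposition}).
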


\subsection{Theorem \ref{thm:IntroLowBias}: Randomly punctured low-bias codes}
Theorem \ref{thm:MainLowBiasProperties}, below, is a formal statement of Theorem \ref{thm:IntroLowBias}.

\begin{restatable}[Puncturings of low-bias linear codes are locally similar to random linear codes]{ourtheorem}{MainLowBiasProperties}\label{thm:MainLowBiasProperties}
	Let $q$ be a prime power, and let $\cP$ be a monotone-increasing, row-symmetric and $b$-local property over $\F_q^n$, where $\frac{n}{\log n}\ge \omega_{n\to\infty}\inparen{q^{2b}}$. Let $\cD\subseteq \F_q^m$ be a linear code. Let $\cC$ be a random $n$-puncturing of $\cD$ of design rate $R \le \TRLC(\cP) - \eps$ for some $\eps > 0$. Suppose that $\cD$ is $\inparen{\frac{\eps b\ln q}{q^b}}$-biased.
	Then,
	$$\PR{\cC\text{ satisfies }\cP} \le q^{-(\eps-o_{n\to\infty}(1)) n}\enspace . $$			
\end{restatable}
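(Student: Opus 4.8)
The plan is to reduce to the RLC Threshold Theorem (\cref{thm:RLCSharp}) at rate $R$, using the $\eta$-bias of $\cD$ to argue that the random generator matrix of $\cC$ behaves, for the purposes of local properties, like a uniformly random one. Fix a generator matrix $G\in\F_q^{Rn\times m}$ of $\cD$, so that $\cC$ is the row span of the random matrix $G'$ whose $n$ columns are drawn i.i.d.\ from the distribution $\mu$ that is uniform on the columns of $G$. The assumption that $\cD$ is $\eta$-biased says precisely that $|\widehat\mu(z)|\le\eta$ for every nonzero $z\in\F_q^{Rn}$, because $\widehat\mu(z)=\tfrac1m\sum_i\omega^{\tr\langle z,g_i\rangle}$ is the bias of the nonzero codeword of $\cD$ with message $z$. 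In particular $\eta$ is small enough that $\cD$ has $\eta$-optimal distance (\cref{lem:BiasStrongerThanDistance}) and \cref{lem:ActualVsDesignRate} lets us assume $G'$ is full rank. Since $\cP$ is monotone-increasing, $b$-local and row-symmetric, a code satisfies $\cP$ exactly when it contains a set of at most $b$ codewords whose defining matrix has a column-multiset (a ``type'') lying in a fixed family of bad types; writing any such choice of codewords of $\cC$ as $XG'$ for $X\in\F_q^{\ell\times Rn}$ with $\ell\le b$ distinct rows, and noting that the columns of $XG'$ are i.i.d.\ with law $\mu_X:=X_\ast\mu$, a union bound gives
\[
	\PR{\cC\text{ satisfies }\cP}\ \le\ \sum_{\ell=1}^b\ \sum_{\substack{X\in\F_q^{\ell\times Rn}\\ \text{distinct rows}}}\ \PR{\mathrm{type}(XG')\text{ is bad}}.
\]
The analogous sum for an RLC of rate $R$ replaces $\mu_X$ by the uniform distribution $\nu_X$ on $\im(X)$; up to the harmless overcounting factor $\le b!=q^{o(n)}$ it equals the expected number of bad configurations in the RLC, which for $R\le\TRLC(\cP)-\eps$ is at most $q^{-(\eps-o_{n\to\infty}(1))n}$ by (the first-moment estimate in the proof of) \cref{thm:RLCSharp}, whose hypothesis $n/\log n\ge\omega(q^{2b})$ we are given. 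So it suffices to compare, for each $X$ of rank $r$, the probability that $n$ i.i.d.\ samples from $\mu_X$ form a bad type with the corresponding probability for $\nu_X$.

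For this comparison I would use that $\mu_X$ is supported on the $r$-dimensional space $\im(X)$ and that $\widehat{\mu_X}(a)=\widehat\mu(X^\top a)$, so writing $\mu_X=\nu_X+E_X$ the error $E_X$ has $\widehat{E_X}$ vanishing on $\Ker(X^\top)$ and of modulus $\le\eta$ off it; hence $\sum E_X=0$, $\|E_X\|_\infty\le\eta$, and by Parseval $\|E_X\|_2\le\eta$. Expanding $\prod_{j=1}^n\mu_X(\cdot)=\prod_j\bigl(\nu_X(\cdot)+E_X(\cdot)\bigr)$ over the columns of a bad configuration of type $\lambda$, the all-$\nu_X$ term reproduces the RLC probability, and the rest amounts to a multiplicative factor $\prod_y\bigl(1+q^{r}E_X(y)\bigr)^{n\lambda(y)}\le\exp\bigl(q^{r}\,n\,\langle\lambda-\nu_X,E_X\rangle\bigr)$ beyond $q^{-rn}$, which Cauchy--Schwarz controls by $\exp\bigl(q^{r}\eta\,\|\lambda-\nu_X\|_2\,n\bigr)$; since $q^{r}\le q^{b}$ and the calibration gives $q^{b}\eta=\eps b\ln q$ (the $b\ln q$ absorbing the change of base and the enumeration over types), this is only a subexponential inflation whenever $\|\lambda-\nu_X\|_2$ is suitably small, which holds for the ``spread'' bad types carrying the bulk of the first moment (these are close to uniform on $\im(X)$). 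The remaining ``concentrated'' bad configurations, for which this estimate degrades, force $\cC$ to contain two distinct codewords at small relative Hamming distance; since every nonzero word of $\cD$ (being of $\eta$-optimal distance) agrees with any fixed value on at most a $(\tfrac1q+\eta)$-fraction of its coordinates, a union bound over pairs of codewords of $\cD$, in the spirit of the proof of \cref{lem:ActualVsDesignRate}, shows this event has probability $q^{-\Omega(n)}$.

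Assembling the pieces yields $\PR{\cC\text{ satisfies }\cP}\le q^{o(n)}\cdot q^{-(\eps-o(1))n}+q^{-\Omega(n)}=q^{-(\eps-o_{n\to\infty}(1))n}$, as required. The step I expect to be the main obstacle is the error control in the second paragraph: the pointwise bias bound $|\widehat{\mu_X}|\le\eta$ alone only produces a ruinous inflation factor $(1+q^{b}\eta)^n$ over the RLC probabilities, so one is forced to exploit the average ($\ell_2$) strength of the bias, which in turn requires showing that the bad types dominating the first moment are nearly uniform on $\im(X)$, and to separately dispatch the low-dimensional or clustered ``degenerate'' configurations via the distance of the mother code rather than its bias. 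Carrying out this dichotomy cleanly and tracking the constants is precisely what pins down the quantitative threshold $\eta=\eps b\ln q/q^{b}$.
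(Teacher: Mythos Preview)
Your proposal has a real gap, and it stems from a false premise in the second paragraph. You assert that the RLC union-bound sum --- essentially the expected number of bad configurations in the RLC --- is at most $q^{-(\eps-o(1))n}$, and then try to compare the punctured-code sum to it term-by-term. That expectation bound is in general \emph{false}: for a generic $b$-local row-symmetric property there can be types $\tau\in\cT_\cP$ of dimension $b$ with $H_q(\tau)/b$ much larger than $1-\TRLC(\cP)$, and for such $\tau$ the expected number of $\tau$-typed matrices in an RLC of rate $R$ just below $\TRLC(\cP)$ is $q^{n(H_q(\tau)-b(1-R))}\gg 1$. The proof of \cref{thm:RLCSharp} does not bound this expectation; it bounds the \emph{probability} via the implied-distribution trick: each bad $\tau$ is replaced by some $\tau'\in\cI_\tau$ of dimension $a\le b$ with $H_q(\tau')/a\le 1-\TRLC(\cP)+o(1)$ (this is exactly the content of \cref{thm:RLCThreshold}), using that a $\tau$-typed submatrix forces a $\tau'$-typed one (\cref{obs:ImpliedMotivation}). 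Only at the level of $\tau'$ is the first moment small.

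Once you work with $\tau'$, the pointwise inflation that you call ``ruinous'' is not ruinous at all, and this is precisely the paper's route (\cref{lem:tauInLowBias} together with \cref{lem:ReductionToRLC}). The low-bias case of \cref{lem:SmoothDistribution} combined with \cref{lem:ProbOfHavingGivenType} gives directly
\[
\E{\inabset{A\in\cM_{n,\tau'}\mid A\subseteq\cC}}\ \le\ q^{\,n\bigl(H_q(\tau')-a(1-R)+\log_q(1+\eta q^{a})\bigr)},
\]
and the calibration $\eta=\eps b\ln q/q^{b}$ makes the excess $\log_q(1+\eta q^{a})\le \eta q^{a}/\ln q = \eps b/q^{\,b-a}\le a\eps$, the last step being the elementary inequality $b\le a\,q^{\,b-a}$ for $1\le a\le b$ and $q\ge 2$. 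Since $H_q(\tau')-a(1-R)\le -a\eps+o(1)$ by the choice of $\tau'$ and the rate assumption, every term is at most $q^{-n(\eps-o(1))}$ (up to the harmless factor-of-two in $\eps$), and a union over the at most $(n+1)^{q^{b}}$ types in $\cT_\cP$ finishes. No $\ell_2$ Parseval estimate, no dichotomy between ``spread'' and ``concentrated'' types, and no auxiliary pairwise-distance argument are needed; the pointwise Fourier bound alone suffices once one compares at the correct dimension $a$ rather than at $b$.
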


\subsection{Applications of Theorem \ref{thm:MainLowBiasProperties} for list-decodability}
In our discussion of list-decoding capacity in Section \ref{sec:intro} we treated the rate $R$ as fixed, and the list-decoding capacity radius $\rho$ as a function of $R$ and the field size. Henceforth we will prefer to think of $R$ as depending on some fixed $\rho$.

The \emph{List-Decoding Capacity Theorem} \cite[Thm.\ 7.4.1]{GRS} states that the \emph{optimal rate} for radius $\rho$ list-decoding over the field $\F_q$ is $R^* = 1-h_q(\rho)$, where 
\begin{equation}\label{eq:h_q}
	h_q(\rho) = -\rho\log_q \rho - (1-\rho)\log_q (1-\rho) + \rho \log_{q} (q-1)
\end{equation} 
is the $q$-ary entropy function (see Section \ref{sec:entropy}). In other words, there exist infinite families of codes of rate $R^*-\eps$ that are list-decodable up to radius $\rho$ with list-size independent of the block length $n$, but no such families exist for rate $R^*+\eps$.

As mentioned in Section \ref{sec:intro}, the list-decodability of RLCs has been extensively studied. Positive and negative results of this sort can be stated, respectively, as lower and upper bounds on $\TRLC\inparen{\PLD \rho Lq}$, where $\PLD \rho Lq$ is the monotone-increasing property of a code over $\F_q$ \emph{not} being $(\rho,L)$-list-decodable. For example, the main result of \cite{GHK} can be stated as
\begin{equation}\label{eq:GHKAsThreshold}
	\TRLC\inparen{\PLD \rho L q} \ge 1-h_q(\rho)-O_{\rho,q}\inparen{\frac 1L}\enspace
\end{equation}
for all $\rho$, $q$ and $L$. In the binary regime, the main result of \cite{LiWootters} together with a negative result from \cite{GLMRSW} yield the very precise bound
\begin{equation}\label{eq:LWAsThreshold}
	1-h_2(\rho)\cdot\frac {L-1}{L-2} \le \TRLC\inparen{\PLD \rho L2} \le 1-h_2(\rho)\frac{L+1}L
\end{equation}
for all $\rho$ and $L\ge 3$. 
By Observation \ref{obs:LDGoodProperty}, $\PLD\rho Lq$ is $(L+1)$-local and row-symmetric, so Theorem \ref{thm:MainLowBiasProperties} applies to it. Plugging in Eqs.\ \eqref{eq:GHKAsThreshold} and \eqref{eq:LWAsThreshold} yields the following corollary.

\begin{ourcorollary}[Puncturings of certain linear codes are locally similar to random linear codes]\label{cor:MainLowBiasLD}
	Fix a prime power $q$, $\rho \in \inparen{0,1-\frac 1q}$, $L\in \N$ and $\eps > 0$. Let $\cD\subseteq \F_q^m$ be an $\inparen{\frac{\eps (L+1)\ln q}{q^{L+1}}}$-biased linear code . Let $\cC$ be a random $n$-puncturing of $\cD$ of design rate $R$, where $\frac n{\log n} \ge \omega_{n\to\infty}\inparen{q^{2(L+1)}}$.
	Then,
	\begin{enumerate}
		\item If $R < 1-h_q(\rho)-\frac{C_{\rho,q}}L - \eps$ then $\cC$ is $\LD \rho L$ with probability $1-q^{-\inparen{\eps-o_{n\to\infty}(1)}n}$. Here, $C_{\rho,q}$ is a constant depending on $\rho$ and $q$.
		\item If $q=2$, $L\ge 3$ and $R < 1-h_2(\rho)\cdot \frac{L-1}{L-2} - \eps$ then $\cC$ is $\LD \rho L$ with probability $1-2^{-\inparen{\eps-o_{n\to\infty}(1)}n}$.
	\end{enumerate}
\end{ourcorollary}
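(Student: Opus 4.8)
The plan is to derive this corollary directly from \cref{thm:MainLowBiasProperties} by instantiating $\cP$ with the list-decoding property and then feeding in the known RLC threshold bounds \cref{eq:GHKAsThreshold,eq:LWAsThreshold}. First I would set $b = L+1$ and take $\cP = \PLD \rho L q$, the monotone-increasing property of codes in $\F_q^n$ that are \emph{not} $\LD \rho L$. By \cref{obs:LDGoodProperty}, this $\cP$ is $(L+1)$-local and row-symmetric, so with $b = L+1$ the locality and row-symmetry hypotheses of \cref{thm:MainLowBiasProperties} are met. The remaining hypotheses of that theorem also match the corollary verbatim once $b = L+1$: the condition $\frac{n}{\log n} \ge \omega_{n\to\infty}(q^{2b})$ becomes the assumed $\frac{n}{\log n} \ge \omega_{n\to\infty}(q^{2(L+1)})$, and the requirement that $\cD$ be $\left(\frac{\eps b\ln q}{q^b}\right)$-biased becomes exactly the assumed $\left(\frac{\eps(L+1)\ln q}{q^{L+1}}\right)$-bias. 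So in each case the only thing left to verify is that the design rate satisfies $R \le \TRLC(\cP) - \eps$.

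For part (1), I would invoke \cref{eq:GHKAsThreshold}, which gives $\TRLC\inparen{\PLD \rho L q} \ge 1-h_q(\rho) - C_{\rho,q}/L$ where $C_{\rho,q}$ is the constant hidden in the $O_{\rho,q}(\cdot)$ term. Hence the hypothesis $R < 1-h_q(\rho)-\frac{C_{\rho,q}}{L}-\eps$ implies $R \le \TRLC\inparen{\PLD \rho L q} - \eps$ (and if the right-hand side is negative there is nothing to prove, as no nonnegative design rate qualifies). Applying \cref{thm:MainLowBiasProperties} then yields $\PR{\cC \text{ satisfies } \PLD \rho L q} \le q^{-(\eps-o_{n\to\infty}(1))n}$. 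Since ``$\cC$ satisfies $\PLD \rho L q$'' is precisely the event that $\cC$ is \emph{not} $\LD \rho L$, the complementary event has probability at least $1-q^{-(\eps-o_{n\to\infty}(1))n}$, which is the claimed conclusion.

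For part (2), the argument is identical with $q=2$, except that I would use the lower bound from \cref{eq:LWAsThreshold}, namely $\TRLC\inparen{\PLD \rho L 2} \ge 1-h_2(\rho)\cdot\frac{L-1}{L-2}$ (valid for $L \ge 3$), so that $R < 1-h_2(\rho)\cdot\frac{L-1}{L-2}-\eps$ gives $R \le \TRLC\inparen{\PLD \rho L 2}-\eps$ and the same reasoning applies.

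I do not expect a genuine obstacle here: all the substance is already carried by \cref{thm:MainLowBiasProperties} and the cited RLC list-decoding bounds, and the corollary is a clean specialization. The only points needing minor care are lining up the locality parameter $b = L+1$ with the bias requirement in the hypothesis, translating between the property $\PLD \rho L q$ and its negation (being $\LD \rho L$), and noting that the strict rate inequalities in the corollary comfortably imply the non-strict $R \le \TRLC(\cP)-\eps$ needed by the theorem, including the degenerate case where the threshold lower bound is vacuous.
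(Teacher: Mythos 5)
Your proposal is correct and matches the paper's derivation exactly: the corollary is obtained by applying \cref{thm:MainLowBiasProperties} with $b=L+1$ to the property $\PLD{\rho}{L}{q}$ (which is $(L+1)$-local and row-symmetric by \cref{obs:LDGoodProperty}) and plugging in the threshold lower bounds \cref{eq:GHKAsThreshold,eq:LWAsThreshold}. No gaps.
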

Other positive results about RLC list-decodability (e.g., \cite{Wootters13}) can be similarly used to obtain bounds on the list-decodability of randomly punctured codes.

\deffont{List-recoverability} is another property of interest to which Theorem \ref{thm:MainLowBiasProperties} applies, similarly allowing us to reduce from known results about RLCs. See Section \ref{sec:ListRecovery}	for more details.

\subsection{Derandomization of RLCs}
As discussed in Section \ref{sec:intro}, Theorem \ref{thm:MainLowBiasProperties} can be invoked to derandomize RLCs by casting a code of short block length and low bias in the role of the mother code $\cD$. One result that can be achieved via this method is the following theorem. For simplicity, we focus on the binary case.
\begin{restatable}[Codes locally similar to an RLC with linear randomness]{ourtheorem}{DerandomizationAllProperties}\label{thm:DerandomizationAllProperties}
	There exists a randomized algorithm that, given $b\in \N$, $\eps > 0$, $R^*\in [\eps,1]$ and $n\in \N$, where $\frac{n}{\log_2 n}\ge \omega_{n\to\infty}\inparen{2^{2b}}$ and $n \ge \omega_{n\to\infty}(1/\eps)$, samples a generating matrix for a linear code $\cC\subseteq \F_2^n$ of rate $R = R^*-\eps$ such that
	\begin{equation}\label{eq:DerandomizationALlPropertiesWant1}
		\PR{\cC\text{ satisfies some property }\cP\in \cK} \le 2^{-\Omega(\eps n)}.
	\end{equation}
	Here, $\cK$ is the family of all monotone-increasing, $b$-local and row-symmetric properties $\cP$ over $\F_2^n$ for which the 	threshold $~\TRLC(\cP)$ is at least $R^*$.
	This algorithm uses $O\inparen{n\inparen{b+\log_2\frac 1\eps}}$ random bits, and works in time polynomial in $n$.
\end{restatable}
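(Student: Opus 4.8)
The plan is to instantiate \cref{thm:MainLowBiasProperties} with an explicit low-bias mother code of linear length. The key point is that \cref{thm:MainLowBiasProperties} already does all the heavy lifting: for a single monotone-increasing, row-symmetric, $b$-local property $\cP$ with $\TRLC(\cP) \ge R^* = R + \eps$, if $\cD$ is $\inparen{\frac{\eps b \ln 2}{2^b}}$-biased then a random $n$-puncturing $\cC$ of $\cD$ of design rate $R$ satisfies $\PR{\cC \text{ satisfies } \cP} \le 2^{-(\eps - o(1))n}$. So the algorithm is: fix in advance an explicit binary linear code $\cD \subseteq \F_2^m$ with $m = O(n)$ and bias $\eta \le \frac{\eps b \ln 2}{2^b}$; sample $i_1, \dots, i_n \in [m]$ i.i.d. uniformly; and output the $k \times n$ generating matrix obtained by restricting a fixed $k \times m$ generator of $\cD$ to columns $i_1, \dots, i_n$ (where $k = \log_2|\cD|$ is chosen so the design rate is exactly $R$). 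By \cref{lem:ActualVsDesignRate}, since $\eta$-optimal distance follows from $\eta$-bias (\cref{lem:BiasStrongerThanDistance}) and $R < 1 - \log_2(1+\eta) - \Omega(1) \le 1$, the actual rate equals $R$ except with probability $2^{-\Omega(n)}$, which is absorbed into the claimed bound.

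First I would pin down the mother code. Explicit $\eps_0$-biased linear codes of length $m = O(k/\eps_0^{2+o(1)})$ exist by \cite{Ta-Shma17} (or, more crudely, $m = O(k/\eps_0^3)$ via the ABNNR/expander-based construction \cite{ABNNR92}); here we need $\eps_0 = \frac{\eps b \ln 2}{2^b} = \Theta(\eps b / 2^b)$, which depends only on $\eps, b$ and not on $n$, so $m = O(k) = O(n)$ with the implied constant depending on $\eps, b$. Sampling the $n$ indices costs $n \log_2 m = n(\log_2 n + O(1))$ bits; to get the clean bound $O\inparen{n(b + \log_2 \frac 1\eps)}$ stated, one instead samples $\cD$ itself to be a \emph{random} low-bias object of length $m = O(n)$ — e.g. a random linear code of length $m = \Theta(k/\eps_0^2)$ is $\eps_0$-biased with probability $1 - 2^{-\Omega(n)}$ — but a random generator for such a $\cD$ would again cost $\sim km = \Theta(n^2)$ bits. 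The right move is to use a construction of $\eps_0$-biased sets that is itself sampleable with few bits, or to observe that the puncturing indices need only be drawn from a distribution fooling the relevant tests; the randomness accounting here is exactly where I expect to spend the most care, matching $n\log_2 m$ against the target $O(n(b + \log_2\frac1\eps))$ via a short-seed construction of the mother code plus the puncturing.

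The remaining step is the union bound over the family $\cK$. This is genuinely easy once the per-property bound is in hand: there are at most $2^{2^{nb}}$ subsets of $\F_2^n$ of size $\le b$... more carefully, a $b$-local property is determined by its witness family $\cB_\cP \subseteq \binom{\le b}{\F_2^n}$, so $|\cK| \le 2^{\binom{2^n}{\le b}} \le 2^{2^{nb}}$, which is at most $2^{2^{nb}}$ — doubly exponential, too large to union-bound directly. The fix, which is the standard observation underlying "locally similar to an RLC", is that we do not union-bound over properties but over \emph{witnesses}: $\cC$ satisfies \emph{some} $\cP \in \cK$ iff $\cC$ contains some bad set $B$ that is a witness for a property of threshold $\ge R^*$, and whether a given $b$-subset $B \subseteq \F_2^n$ is "bad in this sense" is a fixed predicate. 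So it suffices to bound $\PR{\exists B \subseteq \cC, |B|\le b, B \text{ bad}}$, and this is precisely the quantity controlled by the proof of \cref{thm:MainLowBiasProperties} / the RLC Threshold Theorem \cref{thm:RLCSharp} applied to the single "universal" property $\cP_{\mathrm{univ}} = \bigcup_{\cP \in \cK, \TRLC(\cP)\ge R^*} \cP$, which is itself monotone-increasing, row-symmetric, $b$-local, and has $\TRLC(\cP_{\mathrm{univ}}) \ge R^*$ (since an RLC of rate just below $R^*$ avoids every individual $\cP$ whp, hence avoids the union whp). Thus the whole theorem reduces to one application of \cref{thm:MainLowBiasProperties} to $\cP_{\mathrm{univ}}$, and the only real work is the explicit-construction-plus-randomness bookkeeping described above; the conceptual content is entirely inherited.
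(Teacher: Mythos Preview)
Your overall strategy---instantiate \cref{thm:MainLowBiasProperties} with an explicit $\eta$-biased mother code, $\eta = \frac{\eps b\ln 2}{2^b}$, of length $m = O(n\cdot\eta^{-c})$---is exactly the paper's. But you leave the randomness accounting unresolved, and the fixes you speculate about (short-seed mother code, fooling distributions for the indices) are not what is needed. The paper's trick is much simpler: sample the $n$ puncturing coordinates \emph{without} replacement, i.e., draw a uniformly random increasing sequence $1\le i_1<\dots<i_n\le m$. This costs only $\log_2\binom{m}{n}+O(1)\le n\log_2\frac{em}{n}+O(1)=O\!\inparen{n\inparen{b+\log_2\tfrac1\eps}}$ bits, since $\tfrac{m}{n}=O(\eta^{-c})$ depends only on $b,\eps$---the $n\log_2 n$ term you were worried about simply disappears. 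To connect this distinct-index model to the with-replacement puncturing $\cC'$ analyzed in \cref{thm:MainLowBiasProperties}, one conditions on the event $J$ that $\cC'$ has no repeated coordinate; taking $m$ slightly larger (say $m\ge n/(1-2^{-\eps/2})$) ensures $\PR{J}\ge(1-n/m)^n\ge 2^{-\eps n/2}$, so $\PR{T\mid J}\le \PR{T}/\PR{J}\le 2^{-\Omega(\eps n)}$. By row-symmetry of $\cP$, the code from a uniformly random increasing sequence has the same distribution of ``satisfies $\cP$'' as $\cC'$ conditioned on $J$.

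Your handling of the union over $\cK$ via a single property $\cP_{\mathrm{univ}}=\bigcup_{\cP\in\cK}\cP$ is morally right, and equivalent to the paper's route, but your stated justification that $\TRLC(\cP_{\mathrm{univ}})\ge R^*$---``an RLC of rate just below $R^*$ avoids every individual $\cP$ whp, hence avoids the union whp''---is circular: the ``hence'' is a union bound over $\cK$, whose cardinality is doubly exponential. The correct argument, which the paper gives explicitly, is to reduce to the \emph{single-type} properties $\cK'=\{\cP\in\cK:|\cT_\cP|=1\}$: if $\cC$ satisfies some $\cP\in\cK$ then $\cC$ contains a matrix of some type $\tau\in\cT_\cP$, and the single-type property $\cP'$ with $\cT_{\cP'}=\{\tau\}$ satisfies $\cP'\subseteq\cP$, so $\TRLC(\cP')\ge\TRLC(\cP)\ge R^*$ and $\cP'\in\cK'$. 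Since $|\cK'|\le(n+1)^{2^b}=2^{o(n)}$ by \cref{fact:NumberOfTypes}, a union bound over $\cK'$ (each term bounded via \cref{thm:MainLowBiasProperties}) finishes. Equivalently, one could invoke \cref{thm:RLCThreshold} to deduce $\TRLC(\cP_{\mathrm{univ}})\ge R^*-o(1)$ directly, which validates your one-shot application; but the type-counting is doing the work either way.
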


\subsection{Theorem \ref{thm:IntroLargeDistance}: Randomly punctured codes of near-optimal distance}\label{sec:ResultsGHK}
Theorems \ref{thm:MainLargeDistanceProperties} and \ref{thm:MainReduceToGHK} are detailed versions of, respectively, the two parts of Theorem \ref{thm:IntroLargeDistance}. These theorems extend Theorem \ref{thm:MainLowBiasProperties}, in certain scenarios, to the case where the mother code $\cD$ has near-optimal distance but not necessarily low bias. Theorem \ref{thm:MainLargeDistanceProperties} states that the conclusion of Theorem \ref{thm:MainLowBiasProperties} is still valid, provided that the alphabet $q$ is large enough, and the property $\cP$ is \deffont{scalar-invariant}---a new definition given here.

\begin{definition}[Scalar invariant property]\label{def:ScalarInvariant}
	Let $\cP$ be a property of codes in $\F_q^n$. Suppose that, for every code $\cC$ satisfying $\cP$ and for every diagonal full-rank matrix $\Lambda\in \F_q^{n\times n}$, the code $\Lambda \cC := \inset{\Lambda u\mid u\in \cC}$ also satisfies $\cP$. Then, $\cP$ is said to be \deffont{scalar-invariant}.
\end{definition}
It is not hard to see that $(\rho,L$)-list-decodability (as well as its negation, the property of \emph{not} being $\LD \rho L$) is a scalar-invariant property for all $\rho$ and $L$. The same holds for list-recoverability (see Section \ref{sec:ListRecovery}). 

\begin{restatable}[Puncturings of near-optimal distance linear codes with large alphabet are locally similar to random linear codes]{ourtheorem}{MainLargeDistanceProperties}\label{thm:MainLargeDistanceProperties}
	Let $q$ be a prime power, and let $\cP$ be a monotone-increasing, row-symmetric, $b$-local and scalar-invariant property over $\F_q^n$, where $\frac{n}{\log n}\ge \omega_{n\to\infty}\inparen{q^{2b}}$. Let $\cD\subseteq \F_q^m$ be a linear code of $q^{-b}$-optimal distance. Let $\cC$ be a random $n$-puncturing of $\cD$ of design rate $R \le \TRLC(\cP) - \eps - 2\log_q2$ for some $\eps > 0$. Then,
	$$\PR{\cC\text{ satisfies }\cP} \le q^{-(\eps-o_{n\to\infty}(1)) n}\enspace . $$			
\end{restatable}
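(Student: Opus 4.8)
The plan is to emulate the proof of \cref{thm:MainLowBiasProperties}, but to first replace the mother code by a ``symmetrized'' variant and then to run that proof's analysis with a purely one-sided control on the relevant empirical column distributions in place of the ``close to uniform'' control that low bias would supply. The price of this weakening is a multiplicative factor of $4 = 2^2$ per coordinate, which is exactly the additive rate loss $2\log_q 2 = \log_q 4$ in the statement. \emph{Step 1 (symmetrize the mother code).} Since $\cP$ is scalar-invariant (\cref{def:ScalarInvariant}), we may rescale the coordinates of $\cC$ by nonzero field elements without affecting whether $\cP$ holds. Let $\cD^\star\subseteq\F_q^{m(q-1)}$ be the linear code with coordinate set $[m]\times\F_q^*$ that encodes $u\in\cD$ as $\inparen{a\,u_j}_{(j,a)}$; this encoding is injective, so $\dim\cD^\star=\dim\cD$ and a random $n$-puncturing of $\cD^\star$ has design rate $R$. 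Sampling the coordinates of such a puncturing as i.i.d.\ uniform elements of $[m]\times\F_q^*$ shows that a random $n$-puncturing of $\cD^\star$ has the same law as $\Lambda\,\varphi(\cD)$, where $\varphi$ is a random $n$-puncturing of $\cD$ and $\Lambda$ is an independent uniformly random invertible diagonal matrix over $\F_q$. Since $\cP$ is scalar-invariant, $\PR{\varphi(\cD)\text{ satisfies }\cP}=\PR{\Lambda\varphi(\cD)\text{ satisfies }\cP}$ for every fixed $\Lambda$, hence $\PR{\varphi(\cD)\text{ satisfies }\cP}=\PR{\cC^\star\text{ satisfies }\cP}$ where $\cC^\star$ is a random $n$-puncturing of $\cD^\star$; it therefore suffices to bound the latter probability.

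\emph{Step 2 (the columns of $\cD^\star$ are well spread).} In the framework underlying \cref{thm:MainLowBiasProperties}, the behaviour of $b$-local configurations on a puncturing of a mother code is governed, for each subspace $U$ of its message space with $\dim U=d\le b$, by the empirical distribution $\distr{U}$ on the dual $U^*$ obtained by restricting the column functionals $u\mapsto\ip uc$ (over all columns $c$ of the mother code) to $U$; for an RLC this $\distr{U}$ is exactly uniform on $U^*$. I claim that for $\cD^\star$ one has $\distr{U}(\phi)\le 4q^{-d}$ for all $\phi\in U^*$. The engine is a counting fact about $\cD$: \emph{for every subspace $W\subseteq\cD$ with $\dim W=d'\le b$, the fraction of coordinates on which all codewords of $W$ vanish is at most $q^{-d'}+q^{-b}\le 2q^{-d'}$.} Indeed, evaluating $\sum_{w\in W}\weight(w)$ coordinatewise gives $(1-q^{-1})q^{d'}$ times the fraction $f$ of coordinates of $W$ that are not identically zero, whereas summing over the $q^{d'}-1$ nonzero codewords and using that $\cD$ (hence $W$) has $q^{-b}$-optimal distance gives $\sum_{w\in W}\weight(w)\ge(q^{d'}-1)(1-q^{-1})(1-q^{-b})$; dividing, $f\ge(1-q^{-d'})(1-q^{-b})$, so the vanishing fraction $1-f$ is at most $q^{-d'}+q^{-b}$. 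Applying this: the multiset of column functionals of $\cD^\star$ is closed under multiplication by $\F_q^*$, so $\distr{U}$ is invariant under the $\F_q^*$-action on $U^*$; consequently $\distr{U}(0)$ equals the vanishing fraction of the $d$-dimensional subcode of $\cD$ corresponding to $U$, whence $\distr{U}(0)\le 2q^{-d}$, while for $\phi\ne0$ the distribution $\distr{U}$ is constant on the punctured line $\inset{a\phi:a\in\F_q^*}$, so $\distr{U}(\phi)=\tfrac1{q-1}\distr{U}\inparen{\inset{a\phi:a\in\F_q^*}}$, which is at most $\tfrac1{q-1}$ times the vanishing fraction of the $(d-1)$-dimensional subcode of $\cD$ corresponding to $\ker\phi\subseteq U$, hence at most $\tfrac1{q-1}\cdot2q^{-(d-1)}=\tfrac{2q}{q-1}q^{-d}\le 4q^{-d}$ since $q\ge2$. (When $d=1$ the distance hypothesis is needed only for $\distr{U}(0)$; the nonzero masses are $\le\tfrac1{q-1}\le 4q^{-1}$ by the rescaling symmetry alone.)

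\emph{Step 3 (conclude; the main obstacle).} The proof of \cref{thm:MainLowBiasProperties} bounds $\PR{\cC^\star\text{ satisfies }\cP}$ by a sum, over boundedly many ``bad-configuration shapes'' and over subspaces $U$ of the message space of dimension $d\le b$, of terms of the form $\binom{n}{n\tau}\prod_{v\in U^*}\distr{U}(v)^{n\tau(v)}$, the low-bias hypothesis being used only to replace each factor $\distr{U}(v)$ by $(1+o_{n\to\infty}(1))q^{-d}$. Substituting instead $\distr{U}(v)\le 4q^{-d}$ multiplies each such term by at most $4^n=q^{2n\log_q2}$ relative to the corresponding term for an RLC of rate $R$ (using $\sum_v\tau(v)=1$), whence $\PR{\cC^\star\text{ satisfies }\cP}\le 4^n\cdot\PR{\text{RLC of rate }R\text{ satisfies }\cP}$. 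By \cref{thm:RLCSharp} the latter probability is at most $q^{-(\TRLC(\cP)-R-o_{n\to\infty}(1))n}$, and since $R\le\TRLC(\cP)-\eps-2\log_q2$, multiplying by $4^n$ leaves $q^{-(\eps-o_{n\to\infty}(1))n}$, as required. The delicate point is precisely this last reduction: it invokes the proof of \cref{thm:MainLowBiasProperties} as a near-black box, with the input ``$\distr{U}$ is pointwise $(1+o(1))q^{-d}$'' weakened to ``$\distr{U}$ is pointwise $\le 4q^{-d}$'', so one must verify that that proof never relies on a two-sided comparison of $\distr{U}$ with uniform, and that the factor $4^n$ legitimately lands as the stated rate shift rather than inside the $o_{n\to\infty}(1)$ term (it does, $\log_q4$ being an $n$-independent constant); the remaining bookkeeping merely ensures the Step 2 counting fact is invoked only for subcodes of dimension at most $b$, which holds automatically since we only ever take $d\le b$.
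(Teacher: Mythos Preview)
Your overall strategy---pass to the scalar-expanded mother code $\cD^\star$ (which the paper does via \cref{obs:LambdaCAsAPuncturing}), show that the resulting row distribution is pointwise at most $4q^{-d}$, and then feed this into the first-moment machinery behind \cref{thm:MainLowBiasProperties}---is exactly the paper's route. Your Step 2 is a genuinely nicer alternative to the paper's argument: the paper proves the bound $\distr{B^\star}(x)\le 4q^{-d}$ via Fourier analysis (\cref{lem:SmoothDistribution}, part~\ref{enum:SmoothDistributionGoodDistance}), using a one-sided control on Fourier coefficients and the inequality $\sigma(0)\ge 0$; you obtain the same constant $4$ by an elementary ``vanishing-fraction'' double count, which is arguably more transparent and avoids characters entirely. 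Both yield the same $2\log_q 2$ rate loss.

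Step 3, however, has a real gap. From ``each term in the first-moment sum for $\cC^\star$ is at most $4^n$ times the corresponding term for an RLC'' you conclude $\PR{\cC^\star\text{ satisfies }\cP}\le 4^n\cdot\PR{\text{RLC of rate }R\text{ satisfies }\cP}$, and then invoke \cref{thm:RLCSharp}. But the first-moment sum is only an \emph{upper bound} on the RLC probability, not equal to it, so you cannot bound $\PR{\cC^\star\text{ sat }\cP}$ by $4^n$ times the \emph{probability}. What you actually have is $\PR{\cC^\star\text{ sat }\cP}\le 4^n\cdot S$, where $S$ is the first-moment-with-implied-distribution bound that the proof of \cref{thm:MainLowBiasProperties} produces. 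The missing ingredient is that $S$ itself is small: this is the content of \cref{thm:RLCThreshold} (packaged as \cref{lem:ReductionToRLC}), which says precisely that for every $\tau\in\cT_\cP$ there is an implied $\tau'$ with $H_q(\tau')/\dim(\tau')\le 1-\TRLC(\cP)+o(1)$, forcing each summand---and hence $S$---to be at most $q^{-(\TRLC(\cP)-R-o(1))n}$. That is what the paper cites; \cref{thm:RLCSharp} alone does not give you control of $S$. Once you swap in the correct citation, your argument is complete and matches the paper's.
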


Theorem \ref{thm:MainReduceToGHK} deals specifically with list-decodability rather than a general code property. The theorem exploits certain characteristics of the proof of \cite{GHK} (see Eq.\ \eqref{eq:GHKAsThreshold}), and shows that this specific result about list-decodability of RLCs can be applied to our randomly-punctured code $\cC$ as long as the mother code has near-optimal distance, regardless of the alphabet size. The relevant characteristics of \cite{GHK} are discussed in Remark \ref{rem:ReasonForConditions}.

\begin{restatable}[A puncturing of a near-optimal distance code is whp list-decodable up to capacity]{ourtheorem}{mainReduceToGHK}\label{thm:MainReduceToGHK}
	Fix a prime power $q$. Let $L,n\in \N$ and $0 < \rho < \frac{q-1}q$, such that $\frac{n}{\log_q n} \ge \omega_{n\to\infty}\inparen{q^{L+1}}$. Let $\cD\subseteq \F_q^m$ be a linear code with $\eta$-optimal distance, where  $\eta = q^{-L+1}$. Let $\cC$ be a random $n$-puncturing of $\cD$ of design rate $R$, where $R \le 1 - h_q(\rho) - \frac{K}L$ for some constant $K = K_{\rho,q}$. Then, 
	$$\PR{\cC\text{ is }\LD \rho L} \ge 1-q^{-\Omega (n)}\enspace . $$
	
	Furthermore, one can take \begin{equation}\label{eq:CrhoqAsymptotics}
		K_{\rho,q} \le\exp{O\inparen{\frac {(\log q)^2}{\min\inset{(1-1/q-\rho)^2,\rho}}}}
	\end{equation}
	and, in particular, $K_{\rho,q}\le \poly(q)$ whenever $\rho$ is bounded away from $0$ and $1-\frac 1q$. 
\end{restatable}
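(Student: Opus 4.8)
The plan is to mimic, in the setting of a random puncturing, the proof of \cite{GHK} that an RLC of rate below $1-h_q(\rho)-C_{\rho,q}/L$ is \LD\rho L. That proof runs a union bound over ``bad configurations'' of $L+1$ codewords and over their candidate cluster centers, and its probabilistic estimate uses only that a certain matrix built from the generator of the code has columns that are \emph{spread} over their span --- a feature that the $\eta$-optimal distance of $\cD$ provides for a random puncturing. We cannot invoke \cref{eq:GHKAsThreshold} as a black box, because a random puncturing of $\cD$ is generated by a matrix whose $n$ columns are i.i.d.\ samples from the $m$ columns of a generator matrix $G$ of $\cD$ --- a distribution far from uniform on $\F_q^k$ --- so the reduction must be to the internals of \cite{GHK}. (This is also why the abstract framework behind \cref{thm:MainLowBiasProperties,thm:MainLargeDistanceProperties} does not suffice here: that route loses an additive $2\log_q 2$ in the rate, which is prohibitive when $q$ is small, while the present bound carries no such loss.)

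\textbf{Setup and union bound.} By \cref{lem:ActualVsDesignRate} (applicable since the hypotheses keep $R$ below $1-h_q(\rho)$, hence bounded away from $1$, and $\eta=q^{-L+1}$ is tiny) we may assume $\cC$ has actual rate $R$ and a generator matrix $\wh G\in\F_q^{k\times n}$ with $k=Rn$ whose columns are i.i.d.\ uniform samples from the $m$ columns of $G\in\F_q^{k\times m}$; in particular distinct messages give distinct codewords. By \cref{obs:LDGoodProperty} it suffices to bound the probability that $\cC$ contains a $\rho$-clustered set of $L+1$ distinct codewords. As $\rho$-clusteredness is translation invariant and $\cC$ is linear, any such set translates to $\inset{0,r_1,\dots,r_L}$ with $r_i=y\sp i\wh G$ for distinct nonzero $y\sp 1,\dots,y\sp L\in\F_q^k$, and the zero codeword in this set forces every valid center $z$ to obey $\weight(z)\le\rho$. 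So it is enough to bound
\[ \sum_{(y\sp i)_i}\ \sum_{z\,:\,\weight(z)\le\rho}\ \PR{\forall i\in[L]:\ \weight\inparen{y\sp i\wh G-z}\le\rho}, \]
with at most $q^{RLn}$ configurations and at most $q^{(h_q(\rho)+o(1))n}$ centers. Fixing a configuration and a center and letting $M\in\F_q^{L\times k}$ have rows $y\sp 1,\dots,y\sp L$, the $n$ columns of the $L\times n$ matrix with rows $r_1,\dots,r_L$ are i.i.d.\ uniform samples from the $m$ columns of $MG$. Crucially, every nonzero $\F_q$-linear combination of rows of $MG$ is a nonzero codeword of $\cD$ (the $y\sp i$ are distinct and $G$ has full row rank), so by $\eta$-optimal distance it vanishes on at most $\inparen{\frac1q+\eta}m$ coordinates; equivalently, any two rows of $MG$ --- and any row of $MG$ against the all-zero word --- agree on at most $\inparen{\frac1q+\eta}m$ coordinates. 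With this ``spreadness'', the agreement counts $\#\inset{j:(r_i)_j=z_j}$ are sums of bounded variables, independent across the columns $j$ and with pairwise behavior across the indices $i$ mirroring the uniform (RLC) case. In the idealized case where these columns were exactly uniform over their span the displayed probability would be $q^{-(L(1-h_q(\rho))+o(1))n}$, which together with the two union bounds forces precisely $R<1-h_q(\rho)-h_q(\rho)/L$; the theorem asserts that an estimate of the same shape, with a larger constant $K_{\rho,q}$, survives under mere spreadness, and the hypothesis $\frac{n}{\log_q n}\ge\omega_{n\to\infty}\inparen{q^{L+1}}$ is what renders the $o(n)$ error terms harmless, exactly as in \cref{thm:RLCSharp}.

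\textbf{The main obstacle.} The crux is the per-configuration, per-center estimate, which is delicate because spreadness is much weaker than uniformity: the distance of $\cD$ controls, near-optimally, how often a codeword coordinate \emph{vanishes}, but yields no lower bound and no control on the distribution of a codeword's nonzero values, so --- unlike for an RLC --- a punctured codeword can be heavily concentrated on a small set of values. A direct second-moment computation from the spreadness handles every $\rho$ bounded away from the capacity radius $1-\frac1q$, but it degrades as $\rho\to1-\frac1q$ (a cruder first-moment argument already stalls around radius $\frac12(1-\frac1q)$), and reaching capacity requires porting the finer, entropy-based counting of \cite{GHK} while feeding it only this one-sided information. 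This is where the stringent requirement $\eta=q^{-L+1}$ enters --- the spreadness slack must dominate the per-coordinate losses accumulated over the (up to $2^L$) patterns recording which codewords agree with the center at a coordinate --- and where we pay a weaker constant $K_{\rho,q}\le\poly(q)$ in place of \cite{GHK}'s $C_{\rho,q}$. Tracking the dependence of this estimate on the gap $1-\frac1q-\rho$ to the capacity radius, on the gap $\rho$ to $0$, and on $q$ then yields $K_{\rho,q}\le\exp{O\inparen{\frac {(\log q)^2}{\min\inset{(1-1/q-\rho)^2,\rho}}}}$, which is $\poly(q)$ when $\rho$ is bounded away from both $0$ and $1-\frac1q$.
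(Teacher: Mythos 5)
Your high-level plan — reduce to the internals of \cite{GHK} rather than invoking the threshold statement as a black box, and use the $\eta$-optimal distance of $\cD$ to supply the ``spreadness'' that uniformity supplies for an RLC — is the right instinct, and your observation that the generic $2\log_q 2$ loss must be avoided is correct. But the proposal has a genuine gap: the step you yourself label ``the crux'' (the per-configuration estimate under merely one-sided distance information) is exactly the missing content, and the route you sketch toward it does not obviously close. Your union bound is organized over message tuples $(y\sp1,\dotsc,y\sp L)$ \emph{and} over centers $z$, followed by an estimate of $\PR{\forall i:\weight(y\sp i\wh G-z)\le\rho}$; this is the Zyablov--Pinsker structure, which (as you note) stalls well before capacity, and ``porting the entropy-based counting of \cite{GHK}'' into it is not a step but a restatement of the problem. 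The paper organizes the argument differently: it never fixes a center in the union bound over the code's randomness. Instead it partitions the $(\rho,L+1)$-span-clustered matrices $A\in\F_q^{n\times b}$ (for $\log_q(L+1)\le b\le L+1$) by their empirical row distribution $\tau=\distr A$, uses \cref{thm:GHKTechnical} purely as a \emph{counting} statement to bound $H_q(\tau)$ for every span-clustered $\tau$ (\cref{cor:GHKEntropy}), and separately bounds $\E{\inabs{\{A\in\cM_{n,\tau}: A\subseteq\Lambda\cC\}}}$ via a KL-divergence tail bound (\cref{lem:ProbOfHavingGivenType}). The GHK internals are thus used off the shelf, not re-derived under weaker hypotheses.

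The second, and decisive, missing idea is how one-sided information is converted into two-sided control. Near-optimal distance upper-bounds the Fourier coefficients $\wh{\distr{By}}(a)$ for $y\ne 0$ but not their absolute values, so $\distr B$ itself need not be close to uniform — this is precisely the obstruction you identify but do not resolve. The paper resolves it by exploiting scalar-invariance of list-decodability: it replaces $\cC$ by $\Lambda\cC$ for a uniformly random full-rank diagonal $\Lambda$, which is distributed as a random puncturing of the scalar-expanded code $\cD^*$ (\cref{obs:LambdaCAsAPuncturing}). For $B^*$ the signed Fourier mass can be split into positive and negative parts and the one-sided bound bootstrapped into an $\ell_1$ bound $\sum_y|\wh\sigma(y)|\le 2(1+q^b\eta)$ (\cref{lem:SmoothDistribution}, part 2), at the cost of exactly the factor $2$ whose logarithm is the $\log_q 2$ term; this term is then absorbed not by large $q$ but by the additive slack $-3$ in \cref{cor:GHKEntropy}. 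Without this randomization (or an equivalent device), your ``spreadness'' of $MG$ gives no usable handle on the joint distribution of the punctured rows, and the estimate you need does not follow. Finally, your appeal to \cref{lem:ActualVsDesignRate} is harmless but unnecessary — the argument works with the design rate throughout via a union bound over codewords of $\cD$ — and the claim that the idealized uniform case ``forces precisely $R<1-h_q(\rho)-h_q(\rho)/L$'' reflects the weaker first-moment bound rather than the $K_{\rho,q}/L$ gap the theorem achieves.
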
	

\subsection{Theorem \ref{thm:IntroStochastic}: Randomly punctured codes in the stochastic error model}
\begin{definition}[Additive noise channel]
	Let $\nu$ be a distribution over $\F_q$. The \deffont{$\nu$-memoryless additive noise channel} with distribution $\nu$ takes as input a vector $x\in \F_q^n$ and outputs the vector $x+z$, where $z\in \F_q^n$ has entries independently sampled from $\nu$. 
\end{definition}

\noindent For $z\in \F_q^n$, we write $\nu(z) = \prod_{i=1}^n \nu(z_i)$ for the probability of $z$ under the product distribution $\nu^n$.	

The \deffont{capacity} of the $\nu$-memoryless additive noise channel is $1-H_q(\nu)$. Here, $H_q(\nu)$ is the \deffont{base-$q$ entropy}
$$H_q(\nu) = -\sum_{x\in \supp(\nu)}\nu(x)\log_q \nu(x)\enspace. $$

The \deffont{maximum likelihood decoder under uniform prior (MLDU)} for a code $\cC\subseteq \F_q^n$ with regard to the $\nu$-memoryless additive noise channel receives a word $y\in \F_q^n$ and returns a codeword $x\in \cC$ for which $\nu(y-x)$ is maximal. In other words, $x$ maximizes the likelihood of the channel outputting $y$ given input $x$.

It is well-known that an RLC with MLDU decoding achieves capacity with regard to any memoryless additive noise channel. Theorem \ref{thm:IntroStochastic}, stated here formally as Theorem \ref{thm:MainStochastic}, generalizes this fact to random puncturings of low-bias codes.

\begin{restatable}[A puncturing of a low-bias code achieves capacity with regard to memoryless additive noise]{ourtheorem}{MainStochastic}\label{thm:MainStochastic}
	Fix a prime power $q$, a distribution $\nu$ over $\F_q$ and $0 < \eps < 1$. Let $\cD\subseteq \F_q^m$ be an $\frac{\eps}{3(q-1)}$-biased linear code. Let $\cC$ be a random $n$-puncturing of $\cD$ of design rate $R\le 1 - H_q(\nu) - \eps$. Then, with probability $1-q^{-\Omega_\nu\inparen{\eps n}}$, it holds for all $x\in \cC$ that
	\begin{equation}\label{eq:ProbOfCorrectDecoding}
		\PROver{z\sim \nu^n}{\textnormal{the MLDU outputs $x$ on input $x+z$}} \ge 1 - 2q^{-c_\nu \eps^2 n}\enspace,
	\end{equation}
	for some $c_\nu > 0$ that depends only on $\nu$.
\end{restatable}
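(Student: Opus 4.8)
The plan is to adapt the standard argument that a random linear code achieves Shannon capacity under maximum-likelihood decoding. The only new ingredient is the observation that a codeword of $\cC$ has the form $\varphi(w)$ for $w$ in the mother code, so its $n$ coordinates are i.i.d.\ samples from the empirical distribution $\mu_w$ of the entries of $w$; the low-bias hypothesis is precisely the statement that $\mu_w$ is close, in $\ell_\infty$, to the uniform distribution on $\F_q$, which is all the argument asks of it.

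I would first reduce to a purely combinatorial statement. Since $\cD$ is linear and puncturing is an $\F_q$-linear map, $\cC=\varphi(\cD)$ is linear; since the channel is additive and the MLDU is translation-invariant, transmitting $x\in\cC$ and receiving $x+z$ (with $z\sim\nu^n$) causes a decoding failure only if $\nu(z-c)\ge\nu(z)$ for some $c\in\cC\setminus\{0\}$, because $\nu\bigl((x+z)-x'\bigr)=\nu\bigl(z-(x'-x)\bigr)$ and $x'-x$ ranges over $\cC$ as $x'$ does. In particular the failure probability is the same for every $x$, and it suffices to show that with probability $1-q^{-\Omega(\eps n)}$ over $\cC$ the quantity $\mathrm{Fail}(\cC):=\PROver{z\sim\nu^n}{\exists\,c\in\cC\setminus\{0\}\,:\,\nu(z-c)\ge\nu(z)}$ is at most $2q^{-c_\nu\eps^2 n}$.

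Next I would separate a typicality event from a union bound. Set $\delta=\eps/4$, let $S=\{v\in\F_q^n:\nu(v)\ge q^{-n(H_q(\nu)+\delta)}\}$ so that $|S|\le q^{n(H_q(\nu)+\delta)}$, and let $T_\delta$ be this same set, regarded as the set of typical noise vectors. Because $-\log_q\nu(z)=\sum_{j=1}^n\bigl(-\log_q\nu(z_j)\bigr)$ is a sum of i.i.d.\ variables with mean $H_q(\nu)$ bounded by the $\nu$-dependent constant $-\log_q\min_{u\in\supp(\nu)}\nu(u)$, Hoeffding's inequality gives $\PROver{z}{z\notin T_\delta}\le q^{-c_\nu\eps^2 n}$; crucially, this term does not involve $\cC$. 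Using $\mathrm{Fail}(\cC)\le\PROver{z}{z\notin T_\delta}+X(\cC)$ with $X(\cC):=\sum_{c\in\cC\setminus\{0\}}\PROver{z}{z\in T_\delta,\ \nu(z-c)\ge\nu(z)}$, it remains to bound $X(\cC)$ in expectation over $\cC$. Every nonzero codeword of $\cC$ equals $\varphi(w)$ for some $w\in\cD\setminus\{0\}$, so $X(\cC)\le\sum_{w\in\cD\setminus\{0\}}\PROver{z}{z\in T_\delta,\ \nu(z-\varphi(w))\ge\nu(z)}$; for fixed $w$, $\varphi(w)$ is distributed as $\mu_w^{\otimes n}$, and the $\bigl(\tfrac{\eps}{3(q-1)}\bigr)$-bias hypothesis gives, via Fourier inversion, $\mu_w(u)\le\tfrac1q\bigl(1+\tfrac\eps3\bigr)$ for all $u$, so every atom of $\mu_w^{\otimes n}$ has mass at most $q^{-n}(1+\eps/3)^n$. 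On the relevant event $z\in T_\delta$ forces $z-\varphi(w)\in S$, so integrating over $z$ first gives $\Eover{\varphi}{\PROver{z}{z\in T_\delta,\ \nu(z-\varphi(w))\ge\nu(z)}}\le\PROver{c\sim\mu_w^{\otimes n},\,z}{z-c\in S}\le|S|\cdot q^{-n}(1+\eps/3)^n$. Summing over the fewer than $q^{Rn}$ choices of $w$ and using $R\le 1-H_q(\nu)-\eps$, $\delta=\eps/4$, and $\log_q(1+\eps/3)\le\tfrac{\eps}{3\ln 2}<\tfrac\eps2$ yields $\Eover{\cC}{X(\cC)}\le q^{n(R+H_q(\nu)+\delta-1+\log_q(1+\eps/3))}\le q^{-\eps n/4}$.

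Finally, Markov's inequality gives $\PROver{\cC}{X(\cC)>q^{-\eps n/8}}\le q^{-\eps n/8}$, so with probability $1-q^{-\Omega(\eps n)}$ over $\cC$ we have $\mathrm{Fail}(\cC)\le q^{-c_\nu\eps^2 n}+q^{-\eps n/8}\le 2q^{-c'_\nu\eps^2 n}$ for $\eps\le 1$, which is the claim after renaming the constant. The delicate point is the first-moment step: large minimum distance of $\cD$ alone does not control the pointwise symbol frequencies within a single codeword $w$, whereas low bias does, giving $\mu_w(u)\le q^{-1}(1+O(\eps))$; the overhead $(1+\eps/3)^n$ that this costs is exactly what the capacity slack $\eps$ must absorb, and balancing it against the typicality parameter $\delta$ and the Markov threshold is the only place where the precise constant $\tfrac{1}{3(q-1)}$ is used. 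Minor issues — ties in the MLDU, words $w$ with $\varphi(w)=0$, and noise patterns $z$ with $z-c\notin\supp(\nu)^n$ — only help and are absorbed with room to spare.
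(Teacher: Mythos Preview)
Your proposal is correct and follows essentially the same argument as the paper. Both proofs reduce by linearity to the zero codeword, invoke the typical set for $\nu^n$ to get a $q^{-c_\nu\eps^2 n}$ atypicality term, and then use the low-bias hypothesis to bound each symbol frequency of a nonzero mother codeword by $\tfrac1q(1+(q-1)\eta)$, combining this with a first-moment/union bound and Markov. The only cosmetic difference is the order of summation: the paper fixes a typical $z$, counts the at most $1/\nu(z)\le q^{(H_q(\nu)+\eps/3)n}$ vectors $x$ with $\nu(z-x)\ge\nu(z)$, and bounds $\PR{x\in\cC}$ via its \cref{lem:ProbInLowBiasCode}; you instead fix $w\in\cD$ and bound $\PROver{z,\varphi}{z-\varphi(w)\in S}$ directly. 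These are the same double sum in different orders, and the resulting exponents match.
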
 
We note that the code property of being \emph{decodable with regard to the $\nu$-memoryless additive noise channel}, where $\nu$ is a generic distribution, is not local in the sense of Definition \ref{def:LocalRowSymmetric}. Consequently,
Theorem \ref{thm:MainStochastic} is proven by more direct means, and not as a consequence of the framework described in Section \ref{sec:PropertyFramework}. It is an interesting question whether this framework can be extended to also apply to properties such as decodability and list-decodability in stochastic settings.

\subsection{Organization}
The rest of the paper is organized as follows. In Section \ref{sec:Overview} we demonstrate our techniques by sketching a proof for a weaker version of  Corollary \ref{cor:MainLowBiasLD}. Section \ref{sec:Preliminaries} establishes some general definitions and lemmas used in the main proofs. In Section \ref{sec:PropertySetting} we give more details on the code property framework and discuss the \emph{RLC Threshold Theorem}\cite{MRRSW}. In Section \ref{sec:SimilarityToRLC} we use the above framework to prove Theorems \ref{thm:MainLowBiasProperties} and \ref{thm:MainLargeDistanceProperties} about punctured codes that are locally-similar to an RLC. 
In Section \ref{sec:ProofOfMainReductToGHK} we prove Theorem \ref{thm:MainReduceToGHK} about list-decodability of randomly punctured codes, based on the result of \cite{GHK}. In Section \ref{sec:Derandomization} we prove Theorem \ref{thm:DerandomizationAllProperties}, dealing with derandomization of RLCs. Finally, Theorem \ref{thm:MainStochastic} about punctured codes in the stochastic error model is proved in Section \ref{sec:stochastic}.

\section{Technical warmup}\label{sec:Overview}
For the sake of exposition, we begin by proving Theorem \ref{thm:WeakPuncturing}---a weaker version of Corollary \ref{cor:MainLowBiasLD}.  Theorem \ref{thm:WeakPuncturing} showcases the techniques by which we prove our main results in a simplified setting. Rather than reducing from state of the art results about RLC list-decodability, Theorem \ref{thm:WeakPuncturing} is proven directly, resulting in worse bounds on the list-size. For simplicity, we restrict ourselves to the binary regime. 

\begin{ourtheorem}\label{thm:WeakPuncturing}
	Let $\rho \in \inparen{0,\frac 12}$ and $L\in \N$. Then, there exist $\eta(L) > 0$ and $\eps(L) > 0$ with $\eps(L)\xrightarrow[L\to \infty]{} 0$, such that the following holds.
	Let $\cD\subseteq \F_2^m$ be a linear $\eta$-biased code, and let $\cC$ be a random $n$-puncturing of $\cD$ of design rate $R \le 1-h_2(\rho)-\eps$. Then $\cC$ is $\LD \rho L$ with high probability as $n\to \infty$.
\end{ourtheorem}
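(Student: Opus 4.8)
The plan is to prove directly, by a first-moment argument, that a random puncturing of a low-bias code is very unlikely to contain a bad $(\rho,L)$-configuration, following the standard template for RLC list-decodability but tracking how the low-bias hypothesis substitutes for genuine independence of the generator columns. First I would reformulate: if $\cC$ fails to be $\LD{\rho}{L}$, there are $L+1$ codewords of $\cC$ lying in a common Hamming ball of radius $\rho n$. Since $\cC$ is the row-span of a matrix $A$ whose columns $c_1,\dots,c_n$ are sampled i.i.d.\ uniformly from $\cD^\perp$'s ambient description — more precisely $\cC = \{(\langle x, c_1\rangle,\dots,\langle x, c_n\rangle) : x\in\F_2^k\}$ where $(c_1,\dots,c_n)$ are i.i.d.\ uniform coordinates of $\cD$ — a bad configuration corresponds to a choice of $L$ nonzero messages $x_1,\dots,x_L\in\F_2^k$ (together with a center, which we can take to be the codeword of a further message $x_0$, or absorb by translation) such that the resulting $L$ codewords are simultaneously $\rho$-close to a common center. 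The key point, exactly as in the RLC analysis, is that for a fixed tuple of messages the coordinates are independent, so the probability that all $L$ codewords land in a fixed ball factors across the $n$ coordinates; one then union-bounds over messages and over the $2^n$ centers (or, better, over centers realized as codewords, to keep the count at $2^{kn}$-type scale).

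The new ingredient is that in a random puncturing the per-coordinate distribution of the relevant tuple $(\langle x_1,c\rangle,\dots,\langle x_L,c\rangle)$ for $c$ uniform in $\cD$ is \emph{not} the uniform distribution on $\F_2^L$ — it is some distribution $\mu_{x_1,\dots,x_L}$ on $\F_2^L$ that depends on the linear-algebraic relations among $x_1,\dots,x_L$ (and the center). Here is where $\eta$-bias enters: for any fixed nonzero $a\in\F_2^L$, the bias of the $\F_2$-valued random variable $\langle \sum_i a_i x_i, c\rangle$ is exactly the bias of the codeword of $\cD$ indexed by $\sum_i a_i x_i$, hence at most $\eta$ provided $\sum_i a_i x_i\neq 0$. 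By Fourier expansion on $\F_2^L$, this means $\mu_{x_1,\dots,x_L}$ is within $\ell_\infty$-distance $O(2^L\eta)$ (or $\ell_1$-distance $O(2^{2L}\eta/2)$) of the uniform distribution on $\F_2^L$, \emph{as long as $x_1,\dots,x_L$ are linearly independent over $\F_2$}; more generally, the bias of $\mu$ along $a$ is $0$ when $\sum a_i x_i=0$ and $\le\eta$ otherwise. So for the ``generic'' messages — those forming a set whose linear-dependence pattern matches that of uniformly random vectors, which is the overwhelming majority — the per-coordinate behaviour is $\eta$-close to the RLC one, and choosing $\eta=\eta(L)$ small enough (polynomially small in $2^{-L}$) makes the resulting probability bound agree with the RLC first-moment bound up to a $2^{o(n)}$ factor. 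The non-generic messages (those with ``extra'' linear dependencies) are handled separately: there are far fewer of them, so even the trivial bound on their contribution is dominated.

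The technical heart of the calculation is the exponential balance. For a fixed center $z$ and fixed linearly independent $x_1,\dots,x_L$, the probability that all $L$ codewords are $\rho$-close to $z$ is $\left(\sum_{S\subseteq[L]} \mu(S\text{-pattern}) \right)^n$-type expression which, after using the $\eta$-closeness to uniform, is at most $\left(2^{-L}\sum_{\text{valid patterns}} 1 + O(2^L\eta)\right)^n$; the sum over valid patterns counts, for each coordinate, the number of ways all $L$ entries agree-or-not with $z$ consistently with a $\rho$-fraction of disagreements, and collapses to essentially $\left(\text{vol of radius-}\rho\text{ ball}\right)^L$ behaviour. Multiplying by the number of message tuples ($\le 2^{kLn}$... wait, $\le 2^{kL}$, with $k=Rn$, i.e.\ $2^{RLn}$) and by the number of centers ($\le 2^n$, or $2^{Rn}$ if centers are codewords), one needs
\[
2^{Rn}\cdot 2^{RLn}\cdot\left(2^{-L(1-h_2(\rho))} + O(2^L\eta)\right)^n = 2^{-\Omega(n)},
\]
which holds precisely when $R < \frac{L}{L+1}\bigl(1-h_2(\rho)\bigr) - O(2^L\eta)$ — this is the weaker, $\frac{L}{L+1}$-type threshold reflected in the $\eps(L)\to 0$ in the statement. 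The main obstacle, and the step I would spend the most care on, is making the Fourier/bias argument uniform over \emph{all} relevant linear-dependence patterns of the tuple $(x_0,x_1,\dots,x_L)$ — i.e.\ correctly accounting for which characters $a$ annihilate the tuple (giving bias $0$, hence a genuine constraint that must be respected) versus which do not (giving bias $\le\eta$) — and then checking that the "degenerate" tuples, where many characters vanish, are both rare enough and individually harmless enough that their total first-moment contribution is still $2^{-\Omega(n)}$. Everything else is the routine entropy bookkeeping that already appears in the RLC list-decoding literature.
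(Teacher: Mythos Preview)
Your approach is essentially the paper's---a first-moment bound where the Vazirani XOR/Fourier lemma replaces genuine uniformity---but the paper makes one simplification that you miss, and one of your claims is incorrect.

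The paper does \emph{not} union-bound over all $L$-tuples of messages and then case-split on their rank. Instead, it observes (the Zyablov--Pinsker trick) that any $\rho$-clustered set of $L+1$ codewords contains $b := \lceil\log_2(L+1)\rceil$ linearly independent codewords which are themselves $\rho$-clustered. So it works only with $b$-tuples $u_1,\dots,u_b\in\cD$ that are linearly independent, applies the XOR lemma once (on $\F_2^b$, not $\F_2^L$), and never needs to handle ``degenerate'' dependence patterns at all. The union bound is then over $\le |\cD|^b = 2^{bRn}$ tuples and the calculation goes through with $\eps \approx 2/b \approx 2/\log_2 L$.

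Your claim that the low-rank tuples ``are far fewer of them, so even the trivial bound on their contribution is dominated'' is wrong, and in fact inverts the truth. If $x_1,\dots,x_L$ have rank exactly $b$, there are $\approx 2^{bRn}$ such tuples (up to constants depending on $L$), and no probability bound better than the one coming from the $b$ independent coordinates is available; redoing your Fourier argument on $\F_2^b$ gives a contribution $\approx 2^{n(1 - b(1-h_2(\rho)-R))}$, which forces $R < 1-h_2(\rho)-1/b$. The binding constraint is therefore the \emph{smallest} admissible rank $b=\lceil\log_2(L+1)\rceil$, not the generic full-rank case, and the resulting gap is $\eps \sim 1/\log L$, not the $\sim 1/L$ you compute from the $b=L$ term alone. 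Your plan can be repaired by running the same Fourier bound for each rank $b$, but at that point you have reproduced exactly the paper's argument (once per $b$) and might as well just keep the smallest $b$, which is what the paper does.
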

\begin{proof}[Proof sketch]
	Let $\varphi : \F_2^m\to \F_2^n$ be the random puncturing map by which $\cC$ is generated from $\cD$. 
	Write $b = \lceil \log_2 (L+1) \rceil$. Now any set of $L+1$ vectors in $\F_2^n$ must contain a subset of $b$ linearly-independent vectors. In particular, for $\cC$ to contain a $\rho$-clustered set of size $L+1$, it must contain a $\rho$-clustered set of $b$ linearly-independent vectors (this argument originated in \cite{ZyablovP81}). Thus, the probability, taken over the random puncturing $\varphi$, that $\cC\text{ is }\textbf{not} \LD\rho L$ is at most
	\begin{align*}	
		& \PR{\exists v_1,\dotsc, v_b\in \cC \text{ which are }\rho\text{-clustered and linearly-independent}} \\
		& = \PR{\exists u_1,\dotsc, u_b\in \cD \text{ s.t.\ }\varphi(u_1),\dotsc,\varphi(u_b) \text{ are }\rho\text{-clustered and linearly-independent}} \\ 
		&\le \PR{\exists u_1,\dotsc, u_b\in \cD \text{ which are linearly independent, s.t.\ }\varphi(u_1),\dotsc,\varphi(u_b) \text{ are }\rho\text{-clustered}}\\ 
		& \le \sum_{\substack{u_1,\dotsc u_b\in \cD\\\text{linearly independent}}}\PR{\varphi(u_1),\dotsc, \varphi(u_b)\text{ are $\rho$-clustered}}\enspace,
	\end{align*}
	where the penultimate inequality is because linear-independence of $u_1,\dotsc, u_b$ is a necessary condition for linear-independence of $\varphi(u_1),\dotsc, \varphi(u_b)$. The sum in the final line has at most $\inabs{\cD}^{b} = 2^{bRn}$ terms, so it suffices to show that 
	\begin{equation}\label{eq:WeakPuncturingTheorem1}
		\PR{\varphi(u_1),\dotsc, \varphi(u_b)\text{ are $\rho$-clustered}} \le 2^{-bRn-\omega(1)}
	\end{equation}
	whenever $u_1,\dotsc, u_b\in \cD$ are linearly independent.
	
	Let $B\in \F_2^{m\times b}$ be the matrix whose columns are $u_1,\dotsc, u_b$, and let $\sigma$ denote the distribution, over $\F_2^b$, of a uniformly random row of $B$. Let $A\in \F_2^{n\times b}$ be the matrix whose columns are $\varphi(u_1),\dotsc, \varphi(u_b)$. A crucial observation is that $A$ is a random matrix whose rows are sampled independently from $\sigma$. At this point, if $\sigma$ were the uniform distribution over $\F_2^b$, we would be done. Indeed, $\sigma$ being uniform means that the columns of $A$, call them $c_1, c_2, \dots,c_b$, are sampled independently and uniformly from $\F_2^n$. This establishes  Eq.\ \eqref{eq:WeakPuncturingTheorem1} since
	\begin{align}
		\PROver{c_1,\dotsc, c_b\sim \uniform(\F_2^n)}{c_1,\dotsc, c_b\text{ are $\rho$-clustered}} &\le \sum_{z\in \F_2^n} \sum_{y_1,\dots,y_b \in B(z,\rho n)} \PROver{c_1,\dotsc, c_b\sim \uniform(\F_2^n)}{\bigwedge_{i=1}^b (c_i = y_i)}  \nonumber  \\
		& = \sum_{z\in \F_2^n} \sum_{y_1,\dots,y_b \in B(z,\rho n)} (2^{-b})^n \label{eq:uniform-to-eta-switch} \\ 
		&\le  \sum_{z\in \F_2^n} 2^{b h_2(\rho) n} (2^{-b})^n  =2^{n\inparen{bh_2(\rho)-b+1}} \nonumber \\ 
		& \le 2^{-bRn-n}\enspace, \label{eq:R-using-eps} 
	\end{align}
	where $B(z,\rho n)$ denotes the Hamming ball of radius $\rho n$ around $z$, and the last inequality Eq.\ \eqref{eq:R-using-eps} holds for, say, $\eps = \frac 2b$. Note that $\eps \le O\inparen{\frac 1{\log L}}$.
	
	We now use a certain formulation of the \emph{Vazirani XOR-Lemma} (see, e.g., \cite{3XOR}) to show that $\sigma$ is in fact arbitrarily close to the uniform distribution over $\F_2^b$. This allows us to finish the theorem by extending the above argument from uniform $\sigma$ to almost-uniform $\sigma$.  
	\begin{lemma}[Vazirani XOR-Lemma]\label{lem:VaziraniOriginal}
		Let $\sigma$ be a distribution over $\F_2^b$ such that for every $y\in \F_q^b\setminus \{0\}$, we have 
		$\tfrac {1-\eta} 2\le \PROver{x\sim \sigma}{\ip xy = 1} \le \tfrac{1+\eta}2$. Then, $\sigma$ is $\inparen{2^b\cdot \eta}$-close in \emph{total-variation distance} to the uniform distribution over $\F_2^b$.
	\end{lemma}
	In our case, $\PROver{x\sim \sigma}{\ip xy = 1} = \weight(By)$. Since the columns of $B$ belong to $\cD$ and are linearly-independent, $By$ is a non-zero codeword of $\cD$. Our assumption about $\cD$ having small bias means that $\weight(By)$ is very close to $\frac 12$, so the hypothesis of Lemma \ref{lem:VaziraniOriginal} is satisfied. Thus, in the above calculation the rows of $A$ are sampled i.i.d from a distribution $\sigma \sim \F_2^b$ which has statistical distance at most $2^b \eta$ from uniform. Therefore, we can replace the $2^{-b}$ term  in Eq.\ \eqref{eq:uniform-to-eta-switch} by an upper bound $(2^{-b} + 2^b \eta)$. By taking $\eta$ small enough, say at most $2^{-2b}$, the bound in Eq.\ \eqref{eq:R-using-eps} remains valid by slightly adjusting parameters (e.g., taking $\eps = \tfrac{3}{b}$). 
\end{proof}

\section{Preliminaries}\label{sec:Preliminaries}
\subsection{General notation}
We denote the uniform distribution over a finite nonempty set $S$ by $\uniform(S)$. 

For $a,b\in \R$, we denote $\expOver ab = a^b$.

The constants implied by asymptotic notation are universal unless stated otherwise. To indicate that the hidden constant may depend on, e.g., for the parameter $p$, we write ``$O_p(\cdot)$''.

If $A\in \F_q^{m\times b}$ and $\cC\subseteq \F_q^m$, we write $A\subseteq \cC$ to mean that each column of $A$ is a codeword in $\cC$. Given a puncturing map $\varphi:\F_q^m\to \F_q^n$, let $\varphi(A)$ denote the matrix obtained from $A$ by applying $\varphi$ to each column. 

\subsection{A characterization of list-decodable linear codes}
Recall the notion of a $\rho$-clustered set (Definition \ref{def:LD}.)
\begin{definition}
	Fix $\rho \in [0,1]$, $L\in \N$. A matrix $A\in \F_q^{n\times b}$ ($1\le b\le L$) with $\rank A = b$ is \emph{$(\rho,L)$-span-clustered} if the column-span of $A$ contains a $\rho$-clustered set of size $L$. 
\end{definition}
Note that for a linear code $\cC \subseteq \F_q^n$ we have
\begin{equation}\label{eq:SpanClusteredCharacterization}
	\cC\text{ is \emph{not} }\LD \rho L \iff \exists A\in \F_q^{n\times b} \text{such that $A$ is $(\rho,L+1)$-span-clustered and }A\subseteq \cC\enspace .
\end{equation} 
Furthermore, we can always take $b$ to be in the range $[\log_q (L+1),L+1]$. Indeed, a matrix of rank smaller than $\log_q (L+1)$ cannot be $(\rho,(L+1))$-span-clustered since its span has cardinality smaller than $L+1$. On the other hand, a rank larger than $L+1$ is never needed since, given a $\rho$-clustered set $W\subseteq \cC$ with $|W|=L+1$, one can take $A$ to be a matrix whose columns are a maximal linearly independent subset of $W$.	

\subsection{The scalar-multiplied code $\Lambda \cC$ and scalar-expanded code $\cD^*$}\label{sec:ScalarExpandedCode}
Let 
$$\diagonals_n = \inset{\Lambda \in \F_q^{n\times n}\mid \Lambda\text{ is diagonal and of full-rank}}\enspace . $$
Recall that, for $\Lambda \in \diagonals_n$ and a code $\cC\subseteq \F_q^n$, we denote $\Lambda\cC = \inset{\Lambda u\mid u\in \cC}$ (Definition \ref{def:ScalarInvariant}). If $\cP$ is a scalar-invariant property (such as list-decodability), the question of whether a code $\cC$ satisfies $\cP$  reduces to that of any code of the form $\Lambda \cC$. To take advantage of this reduction, we shall focus on the random code $\Lambda \cC$ where $\Lambda\sim \uniform(\diagonals_n)$. If $\cC$ is a random puncturing of some code $\cD$, we can realize the code $\Lambda\cC$ as a puncturing of the code $\cD^*$, which we now define.	

\begin{definition}\label{def:ScalarExpansion}
	Given $u\in \F_q^m$, let $u^*\in \F_q^{m(q-1)}$ denote the vector $$u^* = \mybigcirc_{a\in \F_q^*} \ (au)\enspace , $$
	where $\mybigcirc$ stands for concatenation of vectors. Given a matrix $B\in \F_q^{m\times b}$ with columns $a_1,\dotsc, a_b$, let $B^*\in \F_q^{m(q-1)\times b}$ be the matrix whose columns are $a_1^*,\dotsc, a_b^*$. 
	Denote $\cD^* = \{u^*\mid u\in \cD\}\subseteq \F_q^{m(q-1)}$. 
\end{definition}

\begin{observation}\label{obs:LambdaCAsAPuncturing}
	The code $\Lambda \cC$, where $\Lambda \sim \uniform(\diagonals_n)$ and $\cC$ is a random $n$-puncturing of $\cD$, is distributed identically to a random $n$-puncturing of $\cD^*$.
\end{observation}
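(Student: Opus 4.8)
The plan is to unwind both definitions and exhibit an explicit coupling of the two random experiments. First I would record the two identifications that make everything line up. On one side, $\diagonals_n$ is in bijection with $(\F_q^*)^n$ via $\Lambda \mapsto (\lambda_1, \dots, \lambda_n)$ (the diagonal entries), so $\Lambda \sim \uniform(\diagonals_n)$ is the same as sampling $\lambda_1, \dots, \lambda_n$ i.i.d.\ uniformly from $\F_q^*$. On the other side, by \cref{def:ScalarExpansion} the $m(q-1)$ coordinates of $\cD^*$ are naturally indexed by $\F_q^* \times [m]$, with the coordinate of $u^*$ labelled $(a,i)$ equal to $a u_i$; I would fix once and for all the bijection $[m(q-1)] \leftrightarrow \F_q^* \times [m]$ induced by the concatenation order in that definition.

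Next I would observe that, under this identification, a random $(m(q-1)\to n)$ puncturing map is exactly a uniform i.i.d.\ sample $(\lambda_1, i_1), \dots, (\lambda_n, i_n) \in \F_q^* \times [m]$, and---since the uniform distribution on a product set is the product of the uniform distributions---this is the same as independently sampling $\lambda_1, \dots, \lambda_n$ i.i.d.\ uniformly from $\F_q^*$ and $i_1, \dots, i_n$ i.i.d.\ uniformly from $[m]$. Now I would run a single experiment: draw such a joint sample, let $\varphi$ be the $(m\to n)$ puncturing map determined by $i_1, \dots, i_n$, let $\Lambda = \diag(\lambda_1, \dots, \lambda_n)$, and set $\cC = \varphi(\cD)$. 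By construction $\cC$ is a random $n$-puncturing of $\cD$ and $\Lambda \sim \uniform(\diagonals_n)$ is independent of the puncturing indices---exactly the setup in the statement---and one computes $\Lambda\cC = \{(\lambda_1 u_{i_1}, \dots, \lambda_n u_{i_n}) \mid u \in \cD\}$. On the other hand, letting $\psi$ be the $(m(q-1)\to n)$ puncturing map given by the indices $(\lambda_t, i_t)$, the coordinate computation $(u^*)_{(\lambda_t, i_t)} = \lambda_t u_{i_t}$ gives $\psi(\cD^*) = \{(\lambda_1 u_{i_1}, \dots, \lambda_n u_{i_n}) \mid u \in \cD\}$, the very same subset of $\F_q^n$. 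Since $\psi$ is a random $(m(q-1)\to n)$ puncturing map, $\psi(\cD^*)$ is distributed as a random $n$-puncturing of $\cD^*$; as it equals $\Lambda\cC$ pointwise in this coupling, the two distributions coincide.

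I do not expect any real obstacle: the statement is essentially a bookkeeping identity. The only point requiring care is keeping the bijection $[m(q-1)] \leftrightarrow \F_q^* \times [m]$ consistent with the concatenation order of \cref{def:ScalarExpansion}, so that the coordinate of $u^*$ indexed by $(a,i)$ is genuinely $a u_i$; once any fixed such bijection is chosen this is automatic, and nothing in the argument depends on which bijection it is, since only the uniform measure on the index set ever enters.
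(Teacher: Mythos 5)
Your proposal is correct, and it is exactly the coupling the paper has in mind: the paper states this as an immediate observation without proof, and your identification of a uniform index in $[m(q-1)]\cong\F_q^*\times[m]$ with an independent pair (scalar, coordinate) is the intended argument.
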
	

\subsection{Fourier transform}\label{sec:fourier}
We recall the following elementary facts about the Fourier transform\footnote{Our convention is to use counting norm for $f$ and expectation norm for $\hat f$.} of a function $f: \F_q^b\to \C$.
\begin{definition}[Fourier (and inverse Fourier) transform]\label{def:fourier}
	Suppose that $q = p^r$ for some prime $p$, and let $\omega = e^{\frac{2\pi i}p}$. Let $b\in \N$ and let $f: \F_q^b\to \C$. Then $\wh f: \F_q^b \to \C$ is defined by
	$$\wh f(y) = \sum_{x\in \F_q^b} f(x) \cdot \overline{\chi_y(x)}, \quad \text{where} \quad
	\chi_y(x) = \omega^{\tr\inparen{\ip xy}}\enspace . $$	
	Here, $\tr:\F_q\to \F_p$ stands for the field trace function $\tr(x) = \sum_{i=0}^{r-1} x^{p^i}$. We also have the Fourier inversion formula:
	$$f(x)  = q^{-b} \sum_{y \in \F_q^b} \wh f(y) \chi_y(x)  \enspace. $$
\end{definition}

\begin{fact}[Parseval's identity]\label{fact:Parseval}
	Let $f,g:\F_q^b\to \C$. Then, 
	$$\sum_{x\in \F_q^b} f(x)\overline{g(x)} = \Eover{y\sim \uniform(\F_q^b)}{ \wh f(y)\overline{\wh g(y)}}\enspace . $$
	In particular,
	$\sum_{x\in \F_q^b} \inabs{f(x)}^2 = \Eover{y\sim \uniform(\F_q^b)}{ \inabs{\wh f(y)}^2}$.
\end{fact}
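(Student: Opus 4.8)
The plan is to prove the identity by expanding the right-hand side, using the orthogonality of the characters $\chi_y$, and then to obtain the ``in particular'' clause by specializing $g = f$. The only ingredient that is not a one-line manipulation is the character orthogonality relation, which rests on the non-degeneracy of the trace form on $\F_q$.

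First I would record the orthogonality relation: for every $z \in \F_q^b$,
\[
\Eover{y\sim\uniform(\F_q^b)}{\omega^{\tr\ip zy}} = \ind{z = 0}\enspace.
\]
To see this, note that $y \mapsto \tr\ip zy$ is an $\F_p$-linear map $\F_q^b \to \F_p$. If $z = 0$ it is identically $0$ and the expectation is $1$. If $z \ne 0$, pick a coordinate $i$ with $z_i \ne 0$; since the trace form $(\alpha,\beta)\mapsto \tr(\alpha\beta)$ on $\F_q$ is non-degenerate (equivalently, $\tr:\F_q\to\F_p$ is not the zero map and $\F_q^*$ acts transitively on $\F_q\setminus\{0\}$), there is $\beta$ with $\tr(z_i\beta)\ne 0$, so the map is not identically zero. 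Its image, being a nonzero subgroup of the prime-order group $(\F_p,+)$, is all of $\F_p$; hence every fibre has size $q^b/p$, and since $\sum_{t\in\F_p}\omega^t = 0$ the expectation vanishes.

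Next I would substitute the definitions of $\wh f$ and $\wh g$, use $\overline{\chi_y(x)}\,\chi_y(x') = \omega^{\tr\ip{x'-x}{y}}$, swap the (finite) sums with the expectation, and apply the relation above with $z = x'-x$:
\[
\Eover{y\sim\uniform(\F_q^b)}{\wh f(y)\,\overline{\wh g(y)}}
= \sum_{x,x' \in \F_q^b} f(x)\,\overline{g(x')}\,\Eover{y\sim\uniform(\F_q^b)}{\omega^{\tr\ip{x'-x}{y}}}
= \sum_{x \in \F_q^b} f(x)\,\overline{g(x)}\enspace,
\]
which is the claimed identity. Taking $g = f$ and using $\inabs{f(x)}^2 = f(x)\overline{f(x)}$ and $\inabs{\wh f(y)}^2 = \wh f(y)\overline{\wh f(y)}$ yields the ``in particular'' statement.

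I do not expect any real obstacle: the argument is a short computation once character orthogonality is available. The one spot deserving care is justifying that $y \mapsto \tr\ip zy$ is \emph{surjective} onto $\F_p$ (not merely nonzero) when $z\ne 0$, which is what makes the fibres equinumerous; this is immediate since the image is a subgroup of a group of prime order. If one prefers, the same conclusion about $\tr:\F_q\to\F_p$ can be invoked as a standard fact about finite fields rather than reproved.
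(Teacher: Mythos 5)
Your proof is correct: the expansion of $\Eover{y}{\wh f(y)\overline{\wh g(y)}}$ via character orthogonality, with the orthogonality relation justified through non-degeneracy of the trace form and surjectivity of a nonzero $\F_p$-linear functional onto the prime-order group $\F_p$, is the standard argument, and the paper itself states this as a known fact without proof.
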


\subsection{Entropy and KL-divergence}\label{sec:entropy}
Given $x\in [0,1]$, we write
$$h_q(x) = -x\log_q x - (1-x)\log_q (1-x) + x \log_{q} (q-1)$$
for the base-$q$ entropy of a random variable over $\{0,\dotsc, q-1\}$, which takes $0$ with probability $1-x$ and each $i\in \{1,\dotsc, q-1\}$ with probability $\frac{x}{q-1}$. 

The \deffont{$q$-ary Kullback-Leibler Divergence} of two distributions $\tau,\sigma$ over a finite set $S$ is
$$\DKL \tau \sigma q = \sum_{s\in S}\tau(s) \log_q\frac{\tau (s)}{\sigma(s)}\enspace . $$

\subsection{The empirical distribution of the rows of a matrix}

\begin{definition}
	Given a vector $a\in \F_q^n$ we define its \deffont{empirical distribution} $\distr a$ over $\F_q$ by
	$$\distr a(x) = \PROver{i\in [n]} {a_i = x}\enspace . $$
	More generally, given $A\in \F_q^{n\times b}$, let $\distr A$ denote its \emph{empirical row distribution}, that is, the distribution over $\F_q^b$ defined by
	$$\distr A(x) = \PROver{i\in [n]} {A_i = x}\enspace , $$
	where $A_i$ denotes the $i$'th row of $A$.
\end{definition}
The distribution $\distr A$ characterizes the matrix $A$ up to row permutations. Indeed, it is immediate that $A,A'\in \F_q^{n\times b}$ satisfy $\distr A = \distr{A'}$ if and only if $A'$ is obtained from $A$ by a row permutation.

\begin{fact}[{\cite[Thm.\ 11.1.4]{CoverThomas}}]\label{fact:DKL}
	Let $A\in \F_q^{n\times b}$ have rows sampled identically and independently from some distribution $\sigma$ over $\F_q^b$. Then, for any distribution $\tau$ over $\F_q^b$, 	
	$$\PR{\distr A = \tau} \le q^{-\DKL \tau\sigma q\cdot n}\enspace . $$
\end{fact}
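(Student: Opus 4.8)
The plan is to prove this by the classical \emph{method of types}. First, observe that the event $\{\distr A = \tau\}$ can have positive probability only when $\tau$ is a valid empirical distribution of a length-$n$ sequence over $\F_q^b$, i.e.\ when $n\tau(x)\in\{0,1,\dots,n\}$ for every $x\in\F_q^b$; if $\tau$ is not such a ``lattice type'', the left-hand side is $0$ and the bound holds trivially. Similarly, if $\supp(\tau)\not\subseteq\supp(\sigma)$ then some row value with $\tau(x)>0$ has $\sigma$-probability $0$, so again $\PR{\distr A=\tau}=0$ while $\DKL\tau\sigma q=\infty$. So from now on assume $\tau$ is a lattice type with $\supp(\tau)\subseteq\supp(\sigma)$.

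Next I would write the probability out exactly. A matrix $A\in\F_q^{n\times b}$ has $\distr A=\tau$ iff, among its $n$ rows, each value $x\in\F_q^b$ occurs exactly $n\tau(x)$ times. There are $\frac{n!}{\prod_{x\in\F_q^b}(n\tau(x))!}$ such row-sequences, and since the rows are i.i.d.\ from $\sigma$, each occurs with probability $\prod_{x\in\F_q^b}\sigma(x)^{n\tau(x)}$. Hence
$$\PR{\distr A = \tau} \;=\; \frac{n!}{\prod_{x}(n\tau(x))!}\;\prod_{x\in\F_q^b}\sigma(x)^{n\tau(x)}\enspace.$$

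The key step is to bound the multinomial coefficient by $\frac{n!}{\prod_x (n\tau(x))!}\le q^{nH_q(\tau)}$, where $H_q(\tau)=-\sum_x\tau(x)\log_q\tau(x)$. This follows from a clean probabilistic argument: under the product measure $\tau^n$, every single sequence of type $\tau$ has probability $\prod_x\tau(x)^{n\tau(x)}=q^{-nH_q(\tau)}$, and the total $\tau^n$-mass of all type-$\tau$ sequences is at most $1$; comparing these two quantities gives the claimed bound. Plugging it in,
$$\PR{\distr A=\tau}\;\le\; q^{nH_q(\tau)}\prod_{x}\sigma(x)^{n\tau(x)}\;=\;q^{\,n\left(H_q(\tau)+\sum_x\tau(x)\log_q\sigma(x)\right)}\;=\;q^{-n\sum_x\tau(x)\log_q\frac{\tau(x)}{\sigma(x)}}\;=\;q^{-n\,\DKL\tau\sigma q}\enspace,$$
using the convention $0\log_q 0=0$ and the definition of $\DKL\tau\sigma q$ from \cref{sec:entropy}. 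This is exactly the desired inequality.

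I do not expect a real obstacle: this is a textbook fact, and indeed the statement is cited from \cite[Thm.\ 11.1.4]{CoverThomas}. The only points needing a little care are the trivial-case bookkeeping (non-lattice $\tau$, or $\supp(\tau)\not\subseteq\supp(\sigma)$) and making sure the base-$q$ logarithms line up so that the exponent is precisely $-n\,\DKL\tau\sigma q$ rather than, say, the natural-log version. One could simply cite Cover--Thomas, but the self-contained two-line argument above is short enough to include if desired.
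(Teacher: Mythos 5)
Your proof is correct and is precisely the standard method-of-types argument from the cited source \cite[Thm.\ 11.1.4]{CoverThomas}; the paper itself offers no proof beyond that citation, so there is nothing to diverge from. The trivial-case bookkeeping (non-lattice $\tau$, support mismatch) and the base-$q$ normalization are handled properly, so the argument is complete as written.
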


\begin{definition}
	Let $\tau$ be a distribution over $\F_q^b$. We denote $\dim(\tau) = \dim \supp(\tau)$. If  $\dim(\tau)=b$, we say that $\tau$ is a \deffont{full-rank} distribution. 
\end{definition}

\begin{definition}[Matrix of a particular distribution]
	Let $\tau$ be a distribution over $\F_q^b$ (where $b\in \N$). For $n\in \N$, we denote $$\cM_{n,\tau} = \inset{A\in \F_q^{n\times b}\mid \distr A = \tau}\enspace . $$
\end{definition}

A distribution $\tau$ over $\F_q^b$ is said to be \deffont{$n$-feasible} if $\tau(x)\cdot n$ is an integer for all $x\in \F_q^b$.  Observe that any $n$-feasible distribution over $\F_q^b$ corresponds to a partition of $n$ identical balls into $q^b$ buckets. The bound below thus follows immediately.
\begin{fact}\label{fact:NumberOfTypes}
	The number of $n$-feasible distributions over $\F_q^b$ is at most $(n+1)^{q^b}$.
\end{fact}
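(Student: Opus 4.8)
The plan is to observe that an $n$-feasible distribution $\tau$ over $\F_q^b$ is completely encoded by the vector of its integer masses. First I would note that, by the definition of $n$-feasibility, the map $\tau \mapsto (n\tau(x))_{x\in\F_q^b}$ sends each $n$-feasible distribution to a tuple of nonnegative integers indexed by $\F_q^b$. This map is injective: from the tuple $(n\tau(x))_x$ one recovers $\tau(x)$ for every $x$, and hence $\tau$ itself. So it suffices to bound the number of tuples that can arise in this way.

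Next I would bound the set of such tuples crudely. Since $\tau(x)\in[0,1]$ for every $x$, we have $0 \le n\tau(x) \le n$, so each coordinate of the tuple lies in the $(n+1)$-element set $\{0,1,\dots,n\}$. As $|\F_q^b| = q^b$, there are at most $(n+1)^{q^b}$ such tuples, and therefore at most $(n+1)^{q^b}$ distinct $n$-feasible distributions, which is exactly the claimed bound. I would also remark that a sharper count is available: the tuples additionally satisfy $\sum_x n\tau(x) = n$, so by a stars-and-bars argument their number is exactly $\binom{n+q^b-1}{q^b-1}$; but the weaker estimate $(n+1)^{q^b}$ is all that is needed in the sequel.

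There is no real obstacle here — this is the standard ``method of types'' counting estimate, essentially the content of the sentence preceding the statement, and the only point that deserves to be recorded is the injectivity of the encoding $\tau \mapsto (n\tau(x))_x$, which is what makes counting tuples a legitimate upper bound on counting distributions.
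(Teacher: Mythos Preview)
Your proof is correct and essentially identical to the paper's: the paper simply observes that an $n$-feasible distribution corresponds to a partition of $n$ identical balls into $q^b$ buckets and states that the bound follows immediately, which is exactly your encoding argument (with your stars-and-bars remark matching the paper's ``partition'' phrasing). There is nothing to add.
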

Clearly, $n$-feasibility of $\tau$ is a necessary condition for $\cM_{n,\tau}$ to be nonempty. When this condition holds, $|\cM_{n,\tau}|$ is equal to the multinomial coefficient $\frac{n!}{\prod_{x\in\F_q^b}(\tau(x)n)!}$. By standard bounds on multinomial coefficients, we have 
\begin{equation}\label{eq:M_nTauApproximation}
	n^{-O(q^b)}\cdot q^{n\cdot H_q(\tau)}\le \inabs{\cM_{n,\tau}} \le q^{n\cdot H_q(\tau)}\enspace.
\end{equation}

\subsubsection{The Fourier transform of an empirical distribution}
We record several useful properties of the function $\wh{\distr A}$ for a given matrix $A$. The following is immediate.
\begin{fact}\label{fact:BiasAndFourier}
	A vector $u\in \F_q^n$ is $\eta$-biased ($\eta > 0$) if and only if $\inabs{\wh{\distr u}(a)} \le \eta$ for all $a\in \F_q^*$.
\end{fact}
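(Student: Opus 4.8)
The plan is simply to unwind the definitions; since the statement is asserted to be immediate, what follows is essentially the whole argument, and the only thing to manage is a bit of Fourier bookkeeping. First I would expand $\wh{\distr u}(a)$ for $a\in\F_q^*$ using the scalar case $b=1$ of \cref{def:fourier}, in which $\chi_a(x)=\omega^{\tr(ax)}$; viewing $\distr u$ as a function $\F_q\to\C$ with the counting norm, this gives
$$\wh{\distr u}(a)\;=\;\sum_{x\in\F_q}\distr u(x)\,\overline{\chi_a(x)}\;=\;\sum_{x\in\F_q}\distr u(x)\,\omega^{-\tr(ax)}\enspace.$$

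Next I would rewrite this as an average over the coordinates of $u$. Since $\distr u(x)=\tfrac1n\inabs{\inset{i\in[n]\mid u_i=x}}$ by definition, grouping the coordinates of $u$ by their value yields
$$\wh{\distr u}(a)\;=\;\frac1n\sum_{i=1}^n\omega^{-\tr(au_i)}\enspace,\qquad\text{so that}\qquad n\cdot\inabs{\wh{\distr u}(a)}\;=\;\inabs{\sum_{i=1}^n\omega^{-\tr(au_i)}}\enspace.$$
(For $a=0$ this reads $\wh{\distr u}(0)=1$, which is exactly why the bias condition, and the claimed Fourier condition, both quantify only over $a\in\F_q^*$.)

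Finally I would match this against the definition of $\eta$-biasedness, namely $\inabs{\sum_{i=1}^n\omega^{\tr(au_i)}}\le n\eta$ for every $a\in\F_q^*$. The one step needing care, and it is entirely routine, is the sign in the exponent: since $\tr$ is $\F_p$-linear we have $\omega^{-\tr(au_i)}=\omega^{\tr((-a)u_i)}$, and $a\mapsto-a$ is a bijection of $\F_q^*$, so $\inset{\inabs{\sum_{i=1}^n\omega^{-\tr(au_i)}}\mid a\in\F_q^*}=\inset{\inabs{\sum_{i=1}^n\omega^{\tr(au_i)}}\mid a\in\F_q^*}$. Combining the two displays, the assertion that $\inabs{\wh{\distr u}(a)}\le\eta$ for all $a\in\F_q^*$ is word-for-word the assertion that $\inabs{\sum_{i=1}^n\omega^{\tr(au_i)}}\le n\eta$ for all $a\in\F_q^*$, i.e.\ $\eta$-biasedness, which settles both implications at once. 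I do not anticipate any genuine obstacle here; the only subtleties worth keeping straight are the conjugate appearing in the Fourier transform and the harmless $a\leftrightarrow-a$ relabeling.
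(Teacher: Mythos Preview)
Your proposal is correct and is exactly the definition-unwinding the paper has in mind; the paper itself gives no proof beyond declaring the fact ``immediate.'' One small simplification: you can dispense with the $a\mapsto -a$ relabeling entirely, since $\sum_i\omega^{-\tr(au_i)}=\overline{\sum_i\omega^{\tr(au_i)}}$ and absolute value is conjugation-invariant.
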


The following identity shows that the Fourier transform of $\distr A$ (where $A\in \F_q^{n\times b}$) is in fact composed of the Fourier transforms of $\distr{Ay}$ over $y\in \F_q^b$. Let $a\in \F_q$. Then,
\begin{equation}\label{eq:FourierMatrixImage}
	\wh{\distr A}(ay) = \sum_{x\in \F_q^b}\distr A(x) \omega^{-\tr (a\ip xy)} = \Eover{x\sim \distr A} {\omega^{-\tr(a\ip xy)}} = \Eover{z\sim \distr{Ay}}{\omega^{-\tr(az)}} = \wh{\distr{Ay}}(a) \enspace.
\end{equation}

By Fact \ref{fact:Parseval}, the normalized Hamming Weight of a vector $u\in \F_q^n$ can be conveniently expressed in terms of the Fourier transform of $\distr u$. 
\begin{equation}\label{eq:WeightFourier}
	\weight(u) = \sum_{x\in \F_q} \ind{x\ne 0} \cdot \distr u(x) = \frac{q-1}q\cdot  \wh{\distr u}(0) - \frac{1}q\cdot  \sum_{a\in \F_q^*}\wh{\distr u}(a) = \frac{q-1}q - \frac{1}q\cdot  \sum_{a\in \F_q^*}\wh{\distr u}(a).
\end{equation}
This yields the following relation between bias and weight.
\begin{lemma}\label{lem:VectorBiasAndWeight}
	Let $u\in \F_q^n$ be $\eta$-biased for some $\eta > 0$. Then $$\frac{q-1}q(1-\eta)\le\weight(u) \le \frac{q-1}q(1+\eta)\enspace.$$
\end{lemma}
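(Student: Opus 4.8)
The statement is an immediate consequence of the Fourier-analytic identity \eqref{eq:WeightFourier} together with the characterization of bias in \cref{fact:BiasAndFourier}. The plan is as follows. First I would recall that \eqref{eq:WeightFourier} expresses the normalized Hamming weight as
$$\weight(u) = \frac{q-1}{q} - \frac 1q \sum_{a\in \F_q^*} \wh{\distr u}(a).$$
Next I would invoke \cref{fact:BiasAndFourier}, which says that $u$ being $\eta$-biased is equivalent to $\inabs{\wh{\distr u}(a)} \le \eta$ for every $a\in \F_q^*$. Since there are exactly $q-1$ nonzero field elements, the triangle inequality gives $\bigl|\sum_{a\in\F_q^*}\wh{\distr u}(a)\bigr| \le (q-1)\eta$.

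Then I would combine the two displays: subtracting a quantity of magnitude at most $\frac{(q-1)\eta}{q}$ from $\frac{q-1}{q}$ yields
$$\frac{q-1}{q}(1-\eta) \;\le\; \weight(u) \;\le\; \frac{q-1}{q}(1+\eta),$$
which is precisely the claim. One small point worth remarking on in passing (so that subtracting a possibly complex-looking sum from a real number makes sense) is that $\sum_{a\in\F_q^*}\wh{\distr u}(a)$ is in fact real: since $\distr u$ is real-valued, $\overline{\wh{\distr u}(a)} = \wh{\distr u}(-a)$, and $a\mapsto -a$ permutes $\F_q^*$, so the sum is fixed under complex conjugation.

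There is no real obstacle here — this lemma is purely bookkeeping, packaging the Fourier identity \eqref{eq:WeightFourier} and the bias-Fourier dictionary \eqref{fact:BiasAndFourier} into the weight bound that will be used downstream. The only thing to be mildly careful about is keeping the factor $\frac{q-1}{q}$ (rather than $\frac1q$ or $1$) consistent throughout, and noting that the argument uses only the pointwise bias bound, not linearity of the ambient code.
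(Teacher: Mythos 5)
Your proposal is correct and follows exactly the paper's own argument: combine the Fourier identity \cref{eq:WeightFourier} with \cref{fact:BiasAndFourier} and apply the triangle inequality over the $q-1$ nonzero characters. The remark that the sum $\sum_{a\in\F_q^*}\wh{\distr u}(a)$ is real is a nice explicit touch the paper leaves implicit, but it changes nothing of substance.
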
	
\begin{proof}
	By Eq.\ \eqref{eq:WeightFourier} and Fact \ref{fact:BiasAndFourier},
	$$
	\inabs{\weight(u)-\frac{q-1}q} = \inabs{\frac 1q\cdot \sum_{a\in \F_q^*}\wh{\distr u}(a)}  \le \frac{q-1}q\cdot \eta\enspace.\qedhere
	$$
\end{proof}
We have the following immediate conclusion.
\begin{lemma}\label{lem:BiasStrongerThanDistance}
	For any $\eta \ge 0$, an $\eta$-biased code also has $\eta$-optimal distance.
\end{lemma}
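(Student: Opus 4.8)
The plan is to unwind the two definitions and quote the lower bound already established in \cref{lem:VectorBiasAndWeight}. Recall that a code $\cD\subseteq\F_q^m$ is $\eta$-biased precisely when every nonzero codeword $u\in\cD\setminus\{0\}$ is $\eta$-biased in the per-vector sense, and that $\cD$ has $\eta$-optimal distance precisely when every nonzero codeword $u$ satisfies $\weight(u)\ge\frac{(q-1)(1-\eta)}{q}$. So the statement reduces to a per-codeword implication: $\eta$-biased $\Rightarrow$ $\weight(u)\ge\frac{(q-1)(1-\eta)}{q}$.

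First I would fix an arbitrary $u\in\cD\setminus\{0\}$; by hypothesis it is $\eta$-biased. Then I would invoke \cref{lem:VectorBiasAndWeight} directly, which gives the two-sided bound $\frac{q-1}{q}(1-\eta)\le\weight(u)\le\frac{q-1}{q}(1+\eta)$; the left inequality is exactly the $\eta$-optimal-distance condition for $u$. Since $u$ was arbitrary, $\cD$ has $\eta$-optimal distance. There is no real obstacle here — the content has already been extracted in \cref{lem:VectorBiasAndWeight} (itself a short Fourier/Parseval computation via \cref{eq:WeightFourier} and \cref{fact:BiasAndFourier}), and this lemma is just the packaging of its lower bound at the level of codes rather than vectors; indeed only the lower bound is needed, the upper bound being the extra strength of bias over distance noted in the discussion following the definition.
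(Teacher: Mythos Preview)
Your proposal is correct and matches the paper's approach exactly: the paper simply states this lemma as an ``immediate conclusion'' of \cref{lem:VectorBiasAndWeight}, and your argument is precisely the unwinding of that remark. There is nothing to add.
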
	

\section{Thresholds for monotone, local and row-symmetric properties}\label{sec:PropertySetting}
In this section we reformulate several notions from previous works that relate to code properties \cite{MRRSW,GLMRSW,GMRSW}. Our goal is to eventually restate the \emph{RLC Threshold Theorem} of \cite{MRRSW} (see Theorem \ref{thm:RLCThreshold}) and show how it can be used to prove that a given random code ensemble is locally-similar to an RLC. This technique will be applied in Section \ref{sec:SimilarityToRLC} (via Lemma \ref{lem:ReductionToRLC}) to prove Theorems \ref{thm:MainLowBiasProperties} and \ref{thm:MainLargeDistanceProperties}.

\subsection{Monotone-increasing properties and minimal sets}
A monotone-increasing property $\cP$ of codes has a unique \deffont{minimal-set} $\cM_{\cP}$, namely, a matrix $A\subseteq \F_q^{n\times b}$ ($b\in \N$) of full column-rank belongs to $\cM_{\cP}$ if the code consisting of the column-span of $A$ satisfies $\cP$, but no proper linear subspace of that code does so.

\begin{example}[Minimal set for list-decodability]\label{ex:LDAsProperty} Fix a prime power $q$, $n,L\in \N$ and $\rho \in [0,1]$. Consider the monotone-increasing property $\cP$ consisting of all linear codes in $\F_q^n$ that are \textbf{not} $\LD \rho L$. By Eq.\ \eqref{eq:SpanClusteredCharacterization}, 
	$$\cM_{\cP} \subseteq \inset{A\subseteq \F_q^{n\times b}\mid b\in\N,~~~A \text{ is }\SC{\rho}{L+1}}\enspace .$$
\end{example}	

We can reformulate the notions of \deffont{local}, \deffont{row-symmetric} and \deffont{scalar-invariant} monotone-increasing properties in terms of the minimal set $\cM_\cP$ (see Definitions \ref{def:LocalRowSymmetric} and \ref{def:ScalarInvariant}).

\begin{observation}[Local, row-symmetric and scalar-invariant properties in terms of $\cM_{\cP}$]\label{obs:LocalRowSymmetricScalarInvariant}
	Let $\cP$ be a monotone-increasing property of linear codes in $\F_q^n$. 
	\begin{enumerate}
		\item  Fix $b\in \N$. Then, $\cP$ is \deffont{$b$-local} if and only every matrix in $\cM_{\cP}$ has at most $b$ columns.
		\item The property $\cP$ is \deffont{row-symmetric} if and only if, for each $A\in \cM_{\cP}$, it holds that every matrix obtained by permuting the rows of $A$ also belongs to $\cM_{\cP}$.
		\item The property $\cP$ is \deffont{scalar-invariant} if, for each $A\in \cM_{\cP}$ and every full-rank diagonal matrix $\Lambda \in \F_q^{n\times n}$ it holds that $\Lambda A \in \cM_{\cP}$.
	\end{enumerate}
\end{observation}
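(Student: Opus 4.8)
The plan is to reduce everything to one structural fact: for a monotone-increasing $\cP$, a code $\cC$ satisfies $\cP$ if and only if there exists $A\in\cM_\cP$ with $A\subseteq\cC$. The ``if'' direction is immediate from monotonicity, since the column-span of such an $A$ satisfies $\cP$ and is a subcode of $\cC$. For ``only if'' I would invoke finite-dimensionality: the set of linear subspaces of $\cC$ satisfying $\cP$ is nonempty (it contains $\cC$), so it has an element $\cC'$ of minimal dimension; taking $A$ to be a matrix whose columns form a basis of $\cC'$, the defining property of $\cM_\cP$ gives $A\in\cM_\cP$, and clearly $A\subseteq\cC$. I would also record the trivial remark that for an invertible linear map $T$ on $\F_q^n$ (a coordinate permutation $\pi$, or a full-rank diagonal $\Lambda$), one has $A\subseteq\cC\iff TA\subseteq T\cC$, the column-span of $TA$ is the image under $T$ of the column-span of $A$, and $TA$ still has full column rank. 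With these in hand, each of the three items is a short deduction.

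For item (1), in the direction ``every $A\in\cM_\cP$ has at most $b$ columns $\Rightarrow$ $\cP$ is $b$-local'', I would take $\cB_\cP$ to be the family of column-sets of the matrices in $\cM_\cP$; since the columns of a full-column-rank matrix are linearly independent, hence distinct, each such set has size at most $b$, and the structural fact above is precisely the required witness equivalence. For the converse, let $A\in\cM_\cP$ with column-span $\cC'$; $b$-locality yields $B\in\cB_\cP$ with $B\subseteq\cC'$, so $\spn(B)$ is a subcode of $\cC'$ satisfying $\cP$, and minimality of $\cC'$ forces $\spn(B)=\cC'$; hence the number of columns of $A$ equals $\dim\cC'=\dim\spn(B)\le|B|\le b$.

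For item (2), ``$\cM_\cP$ closed under row permutations $\Rightarrow$ $\cP$ row-symmetric'' follows at once: if $A\in\cM_\cP$ witnesses $\cC\in\cP$, then $\pi A\in\cM_\cP$ witnesses $\pi\cC\in\cP$. For the converse, given $A\in\cM_\cP$ with span $\cC'$ and a permutation $\pi$, row-symmetry gives that $\pi\cC'$ (the column-span of $\pi A$) satisfies $\cP$; and if a proper subcode of $\pi\cC'$ satisfied $\cP$, applying $\pi^{-1}$ and row-symmetry would give a proper subcode of $\cC'$ satisfying $\cP$, contradicting $A\in\cM_\cP$; since $\pi A$ also has full column rank, $\pi A\in\cM_\cP$. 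Item (3) is exactly the forward argument of item (2) with the permutation $\pi$ replaced by a full-rank diagonal $\Lambda$ (and the converse, if wanted, is likewise parallel to item (2), but is not needed).

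This is mostly bookkeeping; the only steps requiring a moment of care are the ``only if'' half of the structural fact, where finiteness of the dimension is used to extract a minimal subcode, and the counting in item (1), where one must keep straight the three quantities ``number of columns of $A$'', ``$\dim\spn(B)$'', and ``$|B|$'' so that the chain of (in)equalities closes correctly. I do not expect any genuine obstacle beyond this.
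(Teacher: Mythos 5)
Your proof is correct, and it fleshes out exactly the intended argument: the paper states this as an Observation with no proof, treating it as an immediate unfolding of the definitions of $\cM_\cP$, locality, row-symmetry, and scalar-invariance. Your structural fact ($\cC$ satisfies $\cP$ iff some $A\in\cM_\cP$ has all its columns in $\cC$, via a minimal subcode) together with the invertibility of permutations and full-rank diagonal matrices is precisely the bookkeeping the authors leave implicit.
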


\subsection{Row-symmetric $b$-local properties in terms of distributions over $\F_q^b$}\label{sec:LocalPropertiesAndDistributions}
Thresholds for row-symmetric and local properties can be characterized in terms of empirical distributions of certain matrices. We recall this connection, first proven in \cite{MRRSW}.

\begin{definition}[Minimal types]\label{def:MinimalTypes}
	Let $\cP$ be a monotone-increasing, $b$-local, row-symmetric property over $\F_q^n$. We denote $\cT_\cP:= \{\distr M \mid M\in \cM_{\cP}\}$. 
\end{definition}
The row-symmetry of $\cP$ implies that $\cM_\cP$ is closed to row permutations. Hence, the membership of a matrix $M$ in $\cM_\cP$ depends only on $\distr M$. Therefore,
\begin{equation}\label{eq:MinimalTypesToSets}
	\cM_\cP=\bigcup_{\tau \in \cT_\cP} \cM_{n,\tau}\enspace.
\end{equation}

Working with the set $\cT_\cP$ rather than $\cM_\cP$ is useful because the former is of size polynomial in $n$, while the latter's size is typically exponential in $n$. Indeed, Fact \ref{fact:NumberOfTypes} immeidately implies the following.
\begin{fact}\label{fact:PropertyDecomposition}
	In the setting of Definition \ref{def:MinimalTypes}, $|\cT_\cP|\le (n+1)^{q^b}$.
\end{fact}

We demonstrate the notions of $\cM_\cP$ and $\cT_\cP$ via the property of $(\rho,L)$-list-decodability.

\begin{definition}\label{def:SpanClusteredType}
	Fix a prime power $q$. Let $b,n\in N$ and let $\tau$ be an $n$-feasible distribution over $\F_q^b$. If a matrix $A\in \cM_{n,\tau}$ is $(\rho,L+1)$-span-clustered, we say that $\tau$ is \deffont{$(\rho,L+1)$-span-clustered (with regard to $n$)}.
\end{definition}
\begin{remark}
	Note that $\tau$ is well-defined. Indeed, if a matrix $M$ is $\SC{\rho}{L+1}$ then the same holds for every row-permutation of $M$.
\end{remark}

\begin{example}[Minimal types for list-decodability]
	Let $\cP$ be the property consisting of all linear codes in $\F_q^n$ that are \textbf{not} $\LD \rho L$. By Example \ref{ex:LDAsProperty} and Eq.\ \eqref{eq:MinimalTypesToSets},
	$$\cT_{\cP} \subseteq \inset{\tau \mid \tau \text{ is a $\SC \rho{L+1}$ $n$-feasible distribution over $\F_q^{b}, \: b\le L+1$}\enspace}.$$
\end{example}

It is revealing (although not necessary) to develop a direct characterization of the set of $\SC \rho{L+1}$ distributions. We do so in Appendix \ref{appendix:SpanClusteredTypes}.

To state the Threshold Theorem from \cite{MRRSW} we also require the notion of an \deffont{implied distribution}.

\begin{definition}[Implied distribution {\cite[Def.\ 2.6]{MRRSW}}]\label{def:ImpliedDistribution}
	Let $\tau$ be a distribution over $\F_q^b$ and let $D\in \F_q^{b\times a}$ such that $\rank D = a$ for some $a\le b$. The distribution (over $\F_q^a$) of the random vector $xD$, where $x\sim \tau$ (note that $x$ is a row vector), is said to be \deffont{$\tau$-implied}. We denote the set of $\tau$-implied distributions by $\cI_\tau$. 
\end{definition}
The motivation for Definition \ref{def:ImpliedDistribution} is the following observation, which follows immediately from the linearity of the code.

\begin{observation}\label{obs:ImpliedMotivation}
	Let $\tau$ be a distribution over $\F_q^b$, and let $\tau'\in \cI_{\tau}$. Then, any linear code containing a matrix in $\cM_{n,\tau}$ must also contain some matrix in $\cM_{n,\tau'}$. 
\end{observation}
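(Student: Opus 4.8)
The plan is to take a matrix $A \in \cM_{n,\tau}$ that is contained column-wise in a linear code $\cC$ and explicitly build from it a matrix in $\cM_{n,\tau'}$ that also sits inside $\cC$, by right-multiplying $A$ by the matrix $D$ witnessing $\tau' \in \cI_\tau$. Recall from \cref{def:ImpliedDistribution} that $\tau' \in \cI_\tau$ means there is $D \in \F_q^{b\times a}$ with $\rank D = a$ such that $\tau'$ is the law of the row vector $xD$ when $x \sim \tau$. So, given $A \in \cM_{n,\tau}$ with $A \subseteq \cC$, I would set $A' := AD \in \F_q^{n\times a}$ and check the two required properties.

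First, $A' \subseteq \cC$: every column of $A' = AD$ has the form $Ad$ for $d$ a column of $D$, and $Ad$ is an $\F_q$-linear combination of the columns of $A$, each of which lies in $\cC$ by hypothesis; since $\cC$ is linear, $Ad \in \cC$. Second, $\distr{A'} = \tau'$: the $i$-th row of $A' = AD$ is $A_i D$, so the rows of $A'$ are the images of the rows of $A$ under the fixed linear map $x \mapsto xD$, and hence the empirical row distribution of $A'$ is the pushforward of $\distr A = \tau$ under this map, which is exactly $\tau'$ by definition. Thus $A' \in \cM_{n,\tau'}$ and $A' \subseteq \cC$, which is the claim.

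There is essentially no obstacle here: once one writes down the right object, $AD$, the statement is a one-line consequence of linearity of $\cC$ together with the definition of pushforward. The only matters needing a moment of care are bookkeeping ones, namely keeping straight that the rows of $A$ live in $\F_q^b$ and are acted on by $D$ on the right, and observing that $\cM_{n,\tau'}$ imposes no rank requirement on its matrices, so $A' = AD$ qualifies even though the hypothesis $\rank D = a$ is never actually used in this particular argument.
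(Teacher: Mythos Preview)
Your proposal is correct and matches the paper's intended argument: the paper does not write out a proof, simply noting that the observation ``follows immediately from the linearity of the code,'' and your construction $A' = AD$ together with the two checks (columns of $A'$ lie in $\cC$ by linearity, and $\distr{A'}$ is the pushforward of $\tau$ under $x\mapsto xD$) is exactly what that remark unpacks to.
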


We now have the following characterization of the threshold. 

\begin{theorem}[{\cite[Thm.\ 2.8]{MRRSW}}]\footnote{The precise error term does not appear in the statement of this theorem in \cite{MRRSW}, but follows by inspecting the proof there.}\label{thm:RLCThreshold}
	Let $\cP$ be a monotone-increasing, $b$-local, row-symmetric property over $\F_q^n$, and let $\cT_\cP$ be as in Fact \ref{fact:PropertyDecomposition}. Then, $$\TRLC(\cP) = \min_{\tau \in \cT_\cP} \max_{\tau' \in \cI_\tau} \inparen{1-\frac{H_q(\tau')}{\dim(\tau')}} \pm \frac{2q^{2b}\log_q n}n\enspace . $$ 
\end{theorem}
Below, we demonstrate Theorem \ref{thm:RLCThreshold} via the property of \deffont{list-recoverability}. The motivation is two-fold. First, list-recoverability itself is a property of significant interest. Secondly, we will use Theorem \ref{thm:RLCThreshold} in the proof of Theorems \ref{thm:MainLowBiasProperties} and \ref{thm:MainLargeDistanceProperties}, and the special case of list-recoverability will help familiarize the reader with this tool.

\subsection{List-recoverability as a property of codes}\label{sec:ListRecovery}
\deffont{List-recovery} is an important generalization of list-decoding, where the decoder is given not one, but a subset of $\ell$ symbols per position, and the goal is to list all codewords which ``miss" at most a fraction $\rho$ of these subsets. We formally define the notion of (combinatorial) list-recoverability.

\begin{definition}	\label{def:LR}
	Fix $1\le \ell \le q$ and let $\rho\in (0,1-\ell/q)$. The set $W$ is said to be \deffont{$(\rho,\ell)$-recovery-clustered} if there exist sets $Z_1,\dots,Z_n \subseteq \F_q$, each of which is of size at most $\ell$, such that $\inabset{i\in [n] \mid u_i\notin Z_i} \le \rho n$ for all $u\in W$. A code $\cC\subseteq \F_q^n$ is called \deffont{$\LR \rho \ell L$} if it does not contain any $(\rho,\ell)$-recovery-clustered set of size $L+1$.
\end{definition}

The following is immediate:
\begin{observation}\label{obs:ListRecoveryGoodProperty}
	Fix a prime power $q$, $1\le \ell < q$,  $\rho\in (0,1-\ell/q)$ and $L\in \N$. Let $\cP$ denote the monotone-increasing linear-code property of codes in $\F_q^n$ that are \emph{not} $\LR \rho \ell L$. Then, $\cP$ is row-symmetric, $(L+1)$-local and scalar-invariant.
	
	Say that a matrix $A\in \F_q^{n\times b}$ is \deffont{$\RSC \rho\ell L$} if the column span of $A$ contains a $(\rho,\ell)$-clustered set of size $L+1$. Then, $\cM_\cP$ is contained in the set of all $\RSC \rho \ell L$ matrices.
\end{observation}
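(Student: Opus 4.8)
The plan is to unwind \cref{def:LR} and verify each of the four claimed structural features directly; all of them reduce to a single observation: applying a coordinate-permutation $\pi$, a full-rank diagonal scaling $\Lambda$, or passing to a code superset transforms a $(\rho,\ell)$-recovery-clustered witness into another one of the same cardinality, with allowed-symbol sets of the same sizes. This mirrors exactly the treatment of list-decodability in \cref{obs:LDGoodProperty} and \cref{ex:LDAsProperty}, so the proof should be short.

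For \textbf{monotonicity} and \textbf{$(L+1)$-locality}, I would take $\cB_\cP$ to be the family of all $(\rho,\ell)$-recovery-clustered subsets of $\F_q^n$ of size exactly $L+1$; each such set has at most $L+1$ elements. The key point is that any subset of a $(\rho,\ell)$-recovery-clustered set is again $(\rho,\ell)$-recovery-clustered with the same sets $Z_1,\dots,Z_n$, so a code $\cC$ contains a recovery-clustered set of size $L+1$ iff it contains one of size exactly $L+1$; hence $\cC\in\cP$ iff $\exists B\in\cB_\cP$ with $B\subseteq\cC$, which gives $(L+1)$-locality, and monotonicity follows since a superset of $\cC$ inherits the same bad set.

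For \textbf{row-symmetry}: given $W\subseteq\cC$ a recovery-clustered set of size $L+1$ with witness sets $Z_1,\dots,Z_n$ and a permutation $\pi$ of $[n]$, the set $\{\pi x : x\in W\}\subseteq\pi\cC$ has the same size and is recovery-clustered with the permuted witness sets, since relabeling positions preserves the count of indices $i$ with $x_i\notin Z_i$. For \textbf{scalar-invariance}: given $W$ as above and $\Lambda=\diag(\lambda_1,\dots,\lambda_n)$ with all $\lambda_i\ne 0$, the set $\Lambda W\subseteq\Lambda\cC$ has the same size (as $u\mapsto\Lambda u$ is a bijection) and is recovery-clustered with witness sets $\lambda_1 Z_1,\dots,\lambda_n Z_n$, each still of size at most $\ell$ (multiplication by $\lambda_i\ne 0$ is a bijection of $\F_q$), because $(\Lambda u)_i\notin\lambda_i Z_i\iff u_i\notin Z_i$.

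Finally, for the claim about $\cM_\cP$: if $A\in\cM_\cP$ then, by definition of the minimal set, the column-span of $A$ satisfies $\cP$, i.e.\ it contains a $(\rho,\ell)$-recovery-clustered set of size $L+1$; this is precisely the condition for $A$ to be $\RSC\rho\ell L$, so $\cM_\cP$ lies inside the set of all $\RSC\rho\ell L$ matrices. I do not anticipate any genuine obstacle; the only thing needing a little care is fixing a consistent convention for the action $\pi x$ and tracking how the witness sets $Z_i$ transform under $\pi$ and under $\Lambda$.
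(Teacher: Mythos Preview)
Your proposal is correct and matches the paper's treatment: the paper states this observation as ``immediate'' and gives no proof, and your argument is precisely the direct unwinding of \cref{def:LR} that justifies why it is immediate. Each of your four verifications (monotonicity, $(L+1)$-locality via the witness family $\cB_\cP$, row-symmetry via permuted witness sets $Z_{\pi^{-1}(i)}$, scalar-invariance via scaled witness sets $\lambda_i Z_i$) and the final containment $\cM_\cP\subseteq\{\RSC\rho\ell L\text{ matrices}\}$ are exactly right.
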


Note that list-recovery generalizes list-decodability (Definition \ref{def:LD}), i.e., a set $W$ is $\rho$-clustered if and only if it is $(\rho,1)$-recovery-clustered. Likewise, a code is $\LD \rho L$ if and only if it is $\LR \rho 1 L$. 

The \emph{List-Recovery Capacity Theorem} \cite[Thm.\ 2.4.12]{NicThesis} gives the threshold rate for list-recoverability as $R^* = 1-h_{q,\ell}(\rho)$, where $h_{q,\ell}(\rho) = \rho\log_q \inparen{\frac {q-\ell}\rho} + (1-\rho)\log_q \inparen{\frac \ell{1-\rho}}$. Namely, for every $\eps > 0$ there exists a family of $\LR\rho \ell {O_{\rho,\ell,\eps}(1)}$ codes of rate at least $R^*-\eps$ but, on the other hand, every $\LR\rho \ell L$ family of codes of rate $\ge R^*+\eps$ has $L$ exponentially large in $\eps n$. 

We now use Theorem \ref{thm:RLCThreshold} to prove that RLCs achieve list-recovery capacity\footnote{Better lower bounds on $\TRLC^{n,q}(\rho,\ell,L)$ are known. See, e.g., \cite{RW18RLC}.}.

\begin{proposition}[RLCs achieve list-recovery capacity]\label{prop:RLCListRecoveryCapacity}
	For any fixed $q,\rho, \ell$ and $L$, we have
	$$\TRLC^{n,q}(\rho,\ell,L) \ge 1-h_{q,\ell}(\rho) - \frac \ell {\log_q L} - o_{n\to \infty}(1)\enspace.$$
\end{proposition}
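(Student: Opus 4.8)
The plan is to deduce the bound from the threshold formula of \cref{thm:RLCThreshold}, applied to the property $\cP$ of codes in $\F_q^n$ that are \emph{not} \LR{\rho}{\ell}{L}. By \cref{obs:ListRecoveryGoodProperty} this $\cP$ is monotone-increasing, $(L+1)$-local and row-symmetric, so \cref{fact:PropertyDecomposition} lets us write $\cM_\cP=\bigcup_{\tau\in\cT_\cP}\cM_{n,\tau}$, and \cref{thm:RLCThreshold} gives
\begin{equation*}
	\TRLC^{n,q}(\rho,\ell,L)\;=\;\TRLC(\cP)\;=\;\min_{\tau\in\cT_\cP}\ \max_{\tau'\in\cI_\tau}\inparen{1-\frac{H_q(\tau')}{\dim(\tau')}}\;\pm\;o_{n\to\infty}(1),
\end{equation*}
where the error term $\tfrac{2q^{2(L+1)}\log_q n}{n}$ is $o_{n\to\infty}(1)$ because $q,\rho,\ell,L$ are fixed. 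Hence it suffices to produce, for every $\tau\in\cT_\cP$, some $\tau'\in\cI_\tau$ with $H_q(\tau')/\dim(\tau')\le h_{q,\ell}(\rho)+\ell/\log_q L+o_{n\to\infty}(1)$.

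To build such a $\tau'$, fix $\tau\in\cT_\cP$ and pick $A\in\cM_{n,\tau}$; then $\distr{A}=\tau$, and since $A\in\cM_\cP$ the column-span of $A$ contains a $(\rho,\ell)$-recovery-clustered set $W$ with $\inabs{W}=L+1$, witnessed by sets $Z_1,\dots,Z_n\subseteq\F_q$ which we may take of size exactly $\ell$. Put $b'=\ceil{\log_q(L+1)}$. Since the $L+1$ vectors of $W$ are distinct, $\spn W$ has dimension at least $\log_q(L+1)$, hence at least $b'$; choose $b'$ linearly independent vectors from $W$ and let $A'\in\F_q^{n\times b'}$ be the matrix having them as columns, so $\rank A'=b'$. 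Writing each column of $A'$ as $A$ applied to a suitable vector exhibits $A'=AD$ for some $D$ with $\rank D=b'$ (and $b'\le\dim\spn W$ is at most the number of columns of $A$), so $\tau':=\distr{A'}$ is $\tau$-implied (\cref{def:ImpliedDistribution}) and $\dim(\tau')=b'$.

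It remains to bound $H_q(\tau')$. The columns of $A'$ form a $(\rho,\ell)$-recovery-clustered set (being a subset of $W$); this is a row-symmetric property of matrices in $\F_q^{n\times b'}$, and $\cM_{n,\tau'}$ is exactly the set of row-permutations of $A'$, so \emph{every} $M\in\cM_{n,\tau'}$ has its columns forming a $(\rho,\ell)$-recovery-clustered set. Therefore $\inabs{\cM_{n,\tau'}}$ is at most the number of tuples $(c_1,\dots,c_{b'})\in(\F_q^n)^{b'}$ whose entries form such a set, which I would bound by first choosing $Z_1,\dots,Z_n$ (at most $\binom{q}{\ell}^n\le q^{\ell n}$ ways) and then, independently for each $t$, choosing $c_t$ to lie outside the $Z_i$'s in at most $\rho n$ coordinates. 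The number of admissible $c_t$ is $\sum_{s\le\rho n}\binom{n}{s}(q-\ell)^s\ell^{n-s}\le(n+1)\,q^{n\,h_{q,\ell}(\rho)}$, the inequality being the standard left-tail estimate, valid precisely because the hypothesis $\rho<1-\ell/q$ places $\rho n$ below the mode $n(q-\ell)/q$ of the summand. Combining this with \cref{eq:M_nTauApproximation} gives $H_q(\tau')\le\ell+b'\,h_{q,\ell}(\rho)+o_{n\to\infty}(1)$, and dividing by $\dim(\tau')=b'\ge\log_q L$ yields the desired estimate.

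I expect the step that carries the weight to be the choice $b'=\ceil{\log_q(L+1)}$ rather than the full dimension of $\spn W$ (which can be as large as $L+1$): projecting $W$ onto just enough coordinates to keep its $L+1$ members distinct is exactly what converts the additive loss of $\ell$—incurred in encoding the witness sets $Z_i$—into the claimed normalized loss $\ell/\log_q L$ after dividing by the dimension. The entropy and tail-counting estimates along the way are routine once this is in place.
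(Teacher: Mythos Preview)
Your proof is correct and follows the same overall strategy as the paper: apply \cref{thm:RLCThreshold} to the property of not being \LR{\rho}{\ell}{L}, and for each $\tau\in\cT_\cP$ extract from a witnessing matrix $A$ a rank-$b'$ submatrix $A'$ with $b'=\lceil\log_q(L+1)\rceil$ columns drawn from a $(\rho,\ell)$-recovery-clustered set, then bound $H_q(\tau')/\dim(\tau')$ where $\tau'=\distr{A'}\in\cI_\tau$. The one genuine difference is in how you establish the entropy bound $H_q(\tau')\le b'\,h_{q,\ell}(\rho)+\ell$. The paper proves this as \cref{claim:HBoundForRCTpes} via a direct information-theoretic argument: writing $H_q(\tau')\le H_q(Z_i)+\sum_{j=1}^{b'} H_q(B_{i,j}\mid Z_i)$ and bounding each summand by $h_{q,\ell}(\rho)$, which yields the clean inequality with no error term. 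You instead bound $\inabs{\cM_{n,\tau'}}$ by counting matrices with recovery-clustered columns and then invoke \cref{eq:M_nTauApproximation}, picking up an extra $o_{n\to\infty}(1)$ that is harmlessly absorbed into the error already present from \cref{thm:RLCThreshold}. Both routes are standard; the paper's is slightly cleaner, while yours makes the combinatorial source of the $h_{q,\ell}(\rho)$ term more explicit.
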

Proposition \ref{prop:RLCListRecoveryCapacity} means that RLCs get to within $\eps$ of the capacity rate for list-recovery with list-size $L \approx q^{\frac \ell\eps}$.

To prove Proposition \ref{prop:RLCListRecoveryCapacity} we need the following claim.
\begin{claim}\label{claim:HBoundForRCTpes}
	Let $B\in \F_q^{n\times b}$ be a matrix whose columns form a $\RC \rho \ell$ set. Denote $\tau = \distr B$. Then, $H_q(\tau) \le b\cdot h_q(\ell,\rho)+\ell$, where $h_{q,\ell}(\rho) = \rho\log_q \inparen{\frac {q-\ell}\rho} + (1-\rho)\log_q \inparen{\frac \ell{1-\rho}}$.
\end{claim}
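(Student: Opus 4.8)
The plan is to express $H_q(\tau)$ as the base-$q$ entropy of a single uniformly random row of $B$ and then peel off the recovery structure using only the standard facts $H_q(U,V)=H_q(U)+H_q(V\mid U)$, $H_q(U\mid V)\le\sum_i H_q(U_i\mid V)$ for $U=(U_1,\dots)$, and $H_q(U)\le\log_q\inabs{\supp(U)}$. Sample $i\sim\uniform([n])$ and put $X=(X_1,\dots,X_b):=B_i\in\F_q^b$, so $X$ has law $\tau$ and $H_q(\tau)=H_q(X)$. Fix recovery sets $Z_1,\dots,Z_n\subseteq\F_q$ witnessing that the columns of $B$ form a $\RC\rho\ell$ set; replacing each $Z_i$ by an arbitrary superset of size exactly $\ell$ only shrinks each set $\inset{i\mid u_i\notin Z_i}$, so the inequality $\inabs{\inset{i\mid u_i\notin Z_i}}\le\rho n$ survives for every column $u$, and I may assume $\inabs{Z_i}=\ell$ for all $i$. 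Let $W:=Z_i$, a random $\ell$-element subset of $\F_q$.

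The first move is $H_q(X)\le H_q(X,W)=H_q(W)+H_q(X\mid W)$. There are $\binom q\ell\le q^\ell$ subsets of $\F_q$ of size $\ell$, so $H_q(W)\le\ell$; this is precisely the $+\ell$ in the claim. For the second term, subadditivity conditional on $W$ gives $H_q(X\mid W)\le\sum_{j=1}^b H_q(X_j\mid W)$, so it suffices to bound each $H_q(X_j\mid W)$ by $h_{q,\ell}(\rho)$. Fixing $j$, set $G_j:=\ind{X_j\in W}$ and $\beta_j:=\Pr[G_j=0]$; by construction $\beta_j$ equals $\tfrac1n$ times the number of indices $i\in[n]$ at which the $j$-th column of $B$ lies outside $Z_i$, hence $\beta_j\le\rho$ by the recovery hypothesis. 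Since $G_j$ is a function of $(X_j,W)$, $H_q(X_j\mid W)=H_q(X_j,G_j\mid W)=H_q(G_j\mid W)+H_q(X_j\mid W,G_j)$; here $H_q(G_j\mid W)\le H_q(G_j)=-\beta_j\log_q\beta_j-(1-\beta_j)\log_q(1-\beta_j)$, and conditioned on $G_j=1$ (resp.\ $G_j=0$) the value $X_j$ lies in $W$ (resp.\ $\F_q\setminus W$), so $H_q(X_j\mid W,G_j)\le(1-\beta_j)\log_q\ell+\beta_j\log_q(q-\ell)$. Adding and rearranging, $H_q(X_j\mid W)\le\beta_j\log_q\frac{q-\ell}{\beta_j}+(1-\beta_j)\log_q\frac{\ell}{1-\beta_j}=h_{q,\ell}(\beta_j)$.

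The last ingredient is a monotonicity estimate: $\frac{d}{d\beta}h_{q,\ell}(\beta)=\log_q\frac{(1-\beta)(q-\ell)}{\beta\ell}$, which is $\ge 0$ exactly when $\beta\le 1-\ell/q$. Since $\rho\in(0,1-\ell/q)$ by \cref{def:LR} and $\beta_j\le\rho$, we get $h_{q,\ell}(\beta_j)\le h_{q,\ell}(\rho)$ for each $j$, hence $H_q(X\mid W)\le b\,h_{q,\ell}(\rho)$ and $H_q(\tau)\le b\,h_{q,\ell}(\rho)+\ell$. The only genuinely non-routine point is this monotonicity step: for $\beta>1-\ell/q$ the inequality $h_{q,\ell}(\beta_j)\le h_{q,\ell}(\rho)$ can fail, so the argument truly depends on the list-recovery parameter constraint $\rho<1-\ell/q$; the remaining care is just the harmless edge cases ($\ell=q$ makes the hypothesis vacuous, and $\beta_j\in\{0,1\}$ are handled by the $0\log 0=0$ convention).
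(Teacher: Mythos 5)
Your proof is correct and follows essentially the same route as the paper's: condition on the random recovery set $Z_i$ via $H_q(B_i)\le H_q(Z_i)+H_q(B_i\mid Z_i)$, bound $H_q(Z_i)\le\ell$ by counting $\binom q\ell$ subsets, apply subadditivity over the $b$ columns, and bound each $H_q(B_{i,j}\mid Z_i)$ by $h_{q,\ell}(\rho'_j)\le h_{q,\ell}(\rho)$. The only difference is that you spell out the steps the paper leaves implicit (padding the $Z_i$ to size exactly $\ell$, the indicator-variable decomposition giving $h_{q,\ell}(\beta_j)$, and the monotonicity of $h_{q,\ell}$ on $[0,1-\ell/q]$, which is indeed where the hypothesis $\rho<1-\ell/q$ is used).
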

\begin{proof}
	Let $Z_1,\dotsc, Z_n$ be subsets of $\F_q$, each of size $\ell$, such that for all $j \in [b]$, we have $\inabset{i\in [n]\mid B_{i,j}\notin Z_i} \le \rho n$. Let $i$ be sampled uniformly from $[n]$.  We now have
	$$H_q(\tau) = H_q(B_i) \le H_q(B_i, Z_i) = H_q(Z_i)+H_q(B_i\mid Z_i) \le H_q(Z_i) + \sum_{j=1}^b H_q(B_{i,j}\mid Z_i).$$
	The number of different options for $Z_i$ is $\binom q\ell$ so $H_q(Z_i)\le \log_q(\binom q\ell)\le \ell$ . Let $\rho'_j$ ($j\in [b]$) denote the probability that $B_{i,j}\notin Z_i$, and note that $\rho'_j \le \rho$. Then,
	$$H_q(B_{i,j}\mid Z_i) \le h_{q,\ell}(\rho'_j) \le h_{q,\ell}(\rho).$$
	Consequently, $H_q(\tau) \le b\cdot h_{q,\ell}(\rho)+\ell$, establishing the claim.		
\end{proof}

\begin{proof}[Proof of Proposition \ref{prop:RLCListRecoveryCapacity}]
	\sloppy Let $\cP$ denote the property consisting of codes over $\F_q^n$ that are \textbf{not} $\LR \rho \ell L$.  
	Let $\tau \in \cT_{\cP}$ and let $A\in \cM_{n,\tau}$ be a matrix in $\F_q^{n\times a}$ ($a\in \N$). By Observation \ref{obs:ListRecoveryGoodProperty}, $A$ is $\RSC \rho \ell {L+1}$. Let $W$ be a $(\rho,\ell)$-recovery-clustered set of size $L+1$, contained in the column-span of $A$.  Note that $W$ must contain a linearly-independent subset $U$ of size $b:=\ceil{\log_q |W|}=\ceil{\log_q (L+1)}$. Let $D\in \F_q^{a\times b}$ such that $B:=AD$ is the matrix whose columns are the elements of $U$, and note that $U$ is also $\RC \rho \ell$. Let $\tau' = \distr{B}$. By Claim \ref{claim:HBoundForRCTpes}, $H_q(\tau')\le b\cdot h_{q,\ell}(\rho) + \ell$. Furthermore, we can express $\tau'$ as the distribution of the random vector $xD$, where $x\sim \tau$. Consequently, $\tau'\in \cI_{\tau}$. Therefore,
	$$\max_{\tau''\in \cI_\tau}\inparen{1-\frac{H_q(\tau'')}{\dim(\tau')}} \ge 1-\frac{H_q(\tau')}{b} \ge 1-h_{q,\ell}(\rho) - \frac{\ell}{b}.$$
	The claim now follows by Theorem \ref{thm:RLCThreshold}.
\end{proof}

We note that the above derivation of Proposition \ref{prop:RLCListRecoveryCapacity} could also be achieved via more standard arguments, which do not require Theorem \ref{thm:RLCThreshold}. The actual power of Theorem \ref{thm:RLCThreshold} is that it enables reductions from other random code models to the RLC model, as demonstrated in the proof of Theorems \ref{thm:MainLowBiasProperties} and \ref{thm:MainLargeDistanceProperties}, via Lemma \ref{lem:ReductionToRLC}. This sort of argument involves an application of Theorem \ref{thm:RLCThreshold} in \emph{its less intuitive direction}: rather than starting from an upper bound on $H_q(\tau)$ for some set of distributions and using Theorem \ref{thm:RLCThreshold} to obtain a lower bound on $\TRLC(\cP)$, we start from some known lower bound on $\TRLC(\cP)$ and use the theorem to get an upper bound on the entropy of certain ``bad distributions.'' The latter entropy bound is then typically used in a union-bound argument to obtain a lower bound on the threshold rate for some non-RLC model. This type of argument was used in \cite{MRRSW} to prove that \emph{Gallager LDPC codes} are as list-decodable (and list-recoverable) as RLCs.

\begin{remark}[Average versions of list-decodability and list-recoverability]
	Average-radius list-decodability is a stronger property where we demand that for every $L+1$ codewords their average distance to any center exceeds $\rho$ (as opposed to maximum distance for list-decodability). A code not being $(\rho,L)$-average-radius list-decodable is also an $(L+1)$-local, row-symmetric and scalar-invariant property.
	
	For list-recovery, we can define a stronger variant where in Definition \ref{def:LR} we allow input sets $Z_i$ such that the average size $|Z_i|$ over all $i \in [n]$ is at most $\ell$. A violation of this stronger property is also a local, row-symmetric and scalar-invariant property.
	
	The generality of our framework thus means that we can get results for these variants also automatically.  We note that certain results for list-decodability for RLCs, e.g., \cite{GHK,LiWootters}, do not extend to average-radius list-decoding (or list-recovery).
\end{remark}

\subsection{Reducing \emph{local similarity to RLC} to \emph{local similarity to RLC in expectation}}

The following lemma uses Theorem \ref{thm:RLCThreshold} to show that any random code $\cC$ is \emph{locally-similar to an RLC}, provided that $\cC$ is similar to an RLC \emph{in expectation}---a term made precise in Eq.\ \eqref{eq:GenericReductionHyptohesis}.

\begin{lemma}[A generic reduction to random linear codes]\label{lem:ReductionToRLC}
	Let $n\in \N$, $q$ a prime power and $b\in \N$ such that $\frac{n}{\log_q n}\ge \omega_{n\to \infty}\inparen{q^{2b}}$. Let $\cC\in \F_q^{n}$ be a linear code of rate $R\in [0,1]$, sampled at random from some ensemble.
	Suppose that, for every $1\le a\le b$ and every distribution $\tau$ over $\F_q^a$, we have
	\begin{equation}\label{eq:GenericReductionHyptohesis}
		\Eover{\cC}{\inabset{A\in \cM_{n,\tau} \mid A\subseteq \cC}} \le q^{(H_q(\tau)-a(1-R)+a\eps)n}\enspace,
	\end{equation}
	for some fixed $\eps > 0$.
	Then, for any row-symmetric and $b$-local property $\cP$ over $\F_q^n$ such that $R\le \TRLC(\cP)-2\eps$, it holds that
	$$\PROver{\cC}{\cC\text{ satisfies }\cP} \le q^{-n\inparen{\eps-o_{n\to\infty}(1)}}\enspace . $$
\end{lemma}
\begin{remark}
	The expected number of $\tau$-distributed matrices in an RLC of rate $R$ is $$\inabs{\cM_{n,\tau}} \cdot q^{n(R-1)b}\approx q^{n\inparen{H_q(\tau)-(1-R)b}}\enspace . $$
	Thus, Eq.\ \eqref{eq:GenericReductionHyptohesis} essentially says that the expected number of $\tau$-distributed matrices in $\cC$ is not much larger than in an RLC of similar rate.
\end{remark}

\begin{proof}
	Let $\tau\in \cT_\cP$. By Theorem \ref{thm:RLCThreshold}, there is some distribution $\tau'\in \cI_\tau$ over $\F_q^a$ (where $1\le a\le b$) such that $$\frac{H_q(\tau')}{a}\le 1-\TRLC(\cP) +o_{n\to\infty}(1)\enspace . $$ Now, by Observation \ref{obs:ImpliedMotivation}, followed by Markov's bound,
	\begin{align*}
		\PR{\exists A\in \cM_{n,\tau} ~~A\subseteq \cC} &\le \PR{\exists A\in \cM_{n,\tau'} ~~A\subseteq \cC}  \le \E{\inabset{A\in \cM_{n,\tau'}\mid A\subseteq \cC}} \\ &\le \expOver q{\inparen{H_q(\tau')-a(1-R)+a\eps}n} \\ &\le \expOver q{an\inparen{R-\TRLC(\cP)+\eps +o_{n\to\infty}(1)} } \\ 
		&\le \expOver q{-na(\eps-o_{n\to\infty}(1))}\enspace.
	\end{align*} 
	Therefore, by Fact \ref{fact:PropertyDecomposition},
	\begin{align*}
		\PR{\cC\text{ satisfies }\cP} &\le \sum_{\tau\in \cT_\cP} \PR{\exists A\in \cM_{n,\tau}~~A\subseteq \cC} \le \inabs{\cT_{\cP}}q^{-n(\eps-o_{n\to\infty}(1))} \\&\le (n+1)^{q^b}q^{-n(\eps-o_{n\to\infty}(1))} \le q^{-n(\eps-o_{n\to\infty}(1))} \enspace. \qedhere
	\end{align*}
\end{proof}

\section{Random puncturings of certain codes are locally-similar to RLCs}\label{sec:SimilarityToRLC}

In this section we prove Theorems \ref{thm:MainLowBiasProperties} and \ref{thm:MainLargeDistanceProperties}, restated below.

\MainLowBiasProperties*
\MainLargeDistanceProperties*

\subsection{Proofs outlines for Theorems \ref{thm:MainLargeDistanceProperties} and \ref{thm:MainLowBiasProperties}}	
To prove Theorems \ref{thm:MainLowBiasProperties} and \ref{thm:MainLargeDistanceProperties} using Lemma \ref{lem:ReductionToRLC}, we need to show that the code ensembles considered in these lemmas are indeed similar in expectation to an RLC. The respective  claims for each theorem are stated below as Lemmas \ref{lem:tauInLowBias} and \ref{lem:tauInLargeDistance}.

\begin{lemma}[Puncturings of low-bias codes are locally similar in expectation to random linear codes]\label{lem:tauInLowBias}
	Fix $b\in \N$ and a full-rank distribution $\tau$ over $\F_q^b$. Let $\cD\subseteq \F_q^m$ be an $\eta$-biased linear code ($\eta \ge 0$). Let $\varphi$ be a random ($m\to n$) puncturing map and let $\cC = \varphi(\cD)$. Denote $R = \frac{\log_q|\cD|}n$. Then,
	$$\Eover{\cC}{\inabset{A\in \cM_{n,\tau}\mid A\subseteq \cC}} \le \expOver q{n\cdot \inparen{H_q(\tau)-(1-R)b + \log_q\inparen{1+\eta q^b}}}$$
\end{lemma}

\begin{lemma}[Puncturings of large-distance codes are locally similar in expectation to random linear codes]\label{lem:tauInLargeDistance}
	Fix $b\in \N$ and a full-rank distribution $\tau$ over $\F_q^b$. Let $\cD\subseteq \F_q^m$ be a linear code of $\eta$-optimal distance ($\eta \ge 0$). Let $\Lambda \sim \uniform(\diagonals_n)$ and, independently, let $\varphi$ be a random $(m\to n)$ puncturing map. Denote $R = \frac {\log_q |\cD|}n$. Then,
	$$\E{\inabset{A\in \cM_{n,\tau}\mid A\subseteq \Lambda\cdot  \varphi(\cD)}} \le \expOver q{n\cdot \inparen{H_q(\tau)-(1-R)b + \log_q\inparen{1+\eta q^b}+\log_q 2}}$$
\end{lemma}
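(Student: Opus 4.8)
The plan is to bound the expected number of matrices $A \in \cM_{n,\tau}$ with $A \subseteq \Lambda \cdot \varphi(\cD)$ by linearity of expectation: write this count as a sum over candidate matrices, and for each one compute the probability that all $b$ of its columns land in the randomly-punctured-and-scaled code. A column of $\Lambda\varphi(\cD)$ is obtained by taking a codeword $u \in \cD$, applying the random puncturing $\varphi$ to get $\varphi(u) \in \F_q^n$, then scaling coordinatewise by the random diagonal $\Lambda$. Equivalently, using \cref{obs:LambdaCAsAPuncturing}, $\Lambda\varphi(\cD)$ is distributed as a random $n$-puncturing of the scalar-expanded code $\cD^*$ of \cref{def:ScalarExpansion}; I may work with whichever description is cleaner. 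The key structural fact (exactly as in the technical overview's proof of \cref{thm:WeakPuncturing}) is: if $B \in \F_q^{m\times b}$ is the matrix whose columns are a chosen $b$-tuple of codewords of $\cD$, then $\varphi(B)$ is a random matrix whose $n$ rows are drawn i.i.d.\ from $\distr{B}$, the empirical row distribution of $B$. After applying $\Lambda$, the rows are i.i.d.\ from a distribution which is the law of $\Lambda_1 \cdot (\text{random row of } B)$ where $\Lambda_1 \sim \uniform(\F_q^*)$ is a single random scalar.

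**Main computation.** So the plan is: (1) Sum over all $b$-tuples $(u_1,\dots,u_b)$ of columns, i.e.\ over matrices $B \subseteq \cD$. Since $\tau$ is full-rank, only $B$ of rank $b$ contribute; there are at most $|\cD|^b = q^{Rbn}$ such tuples — this is the source of the $-(1-R)b$-looking term once combined with the probability bound. Actually, more carefully, I'd sum over $B$ and, for a fixed $B$, bound $\PR{\varphi(B) \text{ (after scaling) } \in \cM_{n,\tau}}$. Using \cref{fact:DKL} (the type/divergence bound), for a fixed $B$ with scaled-row-distribution $\sigma_B$, the probability that the resulting $n\times b$ matrix has empirical row distribution exactly $\tau$ is at most $q^{-n \DKL{\tau}{\sigma_B}{q}}$. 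Then $\E{\#\{A\}} \le \sum_B q^{-n \DKL{\tau}{\sigma_B}{q}}$. The cardinality of $\cM_{n,\tau}$ is $q^{n H_q(\tau)(1\pm o(1))}$ by \cref{eq:M_nTauApproximation}, but actually I want to go the other direction: I want to count pairs $(B, A)$ with $A \in \cM_{n,\tau}$ and each column of $A$ in $\Lambda\varphi(\cD)$. The cleanest route: fix $A \in \cM_{n,\tau}$; the probability that column $j$ of $A$ equals $\Lambda \varphi(u)$ for the specific codeword realized is governed by how many $(u, \text{scalar pattern})$ realize a given column — but $\Lambda$ is shared across columns, which couples them. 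I think the right move mirrors the overview exactly: condition on $\Lambda$, note the columns become independent given $\Lambda$, and the probability that a fixed vector $v \in \F_q^n$ is a column of $\varphi(\cD)$ (a fixed puncturing of a scaled codeword) is $\sum_{u \in \cD} q^{-n} [\,\varphi\text{ can produce } v \text{ from } u\,]$-type expression — but since $\varphi$ is i.i.d.\ coordinate sampling, $\PR{\varphi(u) = v} = \prod_i \distr{u}(v_i) \cdot (\text{multiplicity})$, hmm. Let me instead just bound: $\PR{v \in \cols(\varphi(\cD))} \le \sum_{u\in\cD} \prod_{i=1}^n \distr{u}(v_i)$; summing the RHS over the $|\cM_{n,\tau}|$ choices of columns and over tuples, and using that $\distr{u}(0) \le 1/q + (1-1/q)\eta$ for nonzero $u$ by the $\eta$-optimal distance hypothesis (cf.\ the proof of \cref{lem:ActualVsDesignRate}), produces the factor $(1 + \eta q^b)$ inside the log when one collects the contributions from the $q^b - 1$ nonzero scalings in $\cD^*$ together with the zero codeword.

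**Assembling the exponent.** Putting it together: $\E{\#\{A\}} \le |\cM_{n,\tau}| \cdot q^b \cdot \big(\text{per-column prob}\big)^b$ — schematically, $|\cM_{n,\tau}| \le q^{n H_q(\tau)}$ contributes the $H_q(\tau)$ term; the $b$ columns each contribute a factor bounded by roughly $|\cD| \cdot \max_u \prod_i \distr{u}(v_i) \le q^{Rn} \cdot (1/q + \eta)^{\#\{i: v_i = 0\}}\cdots$, and here the $\eta$-optimal distance plus a convexity/AM-GM step converts the product over the $b$ scaled codewords into the bound $\big(q^{-(1-R)n}(1+\eta q^b)\big)^b$ up to the factor of $2$; and the $\log_q 2$ slack absorbs lower-order terms (the $n^{-O(q^b)}$ in \cref{eq:M_nTauApproximation} and the passage from $\cD$ to $\cD^*$ which multiplies lengths by $q-1$ but rates are unchanged). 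I expect the main obstacle to be handling the dependence introduced by the shared scaling matrix $\Lambda$ across the $b$ columns — the overview sidesteps this in the $\F_2$ case because $\F_2^* = \{1\}$ so there is no scaling, but here one genuinely needs either to condition on $\Lambda$ and track the conditional column distributions, or to use the $\cD^*$ reformulation where the scaling is folded into the (larger) mother code and $\varphi$ alone provides the randomness; I'd go with the latter, so that $\varphi(B^*)$ has i.i.d.\ rows from $\distr{B^*}$ and \cref{fact:DKL} applies cleanly, and the only remaining work is to relate $H_q(\tau) - \DKL{\tau}{\distr{B^*}}{q}$ to $H_q(\distr{B^*})$ and bound the latter via the $\eta$-optimal distance of $\cD$ (hence of $\cD^*$, by \cref{lem:BiasStrongerThanDistance}-type reasoning, though $\cD^*$ need not be low-bias — one uses the weight bound on each nonzero scaled codeword directly). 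The $+\log_q 2$ is the comfortable slack that lets all the $o(1)$ and rounding terms disappear.
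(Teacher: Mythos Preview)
Your final plan---pass to $\cD^*$ via \cref{obs:LambdaCAsAPuncturing}, sum over full-rank $B\subseteq\cD$, use that $\varphi^*(B^*)$ has i.i.d.\ rows from $\distr{B^*}$, and apply the KL tail bound \cref{fact:DKL}---is exactly the paper's route. But there is a genuine gap at the last step, and your diagnosis of where the $\log_q 2$ comes from is wrong.

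After the KL bound, the exponent you must control is $-\DKL{\tau}{\distr{B^*}}{q}$; the paper handles this in two moves. First (this is \cref{lem:ProbOfHavingGivenType}), a Jensen step: $\DKL{\tau}{\distr{B^*}}{q} = -H_q(\tau) - \sum_x \tau(x)\log_q \distr{B^*}(x) \ge -H_q(\tau) - \log_q \E_{x\sim \distr{B^*}}[\tau(x)]$. So the right intermediate quantity is $\E_{x\sim \distr{B^*}}[\tau(x)]$, not $H_q(\distr{B^*})$ as you wrote. Second, and this is the piece you are missing entirely, one needs $\E_{x\sim \distr{B^*}}[\tau(x)] \le 2(1+\eta q^b)\cdot q^{-b}$. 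This is \cref{lem:SmoothDistribution}\,(\ref{enum:SmoothDistributionGoodDistance}), and its proof is a nontrivial Fourier argument: the $\eta$-optimal distance of the column span of $B$ only gives the \emph{one-sided} bound $\wh\sigma(y)\le \eta$ for $y\ne 0$ (where $\sigma$ is the row distribution of $B^*$), unlike the two-sided bound $|\wh\sigma(y)|\le\eta$ one would get from low bias. To turn this one-sided Fourier bound into an $\ell_1$ bound $\sum_y|\wh\sigma(y)|\le 2(1+q^b\eta)$ (which then feeds into the Vazirani-type \cref{lem:Vazirani}), the paper splits the Fourier coefficients into positive and negative parts and uses $\sigma(0)\ge 0$ to control the negative part by the positive part; this is precisely where the factor of $2$ appears. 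So the $\log_q 2$ is \emph{not} slack for $o(1)$ terms---there are no $o(1)$ terms in this lemma at all---it is an honest loss intrinsic to having only a distance (rather than bias) hypothesis. Indeed, the analogous \cref{lem:tauInLowBias} for low-bias $\cD$ has no $\log_q 2$, and \cref{rem:ReasonForConditions} explicitly flags this as the ``bothersome $\log_q 2$ term'' that forces the extra hypotheses in \cref{thm:MainLargeDistanceProperties,thm:MainReduceToGHK}.
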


We defer the proofs of these lemmas to Section \ref{sec:LargeDistanceApproximatesRLC}, and first show that they imply Theorems \ref{thm:MainLowBiasProperties} and \ref{thm:MainLargeDistanceProperties}.

\begin{proof}[Proof of Theorem \ref{thm:MainLowBiasProperties}]
	Let $\tau$ be a distribution over $\F_q^a$ with $a\le b$. Lemma \ref{lem:tauInLowBias} yields
	\begin{align*}
		\Eover{\cC}{\inabset{A\in \cM_{n,\tau}\mid A\subseteq \cC}} &\le \expOver q{n\cdot \inparen{H_q(\tau)-(1-R)a + \log_q\inparen{1+\eta q^a}}} \\&\le \expOver q{n\cdot \inparen{H_q(\tau)-(1-R)a + \frac{\eta q^a}{\ln q}}}\\&= \expOver q{{n\cdot \inparen{H_q(\tau)-(1-R)a + \frac{\eps b}{q^{b-a}}}}}\\ &\le \expOver q{{n\cdot \inparen{H_q(\tau)-(1-R)a + a\eps}}}\enspace,
	\end{align*}
	which implies the claim by virtue of Lemma \ref{lem:ReductionToRLC}.
\end{proof}

\begin{proof}[Proof of Theorem \ref{thm:MainLargeDistanceProperties}]
	Let $\tau$ be a distribution over $\F_q^a$, where $a\le b$. By Lemma \ref{lem:tauInLargeDistance}, 
	\begin{align*}
		\Eover{\cC}{\inabset{A\in \cM_{n,\tau}\mid A\subseteq \Lambda \cC}} &\le \expOver q{n\cdot \inparen{H_q(\tau)-(1-R)a + \log_q\inparen{1+\eta q^a}+\log_q 2}} \\&\le \expOver q{n\cdot \inparen{H_q(\tau)-(1-R)a + 2\log_q2}} \\&\le \expOver q{n\cdot \inparen{H_q(\tau)-(1-R)a + \eps}} \\
		&\le \expOver q{n\cdot \inparen{H_q(\tau)-(1-R)a + a\eps}} \enspace,
	\end{align*}
	where $\Lambda \sim \uniform(\diagonals_n)$. By Lemma \ref{lem:ReductionToRLC}, $\Lambda\cC$ satisfies $\cP$ with probability at most $q^{-n(\eps-o_{n\to\infty}(1))}$. Since $\cP$ is scalar-invariant, the same holds for $\cC$.
\end{proof}

\subsection{Certain code ensembles are similar in expectation to an RLC --- Proofs of Lemmas \ref{lem:tauInLowBias} and \ref{lem:tauInLargeDistance}}
\label{sec:LargeDistanceApproximatesRLC}
Our goal in this section is to prove Lemmas \ref{lem:tauInLowBias} and \ref{lem:tauInLargeDistance}, which follow from Lemmas \ref{lem:Vazirani}, \ref{lem:SmoothDistribution} and \ref{lem:ProbOfHavingGivenType}, stated and proven below. The proof of Lemmas \ref{lem:tauInLargeDistance} and \ref{lem:tauInLowBias} is then completed at the end of this section.

\subsubsection{Empirical distributions of sub-sampled matrices	}

Lemma \ref{lem:Vazirani} is a variation of the \emph{Vazirani XOR-Lemma} (see \cite{3XOR}, and Lemma \ref{lem:VaziraniOriginal} for a special case). Given a distribution $\sigma$ over $\F_q^b$ , the XOR-Lemma relates the \emph{total-variation distance} of $\sigma$ from the uniform distribution over $\F_q^b$, to the maximum of $\inabs{\wh \sigma(y)}$ over all $y\ne 0$. In Lemma \ref{lem:Vazirani}, rather than taking a maximum, we consider the $\ell_1$ norm of $\wh \sigma$, which yields a tighter bound when only a small number of entries of $\wh \sigma$ are large in absolute value.

\begin{lemma}\label{lem:Vazirani}
	Fix a prime power $q$, and $b\in \N$. Let $\sigma$ be a distribution over $\F_q^L$ and let $f:\F_q^b\to \R$ be a non-negative function. Then,
	$$\Eover{x\sim \sigma}{f(x)} \le \inparen{\sum_{y\in \F_q^b}\inabs{\wh \sigma(y)}}\cdot \Eover{x\sim \uniform(\F_q^b)}{f(x)}\enspace . $$
\end{lemma}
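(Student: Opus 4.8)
The plan is to expand $\sigma$ using the Fourier inversion formula, exchange the (finite) order of summation, and then bound the result by the triangle inequality, exploiting the two hypotheses that $f$ is non-negative and that $\sigma$ is a genuine probability distribution. First I would write $\Eover{x\sim \sigma}{f(x)} = \sum_{x\in \F_q^b}\sigma(x) f(x)$ and substitute $\sigma(x) = q^{-b}\sum_{y\in \F_q^b}\wh\sigma(y)\,\chi_y(x)$ from \cref{def:fourier}; swapping the two sums yields $\Eover{x\sim \sigma}{f(x)} = q^{-b}\sum_{y\in \F_q^b}\wh\sigma(y)\inparen{\sum_{x\in \F_q^b} f(x)\,\chi_y(x)}$.

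Next, since $f\ge 0$ and $\sigma$ is a distribution, the left-hand side is a non-negative real, so I can pass to absolute values and apply the triangle inequality to get $\Eover{x\sim \sigma}{f(x)} \le q^{-b}\sum_{y\in \F_q^b}\inabs{\wh\sigma(y)}\cdot\inabs{\sum_{x\in \F_q^b} f(x)\,\chi_y(x)}$. For each fixed $y$ I would then use $\inabs{\chi_y(x)} = 1$ together with $f(x)\ge 0$ to conclude $\inabs{\sum_{x} f(x)\,\chi_y(x)} \le \sum_{x} f(x)\inabs{\chi_y(x)} = \sum_x f(x) = q^b\cdot\Eover{x\sim \uniform(\F_q^b)}{f(x)}$. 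Substituting this back, the $q^{-b}$ cancels the $q^b$, and the claim $\Eover{x\sim \sigma}{f(x)} \le \inparen{\sum_{y\in \F_q^b}\inabs{\wh\sigma(y)}}\cdot\Eover{x\sim \uniform(\F_q^b)}{f(x)}$ falls out.

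I do not expect any real obstacle here — the whole argument is a couple of lines of manipulation. The one point deserving care is the role of non-negativity: it is what lets me drop the absolute value on $\Eover{x\sim\sigma}{f(x)}$ before invoking the triangle inequality (so the inequality holds in the stated one-sided form, not merely for the modulus), and it is also exactly what makes the per-frequency estimate $\inabs{\sum_x f(x)\,\chi_y(x)}\le \sum_x f(x)$ available. Note also the minor typo in the statement, where ``$\sigma$ be a distribution over $\F_q^L$'' should read $\F_q^b$; I would carry $b$ throughout.
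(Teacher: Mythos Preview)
Your proposal is correct and essentially mirrors the paper's proof: both use Fourier inversion for $\sigma$, the triangle inequality, and the fact that $\inabs{\chi_y(x)}=1$ together with $f\ge 0$. The only cosmetic difference is that the paper first bounds $\sigma(x)\le q^{-b}\sum_{y}\inabs{\wh\sigma(y)}$ pointwise and then multiplies by $f(x)$ and sums, whereas you swap the sums before applying the triangle inequality; the underlying manipulation is identical.
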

\begin{proof}
	We have 
	$$\sigma(x) = q^{-b}\sum_{y\in \F_q^b} \wh \sigma(y) \omega^{-\tr(\ip xy)} \le q^{-b}\sum_{y\in \F_q^b}\inabs{\wh \sigma(y)}$$
	for all $x\in \F_q^b$. So 
	$$ \Eover{x\sim \sigma}{f(x)}  = \sum_{x \in \F_q^b} \sigma(x) f(x) \le  q^{-b} \inparen{ \sum_{y\in \F_q^b}\inabs{\wh \sigma(y)} }\cdot \inparen{ \sum_{x\in \F_q^b}f(x) }=\inparen{\sum_{y\in \F_q^b}\inabs{\wh \sigma(y)}}\cdot \Eover{x\sim \uniform(\F_q^b)}{f(x)}\enspace. \qedhere $$
\end{proof}

We next bound the expectation of an arbitrary non-negative test function over the empirical row-distribution of a given matrix $B$, assuming that the column-span of $B$ has good bias or distance. The bias based bound is an immediate application of Lemma \ref{lem:Vazirani}. The weight based bound requires an additional trick, and only yields a result relating to the row-distribution of $B^*$ rather than $B$ itself (recall Definition \ref{def:ScalarExpansion} for a reminder about $B^*$). One reason for the difference between the two cases is that under the weight-based hypothesis we have an upper bound only on the entries of the Fourier transform (Eq.\ \eqref{eq:Sigma_B^*OneSidedBound}), rather than on their absolute value. 

\begin{lemma}\label{lem:SmoothDistribution}
	Let $B\in \F_q^{m\times b}$ have $\rank B = b$, and let $f:\F_q^b\to \R$ be a non-negative function. Then, the following holds for all $\eta \ge 0$:
	\begin{enumerate}
		\item \label{enum:SmoothDistributionLowBias}
		Suppose that the column-span of $B$ (as a code in $\F_q^m$) is $\eta$-biased. Then,
		$$\Eover{x\sim \distr B}{f(x)} \le(1+q^b\eta)\cdot \Eover{x\sim \uniform(\F_q^b)}{f(x)}\enspace . $$
		\item \label{enum:SmoothDistributionGoodDistance}
		Suppose that the column-span of $B$ has $\eta$-optimal distance. Then, $$\Eover{x\sim \distr {B^*}}{f(x)}\le 2(1+q^b\eta)\cdot \Eover{x\sim \uniform(\F_q^b)}{f(x)}\enspace . $$
	\end{enumerate}
	
\end{lemma}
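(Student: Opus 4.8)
\emph{The low-bias case.} The plan is to apply \cref{lem:Vazirani} with $\sigma = \distr B$, which reduces the statement to the bound $\sum_{y \in \F_q^b} \inabs{\wh{\distr B}(y)} \le 1 + q^b \eta$. Here $\wh{\distr B}(0) = \sum_{x \in \F_q^b} \distr B(x) = 1$, and for each $y \ne 0$, equation~\eqref{eq:FourierMatrixImage} (with $a = 1$) gives $\wh{\distr B}(y) = \wh{\distr{By}}(1)$; since $\rank B = b$, the word $By$ is a \emph{nonzero} codeword of the column-span of $B$, hence $\eta$-biased, so $\inabs{\wh{\distr B}(y)} = \inabs{\wh{\distr{By}}(1)} \le \eta$ by \cref{fact:BiasAndFourier}. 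Summing over the $q^b - 1$ nonzero $y$ gives $\sum_y \inabs{\wh{\distr B}(y)} \le 1 + (q^b-1)\eta \le 1 + q^b\eta$, and \cref{lem:Vazirani} concludes.

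\emph{The large-distance case.} Now the distance hypothesis only controls the Fourier coefficients of $\distr{B^*}$ \emph{from above}, not in absolute value --- indeed, for small $q$ a coefficient can be as negative as $-\tfrac{1}{q-1}$, so applying \cref{lem:Vazirani} would be too lossy. Instead I will establish the pointwise bound $\distr{B^*}(x) \le \tfrac{2(1+q^b\eta)}{q^b}$ for every $x \in \F_q^b$; this suffices, since $f \ge 0$ then gives $\Eover{x \sim \distr{B^*}}{f(x)} = \sum_x \distr{B^*}(x) f(x) \le \tfrac{2(1+q^b\eta)}{q^b}\sum_x f(x) = 2(1+q^b\eta)\,\Eover{x \sim \uniform(\F_q^b)}{f(x)}$.

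To obtain the pointwise bound I pass to the Fourier side. From \cref{def:ScalarExpansion} one checks that $(\cdot)^*$ is linear, so $B^* y = (By)^*$, and that for any $u$ the distribution $\distr{u^*}$ places mass $1 - \weight(u)$ on $0$ and mass $\tfrac{\weight(u)}{q-1}$ on each nonzero element of $\F_q$; consequently $\wh{\distr{u^*}}(a) = 1 - \tfrac{q}{q-1}\weight(u)$ for every $a \ne 0$ (using $\sum_{z \in \F_q^*}\omega^{\tr z} = -1$), in agreement with \eqref{eq:WeightFourier}. Combining this with \eqref{eq:FourierMatrixImage} yields, for $y \ne 0$, $\wh{\distr{B^*}}(y) = \wh{\distr{(By)^*}}(1) = 1 - \tfrac{q}{q-1}\weight(By)$; since $\rank B = b$ makes $By$ a nonzero codeword, the $\eta$-optimal distance of the column-span gives $\wh{\distr{B^*}}(y) \le \eta$, while $\weight(By) \le 1$ gives $\wh{\distr{B^*}}(y) \ge -\tfrac{1}{q-1}$. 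The same formula shows $\wh{\distr{B^*}}$ is constant on every punctured line $\ell = \{c y : c \in \F_q^*\}$. Fourier inversion, grouping the nonzero frequencies by punctured lines and using $\sum_{y \in \ell}\chi_y(x) = q-1$ when $\ell \subseteq x^\perp$ and $-1$ otherwise, then gives $\distr{B^*}(x) = q^{-b}\inparen{1 - T + q P_x}$ for $x \ne 0$, where $T$ is the sum of $\wh{\distr{B^*}}$ over all punctured lines and $P_x$ the sum over those lines contained in $x^\perp$.

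It remains to bound the two terms. There are $\tfrac{q^{b-1}-1}{q-1} < q^{b-1}$ punctured lines inside $x^\perp$, each carrying value $\le \eta$, so $q P_x < q^b \eta$. For the other term, nonnegativity at the origin, $\distr{B^*}(0) = q^{-b}\inparen{1 + (q-1)T} \ge 0$, forces $T \ge -\tfrac{1}{q-1}$, hence $1 - T \le \tfrac{q}{q-1} \le 2$. Thus $\distr{B^*}(x) < q^{-b}(2 + q^b\eta) \le \tfrac{2(1+q^b\eta)}{q^b}$ for $x \ne 0$; and for $x = 0$, the bound $(q-1)T \le (q^b-1)\eta$ gives $\distr{B^*}(0) \le q^{-b}(1 + q^b\eta)$, which is also $\le \tfrac{2(1+q^b\eta)}{q^b}$. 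I expect this last step --- sidestepping the potentially large \emph{negative} Fourier coefficients of $\distr{B^*}$ by invoking nonnegativity at a single point --- to be the crux of the argument; it is also what forces the factor $2$ (already tight for a single all-ones column over $\F_2$).
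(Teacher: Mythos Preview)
Your proof is correct. Part~1 matches the paper exactly.

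For Part~2 you take a genuinely different route. The paper \emph{does} apply \cref{lem:Vazirani}: it observes that $\sigma := \distr{B^*}$ satisfies $\wh\sigma(y) = 1 - \tfrac{q}{q-1}\weight(By) \le \eta$ for $y \ne 0$, splits $\F_q^b$ into $P = \{y : \wh\sigma(y) \ge 0\}$ and $N = \F_q^b \setminus P$, bounds $\sum_{y\in P}\wh\sigma(y) \le 1 + q^b\eta$ directly, and then uses the single inequality $0 \le q^b\sigma(0) = \sum_y \wh\sigma(y)$ to deduce $\sum_{y\in N}|\wh\sigma(y)| \le 1 + q^b\eta$, yielding $\sum_y |\wh\sigma(y)| \le 2(1+q^b\eta)$. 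So your remark that \cref{lem:Vazirani} ``would be too lossy'' is only true for the naive per-coefficient bound; with the positive/negative split it works cleanly. Your approach instead bypasses \cref{lem:Vazirani} and proves the pointwise $\ell_\infty$ bound $\distr{B^*}(x) \le 2(1+q^b\eta)q^{-b}$ by Fourier inversion, exploiting the additional structure that $\wh{\distr{B^*}}$ is constant on punctured lines and separating lines inside $x^\perp$ from the rest. Both arguments rest on the same crux---nonnegativity of $\distr{B^*}(0)$ to cap the total negative Fourier mass---and both arrive at the same factor $2$. The paper's argument is a bit shorter (no line-grouping or $P_x/T$ split needed), while yours yields the slightly stronger intermediate conclusion of a uniform pointwise density bound.
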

\begin{proof}
	We first prove Item \ref{enum:SmoothDistributionLowBias}. Hence, by Lemma \ref{lem:Vazirani} it suffices to show that 
	$$
	\sum_{y\in \F_q^b}\inabs{\wh{\distr B}(y)} \le 1+q^b\eta\enspace.
	$$
	By Eq.\ \eqref{eq:FourierMatrixImage}, the above is equivalent to 
	\begin{equation}\label{eq:SmoothDistributionLowBiasNeedToShow}
		\sum_{y\in \F_q^b}\inabs{\wh{\distr {By}}(1)} \le 1+q^b\eta\enspace.
	\end{equation}
	For $y=0$ we have $\wh{\distr{By}}(1) = \wh{\distr 0}(1) = 1$. For any $y\in \F_q^b\setminus \{0\}$, since $B$ has full column-rank, $By$ is a non-zero codeword of $\cD$. By hypothesis, $By$ is $\eta$-biased, so Fact \ref{fact:BiasAndFourier} yields $\inabs{\wh{\distr {By}}(1)} \le \eta$, establishing Eq.\ \eqref{eq:SmoothDistributionLowBiasNeedToShow}.
	
	We now turn to Item \ref{enum:SmoothDistributionGoodDistance}. Let $\sigma$ denote the distribution, over $\F_q^b$, of the random variable $a\cdot x$, where $a\sim \uniform(\F_q^*)$ and $x$ is independently sampled from $\distr B$. By Lemma \ref{lem:Vazirani}, to prove Item \ref{enum:SmoothDistributionGoodDistance} it suffices to show that
	
	\begin{equation}\label{eq:SmoothDistributionGoodDistanceNeedToShow}
		\sum_{y\in \F_q^b}	\inabs{\wh{\sigma}(y)} \le 2 \inparen{1+q^b\eta}\enspace.
	\end{equation}
	
	By Eq.\ \eqref{eq:WeightFourier} followed by Eq.\ \eqref{eq:FourierMatrixImage},
	$$\weight(By) = \frac{q-1}q - \frac1q\cdot \sum_{a\in \F_q^*}\wh{\distr {By}}(a) = \frac{q-1}q - \frac1q\cdot \sum_{a\in \F_q^*}\wh{\distr B}(ay) = \frac{q-1}q\cdot \inparen{1-\wh \sigma(y)}\enspace,$$
	so $\wh\sigma(y) = 1-\frac{q}{q-1}\cdot \weight(By)$.
	
	In particular, if $y\ne 0$ then $By$ is a non-zero element in the column-span of $B$. Hence, by hypothesis,
	\begin{equation}\label{eq:Sigma_B^*OneSidedBound}
		\wh{\sigma}(y) = 1-\frac{q}{q-1}\cdot \weight(By) \le \eta\enspace.
	\end{equation}

	Let $P = \{y\in \F_q^b\mid \wh{\sigma}(y) \ge 0\}$ and $N = \F_q^b\setminus P$. By Eq.\ \eqref{eq:Sigma_B^*OneSidedBound}, 
	$$\sum_{y\in P\setminus\{0\}} \wh{\sigma}(y) \le q^b\eta\enspace . $$
	Note that $\wh{\sigma}(0) = \sum_{x\in \F_q^b}\sigma(x) = 1$, and thus,
	$$\sum_{y\in P} \wh{\sigma}(y) = 1 +  \sum_{y\in P\setminus\{0\}} \wh{\sigma}(y) \le 1+q^b\eta\enspace . $$
	Consequently,
	$$0 \le q^b\cdot \sigma(0) = \sum_{y\in \F_q^b} \wh{\sigma}(y) = \sum_{y\in P}\wh{\sigma}(y) + \sum_{y\in N}\wh{\sigma}(y) \le 1+q^b\eta+\sum_{y\in \N}\wh{\sigma}(y)$$
	and so,
	$$\sum_{y\in N}\inabs{\wh{\sigma}(y)} = -\sum_{y\in N}\wh{\sigma}(y) \le 1+ q^b\eta\enspace . $$
	Eq.\ \eqref{eq:SmoothDistributionGoodDistanceNeedToShow} now follows since
	$$\sum_{y\in \F_q^b}\inabs{\wh{\sigma}(y)} = \sum_{y\in P}\inabs{\wh{\sigma}(y)} + \sum_{y\in N}\inabs{\wh{\sigma}(y)} \le 2(1+q^b\eta) \enspace. \qedhere $$
\end{proof}

Lemma \ref{lem:ProbOfHavingGivenType} bounds the probability of a random puncturing of a given matrix $B$ having a certain empirical distribution $\tau$. Due to the concavity argument in Eq.\ \eqref{eq:KL-bound}, this lemma gives tighter bounds when $\distr B$ is close to the uniform distribution over $\F_q^b$. Notably, as Lemma \ref{lem:SmoothDistribution} shows, good bias or similar properties of the column-span of $B$ ensure that $\distr B$ is indeed close to uniform.

\begin{lemma}\label{lem:ProbOfHavingGivenType}
	Fix some distribution $\tau$ over $\F_q^b$. Let $B\in \F_q^{m\times b}$ have $\rank B = b$. Let $\varphi:\F_q^m\to \F_q^n$ be a random puncturing map.  Then,
	$$\PR{\varphi(B)\in \cM_{n,\tau}} \le \expOver{q}{n\inparen{\log_q\Eover{x\sim \distr B}{\tau(x)}+H_q(\tau)}}\enspace . $$
\end{lemma}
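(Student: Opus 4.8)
The plan is to observe that a random puncturing of $B$ produces a matrix whose $n$ rows are i.i.d.\ draws from the empirical row distribution $\distr B$, and then to invoke the standard method-of-types bound (\cref{fact:DKL}) followed by a single convexity step. Concretely, I would first unpack the random puncturing map: $\varphi$ is determined by coordinates $i_1,\dots,i_n$ drawn i.i.d.\ uniformly from $[m]$, and the $j$-th row of $\varphi(B)$ is exactly the $i_j$-th row of $B$. Hence the rows of $\varphi(B)$ are $n$ independent samples from $\distr B$, and the event $\{\varphi(B)\in\cM_{n,\tau}\}$ is precisely the event that these $n$ i.i.d.\ samples have empirical distribution equal to $\tau$. (Note that the full-rank hypothesis on $B$ plays no role in this lemma; it is used only in the companion statements \cref{lem:Vazirani,lem:SmoothDistribution}.)

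Next I would apply \cref{fact:DKL} with $\sigma=\distr B$ to get $\PR{\varphi(B)\in\cM_{n,\tau}}\le q^{-n\cdot\DKL{\tau}{\distr B}{q}}$; if $\tau$ is not $n$-feasible, or if $\supp(\tau)\not\subseteq\supp(\distr B)$, then the left-hand side is $0$ and there is nothing to prove. Expanding the divergence, $\DKL{\tau}{\distr B}{q}=\sum_{x\in\F_q^b}\tau(x)\log_q\tau(x)-\Eover{x\sim\tau}{\log_q\distr B(x)}=-H_q(\tau)-\Eover{x\sim\tau}{\log_q\distr B(x)}$, so it remains to upper bound $-\Eover{x\sim\tau}{\log_q\distr B(x)}$ by $-\log_q\Eover{x\sim\distr B}{\tau(x)}$.

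This is the one substantive step, and it is just concavity of $\log_q$: by Jensen's inequality over $x\sim\tau$,
\begin{equation}\label{eq:KL-bound}
\Eover{x\sim\tau}{\log_q\distr B(x)}\;\le\;\log_q\Eover{x\sim\tau}{\distr B(x)}\;=\;\log_q\sum_{x\in\F_q^b}\tau(x)\distr B(x)\;=\;\log_q\Eover{x\sim\distr B}{\tau(x)},
\end{equation}
where the final equality uses that $\sum_x\tau(x)\distr B(x)$ is symmetric in its two arguments. Combining \cref{eq:KL-bound} with the expansion above gives $-\DKL{\tau}{\distr B}{q}\le H_q(\tau)+\log_q\Eover{x\sim\distr B}{\tau(x)}$, which after exponentiating is exactly the claimed bound. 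I do not expect a genuine obstacle here: the proof is short and every ingredient is already available. The only points needing a moment of care are the degenerate cases in the previous paragraph (so that the $-\infty$ coming from $\log_q 0$ is handled correctly and consistently with $\PR{\cdot}=0$), and recording the symmetry identity used in \cref{eq:KL-bound}.
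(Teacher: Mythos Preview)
Your proposal is correct and follows essentially the same route as the paper: apply \cref{fact:DKL} to bound the probability by $q^{-n\cdot\DKL{\tau}{\distr B}{q}}$, expand the divergence, and then use concavity of $\log_q$ (Jensen) together with the symmetry $\Eover{x\sim\tau}{\distr B(x)}=\Eover{x\sim\distr B}{\tau(x)}$. The paper's proof is slightly terser and does not explicitly discuss the degenerate cases or the irrelevance of the full-rank hypothesis, but the argument is the same.
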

\begin{proof}
	By Fact \ref{fact:DKL},
	\begin{equation}
		\label{eq:tailbound}
		\PR{\varphi(B)\in \cM_{n,\tau}} = \PR{\distr{\varphi(B)} = \tau}\le  q^{-n\cdot \DKL \tau {\distr {B}}  q}\enspace.
	\end{equation}
	By concavity of $\log$,
	\begin{align*}	
		\DKL \tau {\distr {B}} q &= \sum_{x\in \F_q^b} \tau(x)\log_q\frac{\tau (x)}{{\distr {B}}(x)} = -H_q(\tau) - \sum_{x\in \F_q^b}\tau(x)\log_q {\distr {B}}(x)\\ &\ge -H_q(\tau)-\log_q \Eover{x\sim \distr B}{\tau(x)}\enspace.\numberthis \label{eq:KL-bound}
	\end{align*}
	The claim follows from Eq.\ \eqref{eq:tailbound} and Eq.\ \eqref{eq:KL-bound}.
\end{proof}

\subsubsection{Proofs of Lemmas \ref{lem:tauInLowBias} and \ref{lem:tauInLargeDistance}}

\begin{proof}[Proof of Lemma \ref{lem:tauInLowBias}]
	Let $\tau$ be a full-rank distribution over $\F_q^b$. Item \ref{enum:SmoothDistributionLowBias} of Lemma \ref{lem:SmoothDistribution} yields
	$$\Eover{x\sim \distr B}{\tau(x)}\le q^{-b}\inparen{1+q^b\eta}\enspace , $$
	for all $B\in \F_q^{m\times b}$ such that $\rank B = b$ and $B\subseteq \cD$. By Lemma \ref{lem:ProbOfHavingGivenType},
	$$\PR{\varphi(B)\in \cM_{n,\tau}}\le \expOver{q}{n\inparen{-b+H_q(\tau)+\log_q\inparen{1+q^b\eta}}}\enspace.$$
	The claim now follows by the union bound over the $\le q^{Rnb}$ choices of $B$.
\end{proof}

\begin{proof}[Proof of Lemma \ref{lem:tauInLargeDistance}]
	By Observation \ref{obs:LambdaCAsAPuncturing}, 
	$\Lambda\cdot \varphi(\cD)$ is distributed identically to $\varphi^*(\cD^*)$, where $\varphi^*$ is a random $((q-1)m\to n)$ puncturing map. Thus,
	\begin{align}\E{\inabset{A\in \cM_{n,\tau}\mid A\subseteq \Lambda\cdot  \varphi(\cD)}} &= \E{\inabset{A\in \cM_{n,\tau}\mid A\subseteq \varphi^*(\cD^*)}} \nonumber\\ &\le \E{\inabset{B\in \F_q^{m\times b}\mid B\subseteq \cD \text{ and } \varphi^*(B^*) \in \cM_{n,\tau}}}\enspace. \label{eq:TauInLargeDistanceNTP}
	\end{align}
	We proceed to bound the expectation of the right-hand side. 
	
	Suppose that $\varphi^*(B^*)\in \cM_{n,\tau}$. Because $\tau$ is of full-rank, we have $\rank B = \rank B^* \ge \rank \varphi^*(B^*) = b$, so $\rank B = b$. 
	
	Let $B\in \F_q^{m\times b}$ such that $\rank B = b$ and $B\subseteq \cD$. Since the column-span of $B$ is contained in $\cD$, it is of $\eta$-optimal distance. Hence, by Item \ref{enum:SmoothDistributionGoodDistance} of Lemma \ref{lem:SmoothDistribution}, 
	\begin{equation}\label{eq:TauInLargeDistanceMBound}
		\Eover{x\sim \distr {B^*}}{\tau(x)} \le \Upsilon \enspace,
	\end{equation}
	where $\Upsilon = 2q^{-b}\inparen{1+q^b\eta}$. Lemma \ref{lem:ProbOfHavingGivenType} yields
	\begin{align*}
		\E{\inabset{B\in \F_q^{m\times b}\mid B\subseteq \cD \text{ and } \varphi^*(B^*) \in \cM_{n,\tau}}} &= \sum_{\substack {B\in \F_q^{m\times b}\\ B\subseteq \cD\\ \rank B = b}} {\PROver{\Lambda,\varphi}{\varphi^*(B^*)\in \cM_{n,\tau}}}\\ 
		&\le q^{bRn}\cdot  
		\expOver q{n\inparen{\log_q \Upsilon +H_q(\tau)}}\enspace,\numberthis \label{eq:TauInLargeDistanceExpBound}
	\end{align*}
	and the claim follows from Eqs.\ \eqref{eq:TauInLargeDistanceNTP}, \eqref{eq:TauInLargeDistanceMBound} and \eqref{eq:TauInLargeDistanceExpBound}.
\end{proof}

\section{A random puncturing of a near-optimal-distance code is likely to be list-decodable}\label{sec:ProofOfMainReductToGHK}	
Our goal in this section is to prove Theorem \ref{thm:MainReduceToGHK} on the list-decodability of random puncturings of any mother code of sufficiently high distance.

\subsection{GHK list-decodability bound for random linear codes revisited}
The main result of \cite{GHK}  gives bounds on the list-size for list-decoding of RLCs up to capacity. Here, we go deeper and slightly reformulate\footnote{See Remark \ref{rem:GHK} for the differences in our formulation.}
the main technical claim of that paper.

\begin{theorem}[{\cite[Thm.\ 6.1]{GHK}}]\label{thm:GHKTechnical}
	Let $q$ be a prime power and let $\rho \in (0,1-1/q)$. Then, there is a constant $K' = K'_{\rho,q} \ge 1$ such that, for all $b,L\in \N$, we have
	\begin{equation*}
		\inabset{A\in \F_q^{n\times b }\mid A\text{ is $(\rho,L+1)$-span-clustered}} \le  q^{(b h_q(\rho)-4)\cdot n}
	\end{equation*}
	whenever $L \ge K'\cdot b$ and $n$ is large enough, and 
	\begin{equation}\label{eq:cXLargeEllBound}
		\inabset{A\in \F_q^{n\times b }\mid A\text{ is $(\rho,L+1)$-span-clustered}} \le  q^{(b h_q(\rho)+1)\cdot n}
	\end{equation}
	in general.
	
	Furthermore, one can take 
	\begin{equation}\label{eq:GHKCAsymptotics}
		K' \le \exp{O\inparen{\frac {(\log_2 q)^2}{\min\inset{(1-1/q-\rho)^2,\rho}}}}\enspace.
	\end{equation}
\end{theorem}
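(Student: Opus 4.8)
The plan is to pass from counting matrices to maximizing an entropy over empirical ``types'', dispose of the easy (``in general'') bound by subadditivity of entropy, and then invoke the substantial part of \cite[Thm.\ 6.1]{GHK} for the refined bound --- this last step being the main obstacle.

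\textbf{Step 1: reduction to an entropy bound over types.} Fix a rank-$b$ matrix $A\in\F_q^{n\times b}$ that is $(\rho,L+1)$-span-clustered. As in the discussion following \cref{def:LD} we restrict attention to those $A$ arising as a maximal linearly independent subset of an $(L+1)$-element $\rho$-clustered set; accordingly fix a center $z\in\F_q^n$ and $L+1$ distinct $y_1,\dots,y_{L+1}\in\F_q^b$ spanning $\F_q^b$ with $\weight(Ay_j-z)\le\rho$ for all $j$. Let $\tau$ be the empirical row distribution over $\F_q^{b+1}$ of the augmented matrix $[A\mid z]$, and write $(x,s)\sim\tau$ with $x\in\F_q^b$, $s\in\F_q$; then $\weight(Ay_j-z)=\PROver{(x,s)\sim\tau}{\ip{x}{y_j}\ne s}$. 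Call a distribution $\tau$ on $\F_q^{b+1}$ \emph{admissible} if its $\F_q^b$-marginal is a full-rank distribution and there exist $L+1$ distinct spanning $y_1,\dots,y_{L+1}\in\F_q^b$ with $\PROver{(x,s)\sim\tau}{\ip{x}{y_j}\ne s}\le\rho$ for all $j$. The number of $(A,z)$ inducing a fixed $n$-feasible type $\tau$ is $|\cM_{n,\tau}|\le q^{nH_q(\tau)}$ by \eqref{eq:M_nTauApproximation}, and by \cref{fact:NumberOfTypes} there are at most $(n+1)^{q^{b+1}}$ feasible types, so
\[
\inabset{A\in\F_q^{n\times b}\mid A\text{ is }(\rho,L+1)\text{-span-clustered}}\ \le\ (n+1)^{q^{b+1}}\cdot\max\bigl\{\,q^{nH_q(\tau)}\ :\ \tau\text{ admissible}\,\bigr\}.
\]
The prefactor is $q^{o(n)}$ once $n$ is large relative to $q,b$ (the source of the mild ``$n$ large enough'' caveat), so everything reduces to upper-bounding $H_q(\tau)$ over admissible $\tau$.

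\textbf{Step 2: the general bound $H_q(\tau)\le b\,h_q(\rho)+1$.} From the spanning family pick $b$ linearly independent directions $y_{i_1},\dots,y_{i_b}$, set $x'_\ell:=\ip{x}{y_{i_\ell}}$, and note that $(x,s)\mapsto(x',s)$ is a bijection of $\F_q^{b+1}$, so $H_q(\tau)=H_q(x'_1,\dots,x'_b,s)$, and likewise $(x'_1-s,\dots,x'_b-s,s)$ determines this tuple. Hence, by the chain rule, subadditivity of entropy, and the concavity and monotonicity of $h_q$ together with $\rho<1-\tfrac1q$ and $\EE_{s}\PR{x'_\ell\ne s\mid s}=\PR{x'_\ell\ne s}\le\rho$,
\[
H_q(\tau)=H_q(s)+H_q\bigl(x'_1-s,\dots,x'_b-s\mid s\bigr)\ \le\ 1+\sum_{\ell=1}^{b}H_q\bigl(x'_\ell-s\mid s\bigr)\ \le\ 1+b\,h_q(\rho).
\]
With Step 1 this yields $\inabset{A\in\F_q^{n\times b}\mid A\text{ is }(\rho,L+1)\text{-span-clustered}}\le q^{(b\,h_q(\rho)+1+o(1))n}$, matching the stated ``in general'' bound.

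\textbf{Step 3: the refined bound when $L+1\ge K'b$ --- the main obstacle.} Here one must use that an admissible $\tau$ satisfies not $b$ but $L+1\gg b$ agreement constraints. The slack of $\approx 1$ in Step 2 comes entirely from $H_q(s)$ and from the residuals $x'_\ell-s$ being close to independent, each with the extremal $\F_q$-distribution (mass $1-\rho$ on $0$) of entropy $h_q(\rho)$. But if $\tau$ is that close to extremal, then for a direction $y=\sum_\ell\alpha_\ell y_{i_\ell}$ with $\sum_\ell\alpha_\ell\ne1$ the residual $\ip{x}{y}-s=\sum_\ell\alpha_\ell(x'_\ell-s)+(\sum_\ell\alpha_\ell-1)s$ is a nondegenerate linear combination, which vanishes with probability strictly below $1-\rho$ unless the combination has a very restricted form; since the $y_j$'s lie in a $b$-dimensional affine family of which only $O(q^{b-1})$ members are ``degenerate'', once $L+1$ exceeds a threshold --- growing like $q^{b-1}$ times a factor that degrades as $\rho\to0$ or $\rho\to1-\tfrac1q$ --- admissibility becomes impossible, forcing $H_q(\tau)\le b\,h_q(\rho)-5$ (so that, after absorbing the $q^{o(n)}$ prefactor, one obtains $q^{(b\,h_q(\rho)-4)n}$ for $n$ large). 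Quantifying this --- precisely, bounding how severely an entropy deficit $b\,h_q(\rho)-H_q(\tau)$ limits the number of directions $y$ with $\PR{\ip{x}{y}=s}\ge1-\rho$, via an anticoncentration / Fourier-smoothing estimate for the conditional laws of the residuals --- is exactly the content of \cite[Thm.\ 6.1]{GHK}, and is what produces the bound \eqref{eq:GHKCAsymptotics} on $K'=K'_{\rho,q}$. I would follow that argument (equivalently, invoke \cite{GHK} given the reformulation in Step 1); reproving the anticoncentration estimate --- especially handling the extreme values of $\rho$, where $K'$ blows up --- is the hard part.
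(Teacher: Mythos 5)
The paper does not actually prove this statement: it is imported wholesale from \cite{GHK}, with \cref{rem:GHK} only justifying the cosmetic reformulation (matrix notation, reduction to an arbitrary center, and the provenance of the naive bound \eqref{eq:cXLargeEllBound} and of \eqref{eq:GHKCAsymptotics}). Your proposal matches this treatment: Step 1 is the same type-counting reformulation, Step 2 correctly derives the easy $+1$ bound by subadditivity (which \cite{GHK} also obtains as ``a rather naive bound''), and for the substantive $L\ge K'b$ regime you explicitly invoke \cite[Thm.\ 6.1]{GHK}, exactly as the paper does. One caution on your Step-3 intuition only: counting ``degenerate'' directions among $O(q^{b-1})$ candidates would yield a list-size threshold exponential in $b$ (essentially the trivial Zyablov--Pinsker bound), whereas the whole point of the $2$-increasing-chain argument in \cite{GHK} is that $K'_{\rho,q}$ is independent of $b$; since you defer to the citation for that part, this does not affect correctness, but the heuristic should not be mistaken for the actual mechanism.
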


\begin{remark}
	\label{rem:GHK}
	There are several differences between our formulation of the theorem and the one that appears in \cite{GHK}. We list and justify them here. 
	\begin{enumerate} 
		\itemsep=0ex
		\item[(i)] The random vectors $X_1,\dotsc, X_\ell$ from the original formulation have become the columns of the matrix $A$, and we changed the name $\ell$ to $b$.
		\item[(ii)] The original statement of \cite[Thm.\ 6.1 ]{GHK} only deals with matrices whose span contains a large set clustered around $0$. In our statement we already apply the reduction to a ball with arbitrary center, which appears in \cite[Thm.\ 2.1]{GHK}.
		\item[(iii)] Eq.\ \eqref{eq:cXLargeEllBound} is a rather naive bound, originally derived as part of the proof of \cite[Thm.\ 2.1]{GHK}.
		\item[(iv)] The asymptotic statement about $K'_{\rho,q}$ comes from inspecting the proof in \cite{GHK}. Specifically, in the notation of that paper, \cite[Lem.\ 6.3]{GHK} yields a $2$-increasing chain of length $d = \Omega(\log_q K')$ whenever $L \ge K'\cdot b$. The exponent in the $q$-ary analog of \cite[Lem 4.1]{GHK} satisfies $\delta_p = \Theta\inparen{\tfrac{\min\inset{\rho,\inparen{1-\frac 1q-\rho}^2}}{\log_2 q}}$. Finally, the requirement in \cite[Thm.\ 6.1]{GHK} is that $K'$ be large enough so that $d\cdot \delta_p \ge \Omega(1)$.
	\end{enumerate}
\end{remark}

It will be convenient to formulate a corollary from Theorem \ref{thm:GHKTechnical} in terms of span-clustered distributions (recall Definition \ref{def:SpanClusteredType}).

	%

\begin{corollary}\label{cor:GHKEntropy}
	In the setting of Theorem \ref{thm:GHKTechnical}, every $(\rho,L+1$)-span-clustered (with regard to $n$), $n$-feasible distribution $\tau$ over $\F_q^b$ satisfies
	\begin{equation*}
		H_q(\tau) \le b\cdot \inparen{h_q(\rho)+\frac{5K'_{\rho,q}}L} - 3
	\end{equation*}
	for every $b$ and $n$ such that $\frac{n}{\log_qn}\ge \omega\inparen{q^{L+1}}$. 
\end{corollary}
\begin{proof}
	Since $\tau$ is $\SC{\rho}{L+1}$, $\cM_{n,\tau} \subseteq \inset{A\in \F_q^{n\times b }\mid A\text{ is $(\rho,L+1)$-span-clustered}}$. Thus, Eq.\ \eqref{eq:M_nTauApproximation} and Theorem \ref{thm:GHKTechnical} yield the following:
	
	\paragraph{If $L \ge K'_{\rho,q}\cdot b$:}
	$$
	H_q(\tau) \le \frac{\log_{q}\inabs{\cM_{n,\tau}}}{n} + O\inparen{\frac{q^b\cdot \log_q n}n} \le  bh_q(\rho)-4 + O\inparen{\frac{q^b\cdot \log_q n}n}
	$$
	\paragraph{If $L < K'_{\rho,q}\cdot b$:}
	\begin{align*}
		H_q(\tau) &\le \frac{\log_{q}\inabs{\cM_{n,\tau}}}{n} + O\inparen{\frac{q^b\cdot \log_q n}n} \le  bh_q(\rho)+1 +  O\inparen{\frac{q^b\cdot \log_q n}n} \\ &\le  bh_q(\rho) + \frac{5bK'_{\rho,q}}L - 4 + O\inparen{\frac{q^b\cdot \log_q n}n}\enspace.		
	\end{align*}
	The claim now follows from our assumption that $\frac{n}{\log_q n} \ge \omega\inparen{q^{L+1}}$.
\end{proof}

\subsection{Proof of Theorem \ref{thm:MainReduceToGHK}}
We now turn to proving Theorem \ref{thm:MainReduceToGHK}, restated here.
\mainReduceToGHK*

Before proving the theorem, we compare it to several known results about list-decodability of RLCs. By the List-Decoding Capacity Theorem, Theorem \ref{thm:MainReduceToGHK} achieves the optimal trade-off between $q$, $\rho$ and $R$. We thus turn to discuss the secondary trade-off, which involves the former three parameters and the \emph{list-size} $L$.  As mentioned in Section \ref{sec:ResultsGHK}, Theorem \ref{thm:MainReduceToGHK} is derived by reduction to the result of \cite{GHK} on list-decodability of RLCs. The main theorem of \cite{GHK} states that a RLC of rate $R = 1-h_q(\rho)-\frac{K'_{\rho,q}}{L}$ is with high probability is $\LD \rho L$, 
where $K'_{\rho,q} \le \exp{O\inparen{\frac {(\log q)^2}{\min\inset{(1-1/q-\rho)^2,\rho}}}}$ is proportional to the constant $K_{\rho,q}$ that appears in Theorem \ref{thm:MainReduceToGHK}. Denoting the \emph{gap-to-capacity of the rate} by $\eps = 1-h_q(\rho)-R$, \cite{GHK} shows that an RLC of rate $R$ is almost surely $\LD \rho L$ with $L \approx \frac{K'_{\rho,q}}\eps$. In Theorem \ref{thm:MainReduceToGHK}, we have $\eps = \frac{K_{\rho,q}}L$, so $L = \frac{K_{\rho,q}}\eps = O\inparen{\frac {K'_{\rho,q}}\eps}$. Thus, we can informally state Theorem \ref{thm:MainReduceToGHK} as ``A random puncturing of a code of near-optimal distance is very likely to be list-decodable up to capacity, with a similar list-size trade-off to that guaranteed by \cite{GHK} for RLCs.''

The list-size $L$ guaranteed by Theorem \ref{thm:MainReduceToGHK} inherits some desirable properties from \cite{GHK}: it is constant in terms of $n$, and has linear dependence on $\frac 1\eps$, which is tight for RLCs \cite[Thm.\ 16]{GN14}. As for the dependence on $q$ and $\rho$, we get good list-size bounds when $q$ is not too large and $\rho$ is bounded away from $0$ and $1-\frac 1q$, but, unfortunately, the constant $K_{\rho,q}$ grows exponentially as $\rho\to 1-\frac 1q$. In comparison with \cite{GHK}, other works on RLC list-decodability are more specialized, and give tighter upper bounds on the list-size in specific regimes. Notably, \cite{Wootters13} does well when  $\rho$ is large and $\eps$ is of similar magnitude to $R$, and \cite{LiWootters} gives an extremely tight upper bound (see \cite{GLMRSW}) on the list-size for every $\rho$ and $\eps$, when $q=2$.

We note that, while Theorem \ref{thm:MainReduceToGHK} only achieves the analogue of \cite{GHK} for randomly punctured codes, Theorems \ref{thm:MainLowBiasProperties} and \ref{thm:MainLargeDistanceProperties}, with their somewhat stronger hypotheses, achieve (in particular) analogues of all known \emph{and future} positive results about list-decodability of RLCs. The obstacle to concluding such a broad result solely from the hypothesis of Theorem \ref{thm:MainReduceToGHK} is discussed in Remark \ref{rem:ReasonForConditions}.

\begin{proof}[Proof of Theorem \ref{thm:MainReduceToGHK}]
	Take $K_{\rho,q} = 5K'$, where $K'_{\rho,q}$ is as in Theorem \ref{thm:GHKTechnical}. We need to show that
	$$
	\PR{\cC\text{ is not }\LD \rho L} \le q^{-\Omega(\eps n)}.
	$$
	Since being not $\LD \rho L$ is a scalar-invariant property (see Definition \ref{def:ScalarInvariant} and Section \ref{sec:ScalarExpandedCode}), it suffices to show instead that
	\begin{equation}\label{eq:MainReduceToGHKNeedToProve}
		\PR{\Lambda\cC\text{ is not }\LD \rho L} \le q^{-\Omega(\eps n)}\enspace,
	\end{equation}
	where the matrix $\Lambda$ is sampled uniformly from $\diagonals_n$.		
	
	Now, if $\Lambda\cC$ is \emph{not} $\LD \rho L$, then $\Lambda\cC$ contains some $(\rho,L+1)$-span-clustered matrix $A\in \F_q^{n\times b}$ for some $b$, $\log_q(L+1) \le b\le L+1$. Hence,
	\begin{align*}
		&\PR{\Lambda\cC\text{ is not }\LD \rho L} \\
		& \le \sum_{b=\ceil{\log_q(L+1)}}^{L+1} \PR{\exists A\in \F_q^{n \times b}\text{ s.t. }A \text{ is }\SC \rho {L+1}\text{ and }A\subseteq \Lambda\cC} \\
		& \le \sum_{b=\ceil{\log_q(L+1)}}^{L+1} \E{\inabset{A\in \F_q^{n \times b}\mid A \text{ is }\SC \rho {L+1}\text{ and }A\subseteq \Lambda\cC}}\enspace.
	\end{align*}
	%
	We can write 
	$$\inset{A\in \F_q^{n\times b}\mid A \text{ is }\SC \rho {L+1}} = \bigcup_{\tau\in T_b} \cM_{n,\tau}$$ where $T_b$ is a set of $n$-feasible distributions over $\F_q^b$.  Therefore, by Lemma \ref{lem:tauInLargeDistance} and our assumption that $\eta \le q^{-{L+1}} \le q^{-b}$, the probability that $\Lambda\cC\text{ is not }\LD \rho L$ is at most
	\begin{align*} 
		&
		\sum_{b=\ceil{\log_q(L+1)}}^{L+1} \sum_{\tau\in T_{b}} \E{\inabset{A\in \cM_{n,\tau}\mid A\subseteq \Lambda\cC}} \\ 
		& ~ \le \sum_{b=\ceil{\log_q(L+1)}}^{L+1} \sum_{\tau\in T_{b}} \expOver q{n\cdot\inparen{ H_q(\tau)-(1-R)b+\log_q(1+\eta q^b)+\log_q 2}} \\
		& ~ \le\sum_{b=\ceil{\log_q(L+1)}}^{L+1} \sum_{\tau\in T_{b}} \expOver q{n\cdot\inparen{ H_q(\tau)-(1-R)b+2}}\enspace. 
	\end{align*}
	By Corollary \ref{cor:GHKEntropy}, each term of the inner sum is at most $q^{-n}$. Therefore, by Fact \ref{fact:NumberOfTypes},
	\begin{align*}
		\PR{\Lambda\cC\text{ is not }\LD \rho L} & \le \sum_{b=\ceil{\log_q(L+1)}}^{L+1} \sum_{\tau\in T_{b}} q^{-n}\\
		& \le q^{-n}\sum_{b=1}^{L+1} (n+1)^{q^b} \le q^{-n}(L+1) (n+1)^{q^{L+1}}\enspace , \end{align*}
	and the theorem follows due to our assumption that $\frac{n}{\log_q n}\ge \omega\inparen{q^{L+1}}$.
\end{proof}

\begin{remark}[On the conditions in Theorem \ref{thm:MainReduceToGHK}, and comparison to Theorems \ref{thm:MainLowBiasProperties} and \ref{thm:MainLargeDistanceProperties}]\label{rem:ReasonForConditions}
	In the above proof of Theorem \ref{thm:MainReduceToGHK}, as in the proofs of Theorems \ref{thm:MainLowBiasProperties} and \ref{thm:MainLargeDistanceProperties}, the core of the proof is obtaining an upper bound on terms of the form $\E{\inabset{A\in \cM_{n,\tau}\mid A\subseteq \Lambda \cC}}$ for certain distributions $\tau$, where $\cC$ is a random puncturing of a mother code $\cD$. 
	
	When our assumption about $\cD$ is that of near-optimal distance, we bound this expectation via Lemma \ref{lem:tauInLargeDistance}, which includes a bothersome $\log_q 2$ term. This term needs to be bounded from above by $a\eps$. One way to overcome this term is to take $q$ large enough to make $\log_q2$ negligibly small, as we do in Theorem \ref{thm:MainLargeDistanceProperties}. 
	
	In Theorem \ref{thm:MainReduceToGHK} we handle this problem differently. Appealing to \cite{GHK} (Corollary \ref{cor:GHKEntropy}), we obtain a stronger upper bound on the expected number of $\tau$-distributed matrices in $\cC$, circumventing the need to take $q$ to be large. Corollary \ref{cor:GHKEntropy} provides us with a ``slack'' that dominates the $\log_q 2$ term whenever $a$ is small, whereas for large $a$ the $a\eps$ upper bound is not too restrictive. However, this stronger upper bound is only valid when $\tau$ is a ``bad type'' for list-decoding, meaning that Theorem \ref{thm:MainReduceToGHK} only applies to the property of list-decodability, rather than to a general family of properties as Theorems \ref{thm:MainLowBiasProperties} and \ref{thm:MainLargeDistanceProperties} do.
\end{remark}

\section{Derandomization of RLCs}\label{sec:Derandomization}
In this section we prove Theorem \ref{thm:DerandomizationAllProperties}, restated here.
\DerandomizationAllProperties*
\begin{proof}
	Fix a property $\cP\in \cK$. Fix $\eta = \frac{\eps b\ln 2}{2^b}$. Let $\cD\subseteq \F_2^m$ be an $\eta$-biased linear code of dimension $Rn$, where $m\le O(n\cdot \eta^{-c})$ for some universal $c \ge 2$. Explicit constructions of such a code $\cD$ are given in \cite{ABNNR92,Ta-Shma17}. We also assume that 
	\begin{equation}\label{eq:DerandomizationMIsLarge}
		m \ge \frac{n}{1-2^{-\frac \eps 2}}\enspace,
	\end{equation}
	noting that $\frac m n$ can be taken to be as large as desired.
	
	Sample a random increasing sequence of $n$ integers $1\le i_1< i_2<\dots< i_n\le m$ uniformly from among all such sequences. Note that such a sequence can be encoded by $$\log_2 \binom m n+O(1)\le n \inparen{\log_2\frac mn+O(1)}\le O\inparen{n\inparen{b+\log_2\frac 1\eps}}$$ random bits, whose decoding can be done in $\poly(m)$ time. 
	
	Let $\cC$ be the code defined by the random sequence $i_1,\dots, i_n$ via $\cC = \inset{\inparen{u_{i_1}\dotsc u_{i_n}}\mid u\in \cD}\subseteq \F_2^n$. Clearly, a generating matrix for $\cC$ can be obtained from that of $\cD$ in $\poly(m) = \poly(n)$ time.  Hence, to prove the theorem it suffices to show that $\cC$ satisfies Eq.\ \eqref{eq:DerandomizationALlPropertiesWant1}. 
	Let $\cC'\subseteq \F_2^n$ be a random $n$-puncturing of $\cD$. Let $T$ be the event that $\cC'$ satisfies $\cP$. Let $J$ denote the event that no coordinate of $\cD$ is sampled more than once for inclusion in $\cC'$. Note that $\PR{J} \ge \inparen{1-\frac nm}^n$. By Theorem \ref{thm:MainLowBiasProperties}, $\PR{T} \le 2^{-(\eps - o(1))n}$.
	Thus,
	$$\PR{T\mid J}\le \frac{\PR{T}}{\PR{J}}\le \expOver{2}{\inparen{-\eps-\log_2\inparen{1-\frac nm}+o(1)}n} \le 2^{-\Omega(\eps n)}\enspace,$$
	where the last inequality follows from Eq.\ \eqref{eq:DerandomizationMIsLarge}. 
	
	By row-symmetry, $\cP$ is invariant to coordinate permutations of $\cC$. Observe that a uniformly random coordinate permutation of $\cC$ yields a code distributed identically to the distribution of $\cC'$ conditioned on the event $J$. Therefore,
	\begin{equation}\label{eq:DerandomizationUpperBoundForOneProperty}
		\PR{\cC\text{ satisfies }\cP} = \PR{T\mid J} \le 2^{-\Omega(\eps n)}\enspace
	\end{equation}
	for every $\cP\in \cK$.

	It remains to show that Eq.\ \eqref{eq:DerandomizationUpperBoundForOneProperty} implies Eq.\ \eqref{eq:DerandomizationALlPropertiesWant1}. Let $\cK' = \inparen{\cP\in \cK\mid \inabs{\cT_{\cP}}=1}$ (recall Fact \ref{fact:PropertyDecomposition} for the definition of $\cT_{\cP}$). Observe that a necessary condition for the event in Eq.\ \eqref{eq:DerandomizationALlPropertiesWant1} is that $\cC$ satisfies some property in $\cK'$. Indeed, suppose that $\cC$ satisfies a property $\cP\in \cK$ and let $\tau \in \cT_\cP$ such that $\cC$ contains a matrix in $\cM_{n,\tau}$. Let $\cP'$ denote the $b$-local, row-symmetric and monotone-increasing property for which $\cT_{\cP'} = \{\tau\}$. Clearly, $\cC$ satisfies $\cP'$. 
	Since $\cP'$ implies $\cP$, we have $\TRLC(\cP') \ge \TRLC(\cP) \ge R^*$ and so $\cP' \in \cK'$.   Thus, to prove the theorem, it suffices to show that
	\begin{equation}\label{eq:DerandomizationALlPropertiesWant2}
		\PR{\cC\text{ satisfies some property }\cP'\in \cK'} \le 2^{-\Omega(\eps n)}\enspace.
	\end{equation}
	Now, by Fact \ref{fact:NumberOfTypes}, $\inabs{\cK'} \le (n+1)^{2^b} \le 2^{o(n)}$. Thus, Eq.\ \eqref{eq:DerandomizationALlPropertiesWant2} follows from Eq.\ \eqref{eq:DerandomizationUpperBoundForOneProperty} by a union bound on $\cK'$, noting that $\cK'\subseteq \cK$. 		
\end{proof}

\section{Random puncturings of low-bias codes achieve capacity versus memoryless additive noise}\label{sec:stochastic}

Here we prove Theorem \ref{thm:MainStochastic}. 
\MainStochastic*
We require the following lemma.
\begin{lemma}\label{lem:ProbInLowBiasCode}
	Let $C\subseteq\F_q^n$ be a random rate $R$ puncturing of an $\eta$-biased linear code. Fix $x \in \F_q^n\setminus \{0\}$. Then, 
	$$\PROver{\cC}{x\in \cC} \le q^{n\cdot\inparen{-1+R+(q-1)\eta}}\enspace.$$
\end{lemma}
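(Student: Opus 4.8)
The plan is a direct first‑moment argument: union‑bound over the codewords of the mother code, and use the Fourier characterisation of bias to control the relevant per‑codeword probabilities. Write $\cC=\varphi(\cD)$, where $\cD\subseteq\F_q^m$ is the $\eta$‑biased linear mother code with $\inabs{\cD}=q^{Rn}$ and $\varphi$ is the random $(m\to n)$ puncturing map determined by indices $i_1,\dots,i_n$ drawn i.i.d.\ uniformly from $[m]$. Since $x\in\cC$ precisely when $\varphi(u)=x$ for some $u\in\cD$, we get $\PR{x\in\cC}\le\sum_{u\in\cD}\PROver{\varphi}{\varphi(u)=x}$. For a fixed $u$ the events $\{u_{i_j}=x_j\}$ depend only on the respective independent indices $i_j$, so $\PROver{\varphi}{\varphi(u)=x}=\prod_{j=1}^n\PROver{i}{u_i=x_j}=\prod_{j=1}^n\distr u(x_j)$, with $\distr u$ the empirical distribution of $u$ over $\F_q$. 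The term $u=0$ contributes $\prod_{j}\distr 0(x_j)=\prod_j\ind{x_j=0}=0$ since $x\ne 0$; so only nonzero codewords of $\cD$ matter, which is exactly where the hypotheses (linearity of $\cD$, and $x\ne 0$) are used.

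The key estimate is the pointwise bound $\distr u(c)\le\tfrac1q\inparen{1+(q-1)\eta}$, valid for every $c\in\F_q$ and every $u\in\cD\setminus\{0\}$. I would get this from Fourier inversion on $\F_q$ (the $b=1$ instance of \cref{def:fourier}): $\distr u(c)=\tfrac1q\sum_{a\in\F_q}\wh{\distr u}(a)\chi_a(c)$, where $\wh{\distr u}(0)=1$ and, by \cref{fact:BiasAndFourier}, the $\eta$‑bias of $u$ gives $\inabs{\wh{\distr u}(a)}\le\eta$ for all $a\in\F_q^*$; since $\inabs{\chi_a(c)}=1$ the claim follows. Hence $\PROver{\varphi}{\varphi(u)=x}\le\inparen{\tfrac{1+(q-1)\eta}q}^n$ for each nonzero $u$, and summing over the fewer than $q^{Rn}$ such codewords yields $\PR{x\in\cC}<q^{n(R-1)}\inparen{1+(q-1)\eta}^n$. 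The lemma then follows from the elementary inequality $1+(q-1)\eta\le q^{(q-1)\eta}$, which turns the right‑hand side into $q^{n(-1+R+(q-1)\eta)}$.

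I do not expect a genuine obstacle; this is a short warm‑up lemma and every step is routine. The two points that need a little care are: (i) keeping the Fourier normalisation straight — the paper uses counting norm on $\distr u$ and expectation norm on $\wh{\distr u}$, so that $\wh{\distr u}(0)=1$ and the inversion factor is $q^{-1}$; and (ii) the final elementary estimate $\log_q(1+y)\le y$, which holds for $y\ge 0$ as soon as $\ln q\ge 1$, i.e.\ for all $q\ge 3$, while for $q=2$ one gets the same conclusion with $\eta$ replaced by $\eta/\ln 2$ — a change that is immaterial for the way the lemma is applied, since in \cref{thm:MainStochastic} one takes $\eta$ proportional to $\eps$ and it enters the exponent only additively.
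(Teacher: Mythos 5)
Your proof is correct and follows essentially the same route as the paper's: a union bound over the nonzero mother codewords, independence across the $n$ sampled coordinates, the pointwise bound $\distr u(c)\le\frac{1+(q-1)\eta}{q}$, and the count of $q^{Rn}$ codewords. If anything, your Fourier-inversion derivation of that pointwise bound for arbitrary $c\in\F_q$, and your remark on the $q=2$ case of the final inequality $1+(q-1)\eta\le q^{(q-1)\eta}$, are slightly more careful than the paper's write-up, which cites only the weight bound of \cref{lem:VectorBiasAndWeight} (directly covering just $c=0$) and elides the small-$q$ caveat.
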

\begin{proof}
	Denote the mother code by $\cD\subseteq \F_q^m$ and let $\varphi:\F_q^m\to \F_q^n$ be the puncturing map. By the union bound, 
	\begin{align*}
		\PR{x\in \cC} &\le \sum_{u\in \cD} \PR{\varphi(u) = x} = \sum_{u\in \cD\setminus \{0\}} \prod_{i=1}^n \PR{\varphi_i(u) = x_i} \\ 
		&= \sum_{u\in \cD\setminus \{0\}} \prod_{i=1}^n \frac {\inabset{j\in [m]\mid u_j = x_i}}m \\
		&\le \sum_{u\in \cD\setminus \{0\}} \prod_{i=1}^n \inparen{1-\frac{q-1}q(1-\eta)} & \text{(by Lemma \ref{lem:VectorBiasAndWeight})}\\
		&= \sum_{u\in \cD\setminus \{0\}} \inparen{\frac{1+(q-1)\eta}q}^n \\
		&\le q^{Rn}\inparen{\frac{1+(q-1)\eta}q}^n \le q^{n\cdot\inparen{-1+R+(q-1)\eta}} \ . {\hfill \qedhere}
	\end{align*}
\end{proof}

\begin{proof}[Proof of Theorem \ref{thm:MainStochastic}]
	Let $J$ denote the event that our code $\cC$ is MLDU-decodable with error probability at most $2q^{-c_\nu \eps^2 n}$  with regard to the $\nu$-memoryless additive noise channel (where $c_\nu >0$ shall be chosen later). In other words, $J$ means that  Eq.\ \eqref{eq:ProbOfCorrectDecoding} holds for all $x\in \cC$. By  linearity of $\cC$, a sufficient condition for the latter is that Eq.\ \eqref{eq:ProbOfCorrectDecoding} holds for $x=0$. By the definition of the MLDU, the codeword $0$ with noise vector $z$ is decoded correctly whenever $\nu(z) > \nu(z-x)$ for all $x\in \cC\setminus \{0\}$. Thus,
	\begin{equation}
		\text{The code $\cC$ satisfies $J$ }\quad\text{if}\quad\PROver{z\sim \nu^n}{\forall x\in \cC\setminus \{0\}~~\nu(z) > \nu(z-x)} \ge 1-2q^{-c_\nu\eps^2 n}\enspace. \label{eq:StochasticProof1}
	\end{equation}
	
	Let $z\sim \nu^n$ and let $M_z$ denote the event that $z$ belongs to an \deffont{$\frac \eps3$-typical-set}, namely, $q^{-H_q(\nu)n-\frac {\eps n}3}\le \nu(z)\le q^{-H_q(\nu)n+\frac {\eps n}3}$. It is well known (e.g., \cite{YangJui2012}) that 
	$$\PROver{z\sim \nu^n}{M_z} \ge 1 - q^{-c'_\nu\eps^2 n}\enspace,$$
	for some positive constant $c'_\nu$.
	
	Denote $E_z = \inset{x\in \F_q^n\setminus \{0\}\mid  \nu(z)\le \nu(z-x)}$. Since $\sum_{x\in \F_q^n} \nu(z-x) = \sum_{x\in \F_q^n}\nu(x) = 1$, we have $|E_z|\le \frac 1{\nu(z)}$. Now, for a fixed code $\cC$,
	\begin{align*}
		{\PROver{z\sim \nu^n}{\exists x\in \cC\setminus \{0\}~~\nu(z) \le \nu(z-x)}} &= {\PROver{z\sim \nu^n}{E_z\cap \cC \ne \emptyset}}
		\\ &\le {\PROver{z\sim \nu^n}{E_z\cap \cC \ne \emptyset\mid M_z} + \PROver{z\sim \nu^n}{\overline {M_z}}} \\ &\le {\PROver{z\sim \nu^n}{E_z\cap \cC \ne \emptyset\mid M_z} + q^{-c'_\nu\eps^2 n}}\enspace.\numberthis\label{eq:StochasticProof2}
	\end{align*}
	Let $\eta = \frac{\eps}{3(q-1)}$. By Lemma \ref{lem:ProbInLowBiasCode}, for any $z$ such that $M_z$ holds, we have
	\begin{align*}
		\PROver{\cC}{E_z\cap \cC \ne \emptyset} &\le \sum_{x\in E_z}\PROver{\cC}{x\in \cC} \\&\le \sum_{x\in E_z} q^{n\cdot\inparen{-1+R+(q-1)\eta}} \le |E_z|\cdot q^{n\cdot\inparen{-1+R+(q-1)\eta}} \\&\le \frac{1}{\nu(z)}\cdot q^{n\cdot\inparen{-1+R+(q-1)\eta}} \\ &\le q^{n\cdot\inparen{-1+R+(q-1)\eta+H_q(\nu)+\frac \eps 3}}\\ &\le q^{-\frac{\eps n}3}\enspace.\numberthis\label{eq:StochasticProof3}
	\end{align*}
	
	Now,
	\begin{align*}
		&\PROver{\cC}{\cC \text{ does not satisfy }J} \\&\le \ \PROver{\cC}{\PROver{z\sim \nu^n}{\forall x\in \cC\setminus \{0\}~~\nu(z) > \nu(z-x)}< 1-2q^{-c_\nu\eps^2 n}}  & (\text{by Eq.\ \eqref{eq:StochasticProof1}} 
		\\&= \ \PROver{\cC}{\PROver{z\sim \nu^n}{\exists x\in \cC\setminus \{0\}~~\nu(z) \le \nu(z-x)} \ge 2q^{-c_\nu\eps^2 n}}\\
		&\le \  \PROver{\cC}{\PROver{z\sim \nu^n}{E_z\cap \cC\ne \emptyset\mid M_z}  \ge 2q^{-c_\nu\eps^2 n} - q^{-c'_\nu\eps^2 n}} & (\text{by Eq.\ \eqref{eq:StochasticProof2}})\\ 
		&\le \ \frac{\Eover{\cC}{\PROver{z\sim \nu^n}{E_z\cap \cC\ne \emptyset\mid M_z}}}{2q^{-c_\nu\eps^2 n} - q^{-c'_\nu\eps^2 n}} & (\text{by Markov's inequality})\\
		&\le \ \frac{q^{-\frac{\eps n}3}}{2q^{-c_\nu\eps^2 n} - q^{-c'_\nu\eps^2 n}}\enspace. &(\text{by Eq.\ \eqref{eq:StochasticProof3}})
	\end{align*}
	Taking $c_\nu = \min\inset{c_{\nu'}, \frac 14}$ makes the right-hand side $q^{-\Omega_\nu(\eps n)}$, finishing the proof.
\end{proof}

\section*{Acknowledgments}
The authors thank Jo\~ao Ribeiro for pointing out an improvement to Theorem \ref{thm:DerandomizationAllProperties}, and to Zeyu Guo for finding an error in a previous version of this paper.

\bibliographystyle{amsplain}

\newcommand{\etalchar}[1]{$^{#1}$}

\appendix
\section {A direct characterization of span-clustered distribution}\label{appendix:SpanClusteredTypes}

In Definition \ref{def:SpanClusteredType}, we defined a span-clustered distribution as the empirical distribution of a span-clustered matrix. Here, we give an alternative definition that does not rely on the notion of a span-clustered matrix.
\begin{lemma}
    An $n$-feasible full-rank distribution $\tau$ over $\F_q^{b}$ is $\SC \rho{L+1}$ if and only if there exists a matrix $B \in \F_q^{b\times (L+1)}$ with distinct rows and an $n$-feasible distribution $\sigma$ over $\F_q^{b}\times \F_q$, such that a random pair $(x,z)$ sampled from $\sigma$ ($x\in \F_q^b$, $z\in \F_q$) satisfies the following.
    \begin{enumerate}
        \item The distribution of $x$ is $\tau$.
        \item For each $1\le i\le L+1$ it holds that $\PR{(x B)_i \ne z} \le \rho$. Here, we think of $x$ as a row vector.
    \end{enumerate}
    
\end{lemma}	
\begin{proof}
    \sloppy
    Suppose that $\tau$ is span-clustered according to Definition \ref{def:SpanClusteredType}. Hence, there exists a $\SC \rho{L+1}$ matrix $A\in \F_q^{n\times b}$ with $\tau = \distr A$. Clearly, $\tau $ is $n$-feasible and of full rank, the latter since $\rank A = b$. Since $A$ is $\SC \rho{L+1}$, there is a matrix $B\in \F_q^{b\times (L+1)}$ with distinct columns, and a vector $y\in \F_q^n$ such that $\weight((AB)_i - y) \le \rho$ for all $1\le i\le L+1$. 
    
    We take $\sigma$ to be the distribution, over $\F_q^{b}\times \F_q$, of the random pair $(A_j,y_j)$, where $j$ is sampled uniformly at random from $\{1,\dots, n\}$. Clearly, $\sigma$ is $n$-feasible, and $A_i$ is distributed $\tau$. Finally, for $1\le i\le L+1$,
    $$\PROver{(x,z)\sim \sigma}{(xB)_i\ne z} = \PROver{j\sim \uniform([n])}{\inparen{A_jB}_i \ne y_j} = \weight((AB)_i-y) \le \rho\enspace,$$
    showing that $\sigma$ satisfies the requirements of the lemma.
    
    In the other direction, suppose that there exists a distribution $\sigma$ as in the lemma statement, and let $A\in \F_q^{n\times b}$ such that $\distr A = \tau$. A straightforward reversal of the above argument shows that $A$ is $\SC{\rho}{L+1}$, and thus $\tau$ is $\SC{\rho}{L+1}$ by Definition \ref{def:SpanClusteredType}.
\end{proof}


\begin{dajauthors}
\begin{authorinfo}[vg]
  Venkatesan Guruswami\\
  University of California, Brekeley\\
  Berkeley, California, USA\\
  venkatg\imageat{}berkeley\imagedot{}edu \\
  \url{https://people.eecs.berkeley.edu/~venkatg/}
\end{authorinfo}
\begin{authorinfo}[jm]
  Jonathana Mosheiff\\
  Ben-Gurion University\\
  Be'er Sheva, Israel\\
  mosheiff\imageat{}bgu\imagedot{}ac\imagedot{}il \\
  \url{https://cs.bgu.ac.il/~mosheiff}
\end{authorinfo}
\end{dajauthors}

\end{document}